\newcommand{\mailto}[1]{\thanks{\href{mailto:#1}{#1}}}
\newcommand{\hnabla}{\overset{\mathrm{h}}{\nabla}{}}
\newcommand{\vnabla}{\overset{\mathrm{v}}{\nabla}{}}
\numberwithin{equation}{section}
\newtheorem{lemma}{Lemma}[section]
\newtheorem{definition}{Definition}[section]
\newtheorem{proposition}{Proposition}[section]
\newtheorem{theorem}{Theorem}[section]
\newtheorem{remark}{Remark}[section]
\newtheorem{example}{Example}[section]
\newcommand{\J}[2]{\tensor[^{#1}]{\mathbb{J}}{^{#2}}}
\renewcommand{\:}{\colon}
\newcounter{mnotecount}[subsection]
\title{Pseudodifferential Weyl calculus on vector bundles}
\author[1]{Lars Andersson \mailto{larand@kth.se}}
\affil[1]{Department of Mathematics, Royal Institute of Technology, Lindstedtsvägen 25, 114 28 Stockholm, Sweden}
\author[2]{Benjamin Moser \mailto{benjamin.moser@univie.ac.at}}
\affil[2]{University of Vienna, Faculty of Physics, Boltzmanngasse~5, 1090 Vienna, Austria}
\author[2]{Marius A. Oancea \mailto{marius.oancea@univie.ac.at}}
\author[3,4]{Claudio F. Paganini \mailto{claudio.paganini@ur.de}}
\affil[3]{Faculty for Mathematics, University of Regensburg, 93040 Regensburg, Germany}
 \affil[4]{Faculty for Mathematics, TU Chemnitz, 09111 Chemnitz, Germany}
\author[5,6]{Gabriel Schmid \mailto{gabriel.schmid@edu.unige.it}}
\affil[5]{Department of Mathematics, University of Genoa, 16146 Genoa, Italy}
\affil[6]{INFN-Section Genoa and INdAM-GNFM, 16146 Genoa, Italy}
\date{}
\begin{document}

\maketitle
 
\begin{abstract}
    We develop a geometric framework for Weyl quantization on pseudo-Riemannian manifolds, in which pseudodifferential operators act on sections of vector bundles equipped with pseudo-Hermitian metrics and compatible connections. We construct the associated star product and compute its semiclassical expansion up to third order in the semiclassical parameter. A central feature of our approach is a correspondence, modulo smoothing remainders, between formally self-adjoint symbols and formally self-adjoint operators, extending known results from flat space to curved geometries. In addition, we analyze the Moyal equation satisfied by the Wigner function in this setting and provide explicit computations of Weyl symbols for several physically significant operators, including the Dirac, Maxwell, linearized Yang-Mills, and linearized Einstein operators. Our results lay the foundation for future developments in quantum field theory on curved spacetimes, semiclassical analysis, and chiral kinetic theory.
\end{abstract}

\paragraph*{\textbf{Keywords:}} Pseudodifferential calculus, Weyl quantization, star product, Moyal equation.
\paragraph*{\textbf{MSC 2020:}} Primary: 58J40; Secondary: 35S05, 53D55, 81Q20.

\section{Introduction}

In physics, quantization represents a systematic procedure that promotes a classical mechanical system described by phase space quantities to a quantum mechanical system described by operators acting on Hilbert spaces. From the mathematical point of view, this can be understood in terms of pseudodifferential calculus, which relates operators to their symbols.

Hermann Weyl first introduced what is now known as \emph{Weyl quantization} in his work on group theory and quantum mechanics \cite{Weyl,WeylBook}, proposing a map from symmetrized classical observables in phase space to quantum operators. Around the same time, Wigner \cite{Wigner} defined a quasiprobability distribution function on phase space, now known as the \textit{Wigner function}, in his analysis of quantum corrections to classical statistical mechanics. Although not originally framed as such, the Wigner transform can be viewed, formally and up to normalization conventions, as the inverse of Weyl quantization. The connection between Weyl's quantization and Wigner's transformation was only fully recognized in the late 1940s through the work of Groenewold \cite{Groenewold} and Moyal \cite{Moyal_1949}. This quantization scheme has since been referred to by various combinations of the aforementioned authors' names; in this work, we will refer to it simply as \emph{Weyl quantization}.

Beyond its significance as a quantization scheme in theoretical physics, Weyl quantization is also of mathematical interest within the framework of \emph{pseudodifferential calculus}. In this context, Weyl quantization is a mapping that assigns to a given \textit{symbol} $a(x,p)\in C^{\infty}(T^{\ast}\mathbb{R}^{d})$, assumed to behave as a homogeneous function of $p$ in the asymptotic limit $\vert x\vert, \vert p\vert\to\infty$, a linear and continuous operator $\hat{A}\:C^{\infty}_{\mathrm{c}}(\mathbb{R}^{d})\to C^{\infty}(\mathbb{R}^{d})$ defined by
\begin{align}\label{eq:Weyl}
    \hat{A}\psi(x):=\frac{1}{(2\pi)^d}\int_{\mathbb{R}^{d}}\int_{\mathbb{R}^{d}}a\bigg(\frac{x+y}{2},p\bigg)\psi(y)e^{i p \cdot (x-y)}\,d^dy\,d^dp\qquad\forall\psi\in C_{\mathrm{c}}^{\infty}(\mathbb{R}^{d}) .
\end{align}
One of the essential features of this quantization scheme is the fact that real symbols induce (formally) self-adjoint operators and that the quantization is covariant under linear symplectic transformations of phase space $T^{\ast}\mathbb{R}^{d}\cong\mathbb{R}^{d}\times\mathbb{R}^{d}$.

Although Weyl quantization is a natural choice because of its symmetry and self-adjointness properties, it is only one among a broad family of quantization schemes. The theory of pseudodifferential operators and, more generally, \textit{Fourier integral operators}, was developed in the 1960s by Bokobza-Unterberger \cite{BokobzaUnterberger}, Kohn-Nirenberg \cite{KohnNirenberg}, Hörmander \cite{Hormander,Hormander2,HormanderFourier}, and others, generalizing earlier works on \textit{singular integral operators} (see, e.g.,~\cite{Singular,Mikhlin,Calderon1,Calderon2}), as a tool to study (elliptic) partial differential equations. Since then, pseudodifferential calculus has become an indispensable tool for studying linear PDEs. For instance, even the simplest operations one might want to perform on elliptic differential operators (e.g., taking the inverse, square root, etc.) typically yield operators outside the class of differential operators. Pseudodifferential operators provide a generalization of differential operators, and under some reasonable assumptions on the class of symbols, the operations mentioned above can be performed within this class. Furthermore, the correspondence between operators and their symbols allows many computations and constructions to be carried out purely at the algebraic level of the symbols. We refer to \cite{Shubin,Sjorstrand,HormanderIII,Hintz} for a detailed discussion of that subject. 

In the context of pseudodifferential calculus, one often considers a quantization of the type 
\begin{align}\label{eq:KohnNirenberg}
    \hat{A}\psi(x):=\frac{1}{(2\pi)^d}\int_{\mathbb{R}^{d}}\int_{\mathbb{R}^{d}}a(x,p)\psi(y)e^{i p \cdot (x-y)}\,d^dy\,d^dp\qquad\forall\psi\in C_{\mathrm{c}}^{\infty}(\mathbb{R}^{d}),
\end{align}
which we shall henceforth refer to as \textit{Kohn-Nirenberg quantization}. Although less symmetric, one advantage of the Kohn-Nirenberg quantization scheme is that its symbols naturally generalize the \emph{total symbols} of linear differential operators, which are linear combinations of symbols of the form $a(x,p)=a(x)p^n,\,n\in\mathbb{N}$. However, both the Weyl \eqref{eq:Weyl} and the Kohn-Nirenberg quantization \eqref{eq:KohnNirenberg} are just two possibilities among a class of \textit{$\tau$-quantizations} with symbols of the form $a(z_{\tau}(x,y),p)$ with $z_\tau(x,y) := x + \tau(y-x)$ depending on a given parameter $\tau\in [0,1]$. 

From the point of view of pseudodifferential calculus, the Weyl quantization on $\mathbb{R}^{d}$ for a general class of symbols has been extensively studied by Hörmander \cite{HormanderWeyl} and Dencker \cite{DenckerWeyl}, following earlier works of Grossmann-Loupias-Stein \cite{Grossmann}, Berezin-Shubin \cite{BerezinShubin}, and Voros \cite{Voros}; see also the expositions in \cite[Sec.~18.5]{HormanderIII} and \cite[Sec.~23]{Shubin}. For a mathematical discussion with an eye towards applications in quantum field theory, see \cite{DerezinskiGerard}. 

On Euclidean space, the quantization map establishes a canonical linear isomorphism between symbol classes and the corresponding pseudodifferential operators. Moreover, there are explicit formulas for both constructing an operator from its symbol and for recovering the symbol of a given operator. A central feature of this calculus is its covariance under coordinate changes. This local covariance, in turn, allows one to define symbol classes on manifolds with values in vector bundles. Furthermore, one can define pseudodifferential operators on manifolds acting on the sections of vector bundles by patching together local constructions using a partition of unity (see \cite[Chap.~5]{Hintz} and \cite[Sec.~4]{Shubin}). While sufficient for many applications, this approach depends, at least up to a smoothing operator, on non-canonical choices such as coordinate systems and trivializations, and thus lacks a fully intrinsic character. In the standard locally patched calculus, only the \emph{principal symbol} of a given pseudodifferential operator behaves tensorially and is globally well defined. Consequently, there is strong motivation to develop a coordinate-independent, geometrically natural theory of symbols and quantization on manifolds.

Several such approaches have been explored in the literature. Early work concentrated on the Kohn-Nirenberg quantization for scalar-valued operators, notably by Bokobza-Haggiag \cite{Bokobza1,Bokobza2}, Widom \cite{Widow1,Widow2}, and Drager \cite{Drager}. More recent developments have extended these ideas to vector bundle-valued settings, as explored by Pflaum \cite{PflaumNormal} and Sharafutdinov \cite{MR2191866,MR2186590}. A distinct line of investigation, encompassing both the scalar and vector bundle cases, as well as the broader family of $\tau$-quantizations, has been pursued by Levy \cite{Levy}. Rather than relying on connections, Levy regards the exponential map as a central object. More precisely, he considers a restricted class of manifolds, namely those admitting a \textit{(global) linearization}, a concept originally defined in \cite{Bokobza1}. These are non-compact manifolds endowed with a smooth map $\mathrm{exp} \colon TM \to M$ that mimics the exponential map of pseudo-Riemannian geometry. Examples of such manifolds are geodesically complete pseudo-Riemannian manifolds, which are \textit{globally geodesically convex}, meaning that every pair of points is connected by a unique geodesic segment. An invariant $\tau$-quantization on closed Riemannian manifolds for operators acting on $1$-forms has been developed in \cite{CapoferriCurl}. Further advancements in Weyl and $\tau$-quantizations within the scalar framework have been contributed by Underhill \cite{Underhill}, Unterberger \cite{Unterberger}, Liu-Qian \cite{LiuQian}, Pflaum \cite{PflaumWeyl}, Safarov \cite{Safarov0,Safarov,SafarovBook}, Fulling \cite{fulling} (see also the earlier works~\cite{Fulling0,Kennedy}), and, more recently, by Dereziński-Latosiński-Siemssen \cite{dls_weyl}.
One of the main characteristics of the latter two approaches is the inclusion of some power of the \textit{Van Vleck-Morette determinant} in the quantization, which has been argued to be preferable from a geometric perspective. The aim of this paper is to extend the work of Fulling and Dereziński-Latosiński-Siemssen to the setting of vector bundles. In particular, this generalization enables the treatment of a wide range of physically significant operators, including the Maxwell and linearized Yang-Mills operators, the Dirac operator, and the linearized Einstein operator, which makes it of direct interest for physical applications.

Another line of research concerns the definition of \emph{global Fourier integral operators}, generalizing the pseudodifferential operator constructions mentioned above, especially in the context of obtaining a global construction of propagators for hyperbolic equations. In particular, Laptev-Safarov-Vassiliev \cite{Laptev} (see also~\cite{SafarovVassiliev}) demonstrated that the propagator for a broad class of hyperbolic operators on compact manifolds can be represented as a \textit{global} Fourier integral operator with a complex-valued phase function (as in \cite{Melin}). Building upon this work and some applications discussed in \cite{Jakobson}, a global symbol calculus for Fourier integral operators on closed manifolds has been developed by Safarov \cite{Safarov2}. Furthermore, a geometric approach to global parametrices for the scalar wave operator on closed Riemannian manifolds, including a global invariant definition of the full symbol of the wave propagator and an algorithm for explicitly calculating its homogeneous components, was recently proposed by Capoferri-Levitin-Vassiliev \cite{Capoferri1}. This result has also been generalized to the Lorentzian setting by Capoferri-Dappiaggi-Drago \cite{Capoferri2} for the wave operator on globally hyperbolic Lorentzian manifolds with compact Cauchy hypersurfaces. Generalizations to the propagator for the Dirac operator have been obtained by Capoferri-Vassiliev \cite{Capoferri3} in the Riemannian setting and by Capoferri-Murro \cite{CapoferriMurro} in the Lorentzian setting.

The construction of an invariant and global pseudodifferential calculus has become an increasingly active area of research in recent years, as reflected by the preceding discussion. Beyond the geometric approaches mentioned so far, other significant frameworks have been developed for manifolds with additional algebraic structures or specified singular structures. Examples include the study of intrinsic pseudodifferential operators on \emph{manifolds with a Lie structure at infinity}, e.g.~\cite{Ammann,Nistor}, as well as the invariant construction of global pseudodifferential operators on \emph{Lie groups}, with early work by Strichartz \cite{Strichartz}, see e.g.~the monograph by Fischer-Ruzhansky \cite{FischerRuzhansky} on this topic and references therein. In the context of the present article, it is particularly noteworthy to mention the recent work on Weyl quantization on a large class of locally compact Lie groups developed in \cite{Mantoiu}.

Parallel to the development of pseudodifferential calculus and Weyl quantization, a different strand of quantization theory emerged in the form of deformation quantization~\cite{rieffel1993deformation}. An overview of the literature can be found in \cite{sternheimer1998deformation}, or in the more recent PhD thesis \cite{weiss2010deformation}, and applications to quantum field theory are discussed in \cite{buchholz2011warped}.
Instead of assigning operators to symbols, deformation quantization aims to endow the space of smooth functions on a Poisson manifold, such as the cotangent bundle, with a noncommutative star product that deforms the usual pointwise multiplication in a way that reflects the underlying Poisson structure. It was shown in \cite{Kontsevich2003} that any finite-dimensional Poisson manifold admits a deformation quantization. The case of deformation quantization of Hermitian vector bundles was treated in~\cite{bursztyn2000deformation}. 

It is well known~\cite{Landsman1999,fedosov1996deformation} that for the standard Poisson structure on the cotangent bundle of flat $\mathbb{R}^n$ deformation quantization and Weyl quantization agree. In \cite{fedosov2001pseudo}, the connection between deformation quantization and pseudodifferential operators was explored. Recently in~\cite{much2024deformation}, a deformation quantization for non-flat spacetimes was developed. This poses the question of whether the equivalence of these different quantization schemes holds more generally. The structure of the expansion of the star product in \cref{Prp:StarProdExp} certainly suggests agreement. Furthermore, a change in the parameter $\tau$ in our quantization scheme, as discussed in \cref{Prp:tauchange}, appears to correspond to a gauge transformation~\cite{PINZUL2008284} in the context of deformation quantization. Given Proposition 3.5 in~\cite{dls_weyl}, one would expect the balanced quantization with $\tau=\tfrac{1}{2}$ to correspond to the unique gauge choice~\cite{felder2000deformation} where the trace of the commutator vanishes to all orders. However, these considerations are beyond the scope of this paper and will be the subject of future work.

Our present work is motivated by several physical applications. For example, it is well known that Weyl quantization and Wigner functions play an important role in quantum mechanics \cite{peres1995quantum,schroeck2013quantum,doi:10.1142/5287,10.1119/1.2957889}, semiclassical analysis \cite{teufel2003adiabatic,guillemin2013semi,zworski2022semiclassical}, kinetic theory \cite{groot1980relativistic,stephanov2012chiral,HIDAKA2022103989}, optics \cite{Schleich2001,torre2005}, and plasma physics \cite{tracy2014,10.1063/1.5095076}. However, the majority of these references are concerned with physics in flat spacetime, and our results enable the extension of these methods to the setting of manifolds and vector bundles. We mention here a few concrete examples. It is well known that the use of the Wigner function can be convenient for the description of high-frequency/semiclassical wave propagation, as shown, for example, in the non-relativistic setting in \cite{pulvirenti2006}. This offers a phase space alternative to geometric optics/WKB-type approximations, which break down at caustics. Furthermore, the use of the Wigner function provides a connection between high-frequency/semiclassical wave propagation and the Vlasov equation, in direct connection with kinetic theory \cite{groot1980relativistic}. In this context, our results could be applied to the study of wave propagation on manifolds and could help clarify connections with chiral kinetic theory in curved spacetime \cite{liu2018chiral,PhysRevD.105.096019,KAMADA2022104016,Liu_2020,HIDAKA2022103989}. Another possible application is the study of gravitational spin Hall effects \cite{GSHE_reviewCQG}. This concerns the spin-dependent propagation of the energy centroids of high-frequency/semiclassical localized wave packets that carry intrinsic angular momentum or spin, such as electromagnetic \cite{Oancea_2020,O2,PhysRevD.109.064020,PhysRevD.102.084013,PhysRevD.110.064020,SHE_QM1}, Dirac \cite{rudiger,audretsch,GSHE_Dirac,SHE_Dirac}, or linearized gravitational \cite{SHE_GW,GSHE_GW,GSHE_lensing,GSHE_lensing2,Frolov_2024} wave packets. In these references, the derivation of the gravitational spin Hall effects relies on geometric
optics/WKB-type approximations, which have known drawbacks. Using the Weyl calculus developed in the present paper, the theoretical foundations of these effects could be strengthened by deriving Egorov-type theorems for the dynamics of the energy centroids, similar to the cases presented in \cite{teufel2003adiabatic,Teufel2013,DeNittis2017}.

\subsection*{Structure of the paper}
In \cref{Sec:Preliminaries}, we review the necessary geometric and analytic preliminaries on pseudo-Riemannian geometry, the geometry of vector bundles, and symbol classes for pseudodifferential operators acting on sections of vector bundles.

\cref{Sec:WeylQuant} introduces the central concept of this work: the Weyl quantization of bundle-valued symbols. We then extend this construction to the more general $\tau$-quantization framework and analyze the asymptotic relations between symbols corresponding to different values of $\tau$. In the subsequent parts, we discuss the star product of the Weyl quantization, that is, the associative product at the level of symbols corresponding to the composition of the relevant Weyl-quantized operators. Furthermore, we derive the asymptotic expansion up to the third order in the semiclassical parameter. Some technical details of the proof of this expansion are provided in \cref{Appendix:StarProd}.

With the framework in place, we discuss some properties of the Weyl quantization in \cref{Sec:PropWeylQuant}. In particular, we establish, modulo smoothing remainders, the equivalence between formal self-adjoint Weyl symbols and formal self-adjoint linear operators, generalizing earlier work on $\mathbb{R}^{d}$ and on manifolds with scalar-valued operators. We also examine the \textit{Moyal equation}, which governs the evolution of the Wigner function in this setting.

Finally, in \cref{Sec:Examples}, we derive the Weyl symbols corresponding to a general class of second-order linear differential operators that act on sections of vector bundles. As an explicit application, we compute the Weyl symbols of various physically significant operators, such as the Maxwell and linearized Yang-Mills operators, the Dirac operator, and the linearized Einstein operator.

\subsection*{Acknowledgments} We are grateful for the financial support and hospitality of the Erwin Schrödinger International Institute for Mathematics and Physics, where the main part of this work was carried out during the Research in Teams program \textit{Weyl Pseudodifferential Calculus with Applications}. We also want to express our gratitude to Jérémie Joudioux for many helpful discussions and earlier work on the project, to Daniel Siemssen for sharing with us parts of the Mathematica code that were used in \cite{dls_weyl}, to Albert Much for insightful discussions on topics related to this project, and to Matteo Capoferri and Simone Murro for valuable comments and suggestions on the manuscript. The work of L.A. was funded in part by NSFC under grant W2431012. This research was funded in whole or in part by the Austrian Science Fund (FWF) \href{https://doi.org/10.55776/PIN9589124}{10.55776/PIN9589124}. This work has been produced within the activities of the INdAM-GNFM. G.S. gratefully acknowledges support from a PhD scholarship of the University of Genoa and partial support from the MIUR Excellence Department Project 2023–2027, awarded to the Department of Mathematics, University of Genoa.

\subsection*{Data availability and conflicts of interest}
The manuscript does not have associated data. The authors have no conflicts of interest to declare that are relevant to the content of this article.

\section{Geometric and analytic preliminaries}
\label{Sec:Preliminaries}
In this section, we recall the necessary geometric and analytic preliminaries on differential geometry, pseudo-Riemannian geometry, vector bundles with connections, and symbol classes for pseudodifferential calculus. This will help us fix the notation and conventions adopted in this article.

\subsection{Geometry of pseudo-Riemannian manifolds}
Throughout this article, let $(M,g)$ be a real, oriented, smooth pseudo-Riemannian manifold of finite dimension $d\in\mathbb{N}$, equipped with its Levi-Civita connection $\nabla$. We denote by $C^\infty(M)$ the ring of smooth complex-valued scalar functions on $M$. For a given vector bundle $E\xrightarrow{} M$ over $M$, we denote by $\Gamma(E)$ the $C^{\infty}(M)$-module of smooth sections and by $\Gamma_{\mathrm{c}}(E)$ the subspace of sections with compact support. Furthermore, we use the notation
\begin{align}
    \mathfrak{X}(M):=\Gamma(TM),\qquad \Omega^k(M):=\Gamma\bigg({\bigwedge}^{k}T^{\ast}M\bigg),\qquad\Omega^{k}(M,E):=\Gamma\bigg(E\otimes{\bigwedge}^{k}T^{\ast}M\bigg),
\end{align}
for the $C^{\infty}(M)$-modules of smooth vector fields, differential $k$-forms, and $E$-valued $k$-forms on $M$, respectively. The bundle projection of the cotangent bundle is denoted by $\pi\colon T^{\ast}M\to M$. For the Riemann and Ricci curvature tensors of $(M,g)$, we adopt the sign and index conventions
\begin{align} \label{eq:commutator_covd}
   \tensor{R}{_\alpha_\beta}:=\tensor{R}{^\lambda_\alpha_\lambda_\beta},\qquad  (\nabla_{\mu}\nabla_{\nu}-\nabla_{\nu}\nabla_{\mu})v^{\rho}=:\tensor{R}{^\rho_\sigma_\mu_\nu}v^{\sigma}\qquad\qquad \forall v\in\mathfrak{X}(M),
\end{align}
which are the same conventions used, for example, in the textbooks of Misner-Thorne-Wheeler \cite{MisnerThorneWheeler} and Wald \cite{Wald}.

Next, we review the definitions of \emph{Synge's world function} and the \emph{Van Vleck-Morette determinant}. For more details, refer to \cite{Poisson,visser}. A subset $C\subset M$ is called \textit{(geodesically) convex} if it is a normal neighborhood of all its points or, in other words, if there exists a unique geodesic segment contained entirely in $C$ connecting every pair of points $x,y\in C$. On such a set, we use the notation
\begin{align}
    y-x:=\mathrm{exp}_{x}^{-1}(y)\in T_{x}M ,
\end{align}
which is the tangent vector at $x$ of the unique geodesic segment from $x$ to $y$ that is contained entirely in $C$.

\begin{definition}[Synge's world function]
Let $(M,g)$ be a smooth pseudo-Riemannian manifold and $C$ a geodesically convex open set. Then, the map $\sigma_{g}:C\times C\to\mathbb{R}$ defined by
\begin{align}
	\sigma_{g}(x,y):=\frac{1}{2}g_{x}(y-x,y-x),
\end{align}
for all $x,y\in C$ is called {\normalfont{Synge's world function}}.
\end{definition}

The unique geodesic segment in $C$ connecting $x\in C$ with $y\in C$ is given by $\gamma(t)=\mathrm{exp}_{x}[t(y-x)]$, and since this defines a smooth map on $[0,1]\times C\times C$, we conclude that $\sigma_{g}$ defines a smooth function on $C\times C$. Furthermore, if $(M,g)$ is a Riemannian manifold or a Lorentzian manifold with signature $(-,+,\dots,+)$, then $\sigma_{g}$ is related to the signed squared geodesic length via
\begin{align}
    \sigma_{g}(x,y) = \frac{1}{2} s_{x,y} \bigg[ \int_{0}^{1} \sqrt{\vert g_{\gamma(t)}(\dot{\gamma}(t),\dot{\gamma}(t))\vert}\,dt \bigg]^{2} ,
\end{align}
where $s_{x,y}=-1$ if $\gamma$ is timelike and $s_{x,y}=1$ otherwise. In other words, $\sigma_{g}$ is half the signed squared geodesic length from $x$ to $y$.

If $(M,g)$ is \textit{globally} geodesically convex, which means that any pair of points of $M$ can be connected by a unique geodesic segment, then $\sigma_{g}$ is well defined globally on $M\times M$. This is, for example, the case for $M=\mathbb{R}^{d}$ with its flat metric or, more generally, for any complete, simply connected Riemannian manifold with nonpositive sectional curvature, as a consequence of the Cartan-Hadamard theorem \cite[Thm.~3.8]{Lang}. In general, the world function $\sigma_{g}$ depends on the choice of convex set $C$ and therefore need not be globally defined. However, one can always find an open neighborhood $\mathcal{N}\subset M\times M$ of the diagonal $\{(x,x)\in M\times M\mid x\in M\}$ on which $\sigma_{g}$ is well defined. Furthermore, for any other such neighborhood $\mathcal{N}^{\prime}$, there exists a neighborhood $\mathcal{N}^{\prime\prime}\subset\mathcal{N}\cap\mathcal{N}^{\prime}$ of the diagonal on which the corresponding world functions coincide (for details, see \cite{Moretti}). 

\begin{definition}[Van Vleck-Morette Determinant]
    Let $(M,g)$ be a smooth pseudo-Riemannian manifold, and let $\mathcal{N}$ be an open neighborhood of the diagonal on which $\sigma_{g}$ is well defined. Then, the {\normalfont{Van Vleck-Morette determinant}} is defined by
    \begin{align}
        \Delta(x,y):= \epsilon_g \frac{\mathrm{det} \left[ -
        \partial_x \partial_y \sigma_{g}(x,y) \right]}{\sqrt{\vert g(x)\vert}\sqrt{\vert g(y)\vert}},
    \end{align}
    for all $(x,y)\in\mathcal{N}$, where $g(x):=\mathrm{det}((g_{x})_{\mu\nu})$ denotes the metric determinant at $x$ and
$\epsilon_g:=\operatorname{sgn}\det((g_x)_{\mu\nu}) = (-1)^q$, with $q$ representing the number of negative eigenvalues of the metric.
\end{definition}

The Van Vleck-Morette determinant serves as the Jacobian determinant of the following coordinate transformation $TM\to M\times M$. On a sufficiently small neighborhood of the zero section, the local change of variables $(z,v)\mapsto (x,y):=(\mathrm{exp}_{z}(-v/2),\mathrm{exp}_{z}(v/2))$ satisfies
\begin{align}
    d\mu_M(x)d\mu_M(y) = \Delta(x,y)^{-1}\,d\mu_{TM}(z,v),
\end{align}
or equivalently
\begin{align}
    d\mu_{TM}(z,v)=\Delta(x,y)\,d\mu_M(x)d\mu_M(y),
\end{align}
where \(d\mu_{TM}(z,v):=d\mu_M(z)d\mu_{T_zM}(v)\).

\subsection{Geometry of vector bundles}

To describe fields, we consider smooth complex vector bundles
\begin{align}
    E\xrightarrow{\pi_E} M,\qquad F\xrightarrow{\pi_F} M .
\end{align}
We write $\operatorname{rank}_{\mathbb C}(E)=n$ and
$\operatorname{rank}_{\mathbb C}(F)=m$, with $m, n \in \mathbb{N}$. When needed, we also use additional
bundles, such as $G\xrightarrow{\pi_G}M$, equipped with the same type of
structures described below. 

We assume that all bundles under consideration are equipped with (similarly for $G$, although not explicitly stated):
\begin{itemize}
    \item[(i)]\textit{Pseudo-Hermitian fiber metrics}, that is, non-degenerate Hermitian sesquilinear forms
    \begin{align}
        \langle\cdot,\cdot\rangle_{E_{x}}\colon E_{x}\times E_{x}\to\mathbb{C},\qquad \langle\cdot,\cdot\rangle_{F_{x}}\colon F_{x}\times F_{x}\to\mathbb{C}, \qquad \forall x\in M ,
    \end{align}
    depending smoothly on $x$, with the convention that they are $\mathbb{C}$-linear in the second and antilinear in the first argument. On the level of sections, these bundle metrics induce non-degenerate Hermitian sesquilinear pairings, which we denote by
    \begin{subequations}
        \begin{align}
            (\Psi_{1},\Psi_{2})_{E}&:=\int_{M}\langle\Psi_{1}(x),\Psi_{2}(x)\rangle_{E_{x}}\,d\mu_{M}(x),\\
            (\Phi_{1},\Phi_{2})_{F}&:=\int_{M}\langle\Phi_{1}(x),\Phi_{2}(x)\rangle_{F_{x}}\,d\mu_{M}(x),
        \end{align}
     \end{subequations}
    for all $\Psi_1, \Psi_2 \in \Gamma(E)$ and $\Phi_1, \Phi_2 \in \Gamma(F)$ with
$\mathrm{supp}(\Psi_1) \cap \mathrm{supp}(\Psi_2)$ and 
$\mathrm{supp}(\Phi_1) \cap \mathrm{supp}(\Phi_2)$ compact, where $d\mu_M$ denotes the volume measure associated with the pseudo-Riemannian metric $g$ on $M$. We denote by $E^{\ast}$ and $F^{\ast}$ the corresponding \textit{dual bundles}, that is, the bundles whose fibers at $x\in M$ consist of the linear functionals of $E_{x}$ and $F_{x}$, respectively:
    \begin{align}
        E^{\ast}_{x}:=\{\varphi\:E_{x}\to\mathbb{C}\mid \varphi(\lambda v)=\lambda \varphi(v), \forall v \in E_x, \lambda \in \mathbb{C}\},
    \end{align} 
    and similarly for $F^{\ast}$. The pseudo-Hermitian metrics on $E$ and $F$ allow us to identify $\Gamma(E)\cong\Gamma(E^{\ast})$ and $\Gamma(F)\cong\Gamma(F^{\ast})$ ($\mathbb{C}$-antilinearly) via $\Psi\mapsto\langle \Psi, \cdot \rangle_{E}$ and $\Phi\mapsto\langle \Phi, \cdot \rangle_{F}$.
    \item[(ii)]\textit{Connections} that are compatible with the fiber metrics $\langle\cdot,\cdot\rangle_{E}$ and $\langle\cdot,\cdot\rangle_{F}$, respectively, denoted by
    \begin{align}
        \nabla^{E}\colon\Gamma(E)\to\Gamma(E\otimes T^{\ast}M),\qquad\nabla^{F}\colon\Gamma(F)\to\Gamma(F\otimes T^{\ast}M).
    \end{align}
    Compatibility means that, for all vector fields $X \in \mathfrak{X}(M)$ and sections $\Psi_1, \Psi_2 \in \Gamma(E)$,
\begin{align}
    X\langle \Psi_1,\Psi_2\rangle_E = \langle \nabla^E_X\Psi_1,\Psi_2\rangle_E +     \langle \Psi_1,\nabla^E_X\Psi_2\rangle_E,
\end{align}
and analogously for $F$ and any additional bundle under consideration.
\end{itemize}
The components of the curvature $2$-forms $E\in\Omega^{2}(M,\mathrm{End}(E))$ and $F\in\Omega^{2}(M,\mathrm{End}(F))$ are defined as
\begin{subequations}
\begin{align}
    (\nabla^{E}_{\mu}\nabla^{E}_{\nu}-\nabla^{E}_{\nu}\nabla^{E}_{\mu})\Psi^{A}=:\tensor{E}{^A_B_\mu_\nu}\Psi^{B}\qquad\qquad \forall \Psi\in\Gamma(E), \\
    (\nabla^{F}_{\mu}\nabla^{F}_{\nu}-\nabla^{F}_{\nu}\nabla^{F}_{\mu})\Phi^{C}=:\tensor{F}{^C_D_\mu_\nu}\Phi^{D}\qquad\qquad \forall \Phi\in\Gamma(F).
\end{align}    
\end{subequations}

For $x\in M$ and $v\in D_{x}M\subset T_{x}M$, we define the \textit{exponential map} by $\mathrm{exp}_{x}(v):=\gamma_{x,v}(1)$, as usual, where $\gamma_{x,v}\colon I \subseteq \mathbb{R} \to M$ denotes the unique geodesic with initial conditions $\gamma_{x,v}(0)=x$ and $\dot{\gamma}_{x,v}(0)=v$. This defines a map 
\begin{align}
    \mathrm{exp}_{x}\colon D_{x}M\to M,
\end{align}
where $D_{x}M\subset T_{x}M$ denotes the subset of vectors $v\in T_{x}M$ for which $\gamma_{x,v}$ is at least defined on $[0,1]$. Following~\cite[Sec.~2]{dls_weyl}, we will use the suggestive notation
\begin{align}
    x+v:=\mathrm{exp}_{x}(v)\in M
\end{align}
for all $x\in M$ and $v\in D_{x}M$. The \textit{parallel transport} operator of the bundle $(E,\nabla^{E})$ from $x\in M$ to $x+u$ for $u\in D_{x}M$ along the geodesic $t\mapsto \mathrm{exp}_{x}(tu)$ is denoted by
    \begin{align}
        \J{x+u}{x}\colon E_{x}\to E_{x+u}.
    \end{align}
Since it will typically be clear from the context, we use the same symbol
for the parallel transport operators associated with $(TM,\nabla)$,
$(E,\nabla^E)$, $(F,\nabla^F)$, and any additional bundles under consideration,
as well as for the induced transports on dual bundles and pullbacks. If
$a\in\Gamma(E^*\otimes F)\simeq \Gamma(\operatorname{Hom}(E,F))$, then the value
$a(z+v)$ transported to the fiber over $z$ will be denoted by
\begin{equation}
    \J{z}{z+v}\,a(z+v)\,\J{z+v}{z}
    \in E_z^*\otimes F_z \simeq \operatorname{Hom}(E_z,F_z),
\end{equation}
whenever $z+v$ is defined. Here, the left factor transports the $F$-fiber from
$z+v$ to $z$, while the right factor transports the input vector from $E_z$
to $E_{z+v}$; equivalently, it induces the dual pullback
$E_{z+v}^*\to E_z^*$.

\subsection{Horizontal and vertical covariant derivatives}
\label{Subsec:HorVertCD}
For a given vector bundle $E\to M$ equipped with a connection $\nabla^{E}$, we introduce a notion of \textit{horizontal} and \textit{vertical derivatives} for sections of the pull-back bundle $\pi^{\ast} E\to T^{\ast}M$, following the definitions considered in the articles \cite[Sec.~2.4]{dls_weyl}, \cite{MR2191866,MR2186590}, and the monograph \cite[Sec.~3.5]{sharafutdinov2012integral}. The same construction and notation will be used for all the other bundles, including induced bundles, tensor products, duals, and pullbacks.

The Levi-Civita connection on $M$ induces a linear connection on $T^*M$. Equivalently, the corresponding \textit{Ehresmann connection} is a vector subbundle $H(T^{\ast}M)$ such that
\begin{align}\label{eq:decompCoTan}
    T(T^{\ast}M)=V(T^{\ast}M)\oplus H(T^{\ast}M) ,
\end{align}
where $V(T^{\ast}M):=\mathrm{ker}(d\pi)$ denotes the \emph{vertical subbundle}, i.e., the bundle with fibers $V_{(x,p)}(T^{\ast}M)=T_{(x,p)}(T_{x}^{\ast}M)$. For a vector field $X\in\mathfrak{X}(M)$, we denote by $X^{h}\in\Gamma(H(T^{\ast}M))$ its horizontal lift and by $X^{v}\in\Gamma(V(T^{\ast}M))$ the vertical lift of the covector $g(X,\cdot)$ \cite{1967185}. The horizontal and vertical lifts allow us to define the horizontal and vertical covariant derivatives that act on sections of the pullback bundle $\pi^{\ast}E\to T^{\ast}M$. 

\begin{definition}[Horizontal and vertical derivative]
    Let $E\to M$ be a smooth vector bundle equipped with a connection $\nabla^{E}$. The \textit{horizontal and vertical covariant derivatives} induced by $\nabla^{E}$ are the operators
    \begin{subequations}
        \begin{align}
           & \overset{\mathrm{h}}{\nabla}\colon\Gamma(\pi^{\ast}E)\to \Gamma(\pi^{\ast}(E\otimes T^{\ast}M)),  &\overset{\mathrm{h}}{\nabla}_{X}\Psi:=(\nabla^{\pi^{\ast}E})_{X^{h}}\Psi,\\
            &\overset{\mathrm{v}}{\nabla}\colon\Gamma(\pi^{\ast}E)\to \Gamma(\pi^{\ast}(E\otimes TM)),  &\overset{\mathrm{v}}{\nabla}_{X}\Psi:=(\nabla^{\pi^{\ast}E})_{X^{v}}\Psi,
        \end{align}
    \end{subequations}
    where $X\in\mathfrak{X}(M)$ and $\nabla^{\pi^{\ast}E}:=\pi^{\ast}\nabla^{E}$  denotes the pull-back connection on the bundle $\pi^{\ast}E\to T^{\ast}M$. 
\end{definition}

Let us now turn to a local description of these operators in coordinates. We choose a smooth local frame $e_{A}$ of $E$, which provides a local frame $e_{A}^{\prime}:=\tensor{e}{_A}\circ\pi$ for the pull-back bundle $\pi^{\ast}E$. Furthermore, we choose local coordinates $x = (x^{1},\dots,x^{d})$ on $M$ and denote the corresponding induced coordinates on the manifold $T^{\ast}M$ by $(x,p) = (x^{1},\dots,x^{d};p_{1},\dots,p_{d})$. Now, any section of $\pi^{\ast}E$ can be written locally as $\Psi(x,p)=\Psi^{A}(x,p)e_{A}^{\prime}(x,p)$ for $(x,p)\in T^{\ast}M$ and component functions $\Psi^{A}\in C^{\infty}(T^{\ast}M)$. Given a vector field $X\in \mathfrak{X}(M)$, locally written as $X=X^{\mu}\partial_{x^{\mu}}$, the corresponding horizontal and vertical lifts are given by \cite{1967185} 
\begin{align}
    X^{h}=X^{\mu}(\partial_{x^{\mu}}+p_{\sigma}\Gamma^{\sigma}_{\mu \nu}\partial_{p_{\nu}}) \qquad\text{and}\qquad X^{v}=X_{\mu}\partial_{p_{\mu}},
\end{align}
where $X_\mu = g_{\mu \nu} X^\nu$ and $\Gamma_{\mu \nu}^{\sigma}$ denote the Christoffel symbols of $\nabla$. It follows that
\begin{subequations}
\begin{align}
    \hnabla_{X}\Psi &= (\pi^{\ast}\nabla^{E})_{X^{h}}\Psi= d\Psi^{A}(X^{h})e_{A}^{\prime}+\Psi^{A}(\pi^{\ast}\nabla^{E})_{X^{h}} e_A^{\prime} \nonumber\\
    &= X^{\mu} \left(\partial_{x^{\mu}}\Psi^{A}+p_{\sigma}\Gamma^{\sigma}_{\mu \nu}\partial_{p_{\nu}}\Psi^{A}+\omega\indices{^A_{B \mu}}\Psi^{B} \right) e_{A}^{\prime}, \\
    \vnabla_{X}\Psi &= (\pi^{\ast}\nabla^{E})_{X^{v}}\Psi=d\Psi^{A}(X^{v})e_{A}^{\prime}+\Psi^{A}(\pi^{\ast}\nabla^{E})_{X^{v}} e_{A}^{\prime} \nonumber\\
    &= X_{\mu} \left( \partial_{p_{\mu}}\Psi^{A} \right) e_{A}^{\prime} ,
\end{align}
\end{subequations}
where $\omega\indices{^A_{B \mu}}$ denotes the connection coefficients of $\nabla^{E}$, i.e.~$\nabla^{E}_{\partial_{x^\mu}}e_{B}=\omega\indices{^A_{B \mu}}e_{A}$, and we used that the pullback frame $e_{A}^{\prime}(x,p)$ is constant in the vertical directions. In summary, the coordinate expressions of the horizontal and vertical covariant derivatives are given by
\begin{subequations}
\begin{align}
\begin{split}
    \overset{\mathrm{h}}{\nabla}_{\mu}\Psi^{A}&=\nabla^{E}_{\mu}\Psi^{A}+\Gamma_{\mu \nu}^{\sigma}p_{\sigma}\partial_{p_{\nu}}\Psi^{A}=\partial_{x^{\mu}}\Psi^{A}+\omega\indices{^A_{B \mu}}\Psi^{B}+\Gamma_{\mu \nu}^{\sigma}p_{\sigma}\partial_{p_{\nu}}\Psi^{A},
    \end{split}\\
    \begin{split}
    \overset{\mathrm{v}}{\nabla}{}^\mu \Psi^{A}&=\partial_{p_{\mu}}\Psi^{A} .
    \end{split}
\end{align}
\end{subequations}
When these operators are extended to pullbacks of tensor products using the
induced connections on all tensor indices, their commutators satisfy
\begin{subequations}
\begin{align}
    [\hnabla_\mu,\vnabla^\alpha]\Psi &=0,\qquad \qquad \quad
    [\vnabla^\alpha,\vnabla^\beta]\Psi=0,\\
    (\hnabla_\mu\hnabla_\nu-\hnabla_\nu\hnabla_\mu)\Psi^A
    &=
    R\indices{^\alpha_{\beta\mu\nu}}p_\alpha \vnabla^\beta\Psi^A
    +E\indices{^A_{B\mu\nu}}\Psi^B ,
\end{align}
\end{subequations}
where $E\in\Omega^{2}(M,\mathrm{End}(E))$ denotes the curvature tensor of $(E,\nabla^{E})$.  Furthermore, for the tautological covector $p=p_\mu dx^\mu\in\pi^*(T^*M)$, one has
\begin{align}
    \hnabla_\alpha p_\mu=0,\qquad
    \hnabla_\alpha p^\mu=0,\qquad
    \vnabla^\alpha p_\mu=\delta^\alpha_\mu,\qquad
    \vnabla^\alpha p^\mu=g^{\alpha\mu}.
\end{align}

The following lemma is due to Sharafutdinov.

\begin{lemma}[\protect{\cite[Lemma~7.5]{MR2191866}}] \label{lemma:derivative}
    Let $E$ be a vector bundle over $M$ with connection $\nabla^{E}$. Then, for any $k \in \mathbb{N}$ and for any multi-index $\boldsymbol{\alpha} \in \mathbb{N}^{k}$, it holds that
    \begin{align}
      \hnabla^{(\boldsymbol{\alpha})} u(x,p) = \partial_{v}^{\boldsymbol{\alpha}} \bigg[ \J{x}{x+v}u(x+v,\J{x+v}{x}p) \bigg] \bigg\vert_{v=0}
    \end{align}
    for all $u\in \Gamma(\pi^{\ast}E)$, where $\hnabla^{(\boldsymbol{\alpha})}$ denotes the symmetrized horizontal derivative
    \begin{align}
        \hnabla^{(\boldsymbol{\alpha})}:=\frac{1}{k!}\sum_{\sigma\in \mathfrak{S}_{k}} \hnabla_{\sigma(\boldsymbol{\alpha}(1))} \dots \hnabla_{\sigma(\boldsymbol{\alpha}(k))},
    \end{align} 
    with $\mathfrak{S}_{k}$ being the symmetric group of order $k$.
\end{lemma}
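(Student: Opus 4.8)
The plan is to reduce the statement to the corresponding identity for the coordinate-dependent objects and then exploit the fact that both sides are tensorial (i.e.\ frame-independent), so that it suffices to check it in one convenient local frame. Concretely, fix $(x,p)\in T^{\ast}M$ and choose \emph{normal coordinates centered at $x$} together with a \emph{synchronous frame} $e_{A}$ for $E$ obtained by parallel-transporting a fixed basis of $E_{x}$ radially along geodesics through $x$. In such coordinates and frame, the Christoffel symbols $\Gamma^{\sigma}_{\mu\nu}$ and the connection coefficients $\omega^{A}_{B\mu}$ all vanish \emph{at the point $x$} (though not elsewhere), and moreover the parallel transport maps $\J{x}{x+v}$ and $\J{x+v}{x}$ along the radial geodesic $t\mapsto tv$ become the \emph{identity} on components. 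Hence the right-hand side collapses to $\partial_{v}^{\boldsymbol\alpha}\big[u^{A}(x+v,p)\big]\big|_{v=0}$, where on the inner argument $\J{x+v}{x}p=p$ in this frame; one must be slightly careful that $x+v$ here is the exponential map, which in normal coordinates is just $v\mapsto v$ (coordinatized), so this is genuinely $\partial_{v}^{\boldsymbol\alpha}u^{A}(v,p)\big|_{v=0}$, an ordinary partial derivative of the component function.

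The heart of the argument is then to show that the left-hand side, $\hnabla^{(\boldsymbol\alpha)}u$, also reduces at $x$ to the same symmetrized ordinary partial derivative of the component function. First I would establish the case $k=1$: from the coordinate formula $\hnabla_{\mu}\Psi^{A}=\partial_{x^{\mu}}\Psi^{A}+\omega^{A}_{B\mu}\Psi^{B}+\Gamma^{\sigma}_{\mu\nu}p_{\sigma}\partial_{p_{\nu}}\Psi^{A}$ derived in the text, the last two terms vanish at $x$ in our adapted frame, giving $\hnabla_{\mu}u^{A}|_{x}=\partial_{x^{\mu}}u^{A}|_{x}$. For higher $k$ one proceeds by induction: when one applies a further $\hnabla_{\nu}$ to $\hnabla_{\sigma(\boldsymbol\alpha(2))}\cdots\hnabla_{\sigma(\boldsymbol\alpha(k))}u$, the connection-coefficient and Christoffel terms again vanish \emph{at $x$}, but one must account for the fact that their \emph{first derivatives} do not vanish at $x$ (they are $O(|v|)$). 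This is precisely where the symmetrization over $\mathfrak{S}_{k}$ is essential: the non-tensorial correction terms that survive are proportional to $\partial_{(\mu}\Gamma^{\sigma}_{\nu)\lambda}$ and $\partial_{(\mu}\omega^{A}_{|B|\nu)}$-type expressions, i.e.\ to curvature, and these are antisymmetric in a pair of the differentiated indices, hence killed upon full symmetrization. So the claim becomes: $\hnabla^{(\boldsymbol\alpha)}u|_{x}=\partial^{(\boldsymbol\alpha)}_{x}u^{A}|_{x}$ in adapted coordinates, which matches the reduced right-hand side by the multivariate Taylor/Leibniz identity $\partial^{(\boldsymbol\alpha)}_{x}f|_{x}=\partial_{v}^{\boldsymbol\alpha}f(x+v)|_{v=0}$.

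I expect the main obstacle to be the bookkeeping in the inductive step: one has to argue carefully that \emph{all} the correction terms arising from $\Gamma$ and $\omega$ not being globally zero — including those coming from differentiating the $p_{\sigma}\partial_{p_{\nu}}$ piece of $\hnabla$, which interacts with the momentum argument — either vanish at $x$ or are antisymmetrized away, and that no term of the wrong homogeneity in $v$ contributes at order $|\boldsymbol\alpha|$. An alternative, cleaner route that I would consider as a backup is to avoid induction entirely: define $F(v):=\J{x}{x+v}u(x+v,\J{x+v}{x}p)$ as a section of the fixed vector space $E_{x}$ along $v\in T_{x}M$, observe that $\partial_{v}^{\boldsymbol\alpha}F(0)$ is manifestly symmetric in $\boldsymbol\alpha$ and tensorial, and then verify the $k=1$ identity $\partial_{v^{\mu}}F(0)=\hnabla_{\mu}u|_{x}$ together with the \emph{recursion} $\partial_{v^{\nu}}\big(\text{$F$-construction applied to }\hnabla u\big)=\big(\text{$F$-construction for }u\text{, differentiated once more}\big)$ — i.e.\ that the operation $u\mapsto\hnabla u$ intertwines with $v$-differentiation of the parallel-transported pullback, modulo terms antisymmetric in the new index. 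This is essentially the content of \cite[Lemma~2.5]{MR2191866}, and for a self-contained treatment one would cite it and only sketch the normal-coordinate verification above.
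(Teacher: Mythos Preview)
The paper does not actually prove this lemma: it is stated with a direct citation to Sharafutdinov \cite[Lemma~2.5]{MR2191866} and used without further argument. So there is no ``paper's proof'' to compare against, and your final remark that one would simply cite \cite{MR2191866} is in fact exactly what the authors do.

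As for your sketched argument itself: the overall strategy (normal coordinates at $x$ plus a radially parallel frame for $E$, so that both parallel transports become the identity on components and the $k=1$ case is immediate) is correct and standard. The genuine soft spot is your inductive step. Your claim that the surviving correction terms are ``proportional to curvature and antisymmetric in a pair of differentiated indices, hence killed by symmetrization'' conflates two distinct mechanisms. In the adapted frame one has the identities $v^{\mu}\omega^{A}_{B\mu}(v)=0$ and $v^{\mu}v^{\nu}\Gamma^{\sigma}_{\mu\nu}(v)=0$; differentiating these repeatedly at $v=0$ yields $\partial_{(\nu_{1}}\!\cdots\partial_{\nu_{k-1}}\omega^{A}_{|B|\nu_{k})}(0)=0$ and the analogous vanishing for $\Gamma$. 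It is \emph{these} identities that kill the highest-derivative correction terms after full symmetrization, not antisymmetry of curvature per se. Lower-order cross terms (products like $(\partial_{\nu}\omega_{\mu})\,\partial_{\rho}u$) do cancel via the antisymmetry $\partial_{(\nu}\omega^{A}_{|B|\mu)}(0)=0$, but for $k\geq 3$ one must also track mixed terms involving $\Gamma^{\sigma}_{\mu\nu}p_{\sigma}\partial_{p_{\nu}}$ acting on already-differentiated expressions, and it is not obvious from your sketch that all such terms are covered. Your ``backup'' recursion route is closer to a clean proof: once one checks that $\partial_{v^{\mu}}F(0)=\hnabla_{\mu}u$ (which you do), the identity $\nabla^{E}_{\mu}\big[u(\cdot,\J{\cdot}{x}p)\big]=\big(\hnabla_{\mu}u\big)(\cdot,\J{\cdot}{x}p)$ as sections of $E$ over $M$ lets one iterate and then invoke the ordinary covariant Taylor expansion for $E$-sections, avoiding the combinatorics entirely.
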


\subsection{Symbol classes and (semiclassical) pseudodifferential operators}

In this section, we briefly review some well-known symbol classes for (semiclassical) pseudodifferential operators taking values in arbitrary vector bundles. First of all, let us recall the standard notion of (uniform) symbols introduced by Kohn-Nirenberg \cite{KohnNirenberg} and Hörmander \cite{Hormander}; see also \cite{Sjorstrand,HormanderI,Shubin} for textbook accounts.

\begin{definition}[Symbol classes]
    Let $k\in\mathbb{R}$ and $U\subset\mathbb{R}^{d}$ be an open set. We denote by $\mathcal{S}^{k}(U,\mathbb{C}^{m \times n})$ the set of all smooth functions $a\in C^{\infty}(U\times\mathbb{R}^{d},\mathbb{C}^{m \times n})$ with the property that for all compact subsets $K\subset U$ and any multi-indices $\boldsymbol{\alpha}, \boldsymbol{\beta} \in \mathbb{N}^{d}$, there is a constant $C_{K,\boldsymbol{\alpha}, \boldsymbol{\beta}} > 0$ such that
    \begin{align}
        \vert \partial_{x}^{\boldsymbol{\alpha}} \partial_{p}^{\boldsymbol{\beta}} a(x,p)\vert\leq C_{K, \boldsymbol{\alpha}, \boldsymbol{\beta}} (1+\vert p\vert)^{k-\vert\boldsymbol{\beta}\vert} \qquad \forall \, (x,p)\in K\times\mathbb{R}^{d} ,
    \end{align}
    where $\vert\cdot\vert$ denotes an arbitrary norm on $\mathbb{C}^{m \times n}$. Then, a {\normalfont{symbol of order $k\in\mathbb{R}$}} is a section $a\in\Gamma(T^{\ast}M,\pi^{\ast}(E^{\ast}\otimes F))$, such that in any local chart $(U,\varphi)$ with the induced cotangent coordinates and local trivializations of $E$, and $F$ (and hence also of $\pi^{\ast}E$ and $\pi^{\ast}F$), its local representative $a_{U}\in C^{\infty}(\varphi(U)\times\mathbb{R}^{d},\mathbb{C}^{m \times n})$ is such that $a_{U}\in\mathcal{S}^{k}(\varphi(U),\mathbb{C}^{m \times n})$. The space of all symbols of order $k$ is denoted by
    \begin{align}
        \mathcal{S}^{k}(M;E,F),
    \end{align}
    or simply by $\mathcal{S}^{k}$ when it is clear from the context which manifold and bundles are used. Furthermore, we introduce the space of \textit{smoothing symbols}, which is defined as
    \begin{align}
        \mathcal{S}^{-\infty}(M;E,F):= \bigcap_{k\in\mathbb{R}}\mathcal{S}^{k}(M;E,F).
    \end{align}
\end{definition}

In other words, a symbol $a$ of order $k$ has at most polynomial growth of order $k$ in the cotangent variable $p$, uniformly with respect to $x$. One can easily check that the symbols transform covariantly under the change of coordinates and local trivializations, which shows that the above definition is purely geometric; see \cite[Lemma~5.33]{Hintz}.

One may also consider slightly more general classes of symbols by replacing $(1+\vert p\vert)^{k-\vert\boldsymbol{\beta}\vert}$ in the defining estimate above with $(1+\vert p\vert)^{k-\rho\vert\boldsymbol{\beta}\vert+\delta\vert\boldsymbol{\alpha}\vert}$ for given parameters $\rho,\delta\in [0,1]$ and $1-\rho\leq\delta<\rho$, leading to symbol classes usually denoted by $\mathcal{S}^{k}_{\rho,\delta}$ (see e.g.~\cite{HormanderFourier,Sjorstrand}). In other words, one also allows for a decrease in the order gained by differentiation in the base variable. Although symbols in this general class have many applications, e.g.~in scattering theory, we shall restrict our attention to symbols of the class $(\rho,\delta)=(1,0)$ throughout this text.

\begin{remark}\label{Rem:TopSym} The space $\mathcal{S}^{k}(U,\mathbb{C}^{m \times n})$ carries a natural structure of a Fréchet space whose topology is induced by the family of seminorms indexed by compact subsets $K \subset U$ and multi-indices $\alpha,\beta\in\mathbb{N}^d$, defined via the optimal constants in the defining estimates:
\begin{align} p_{K,\boldsymbol{\alpha},\boldsymbol{\beta}}(a):=\sup_{(x,p)\in K\times\mathbb{R}^{d}}\frac{\vert \partial_{x}^{\boldsymbol{\alpha}} \partial_{p}^{\boldsymbol{\beta}} a(x,p)\vert}{ (1+\vert p\vert)^{k-\vert\boldsymbol{\beta}\vert}}.
\end{align} 
Using local charts and trivializations, we equip $\mathcal{S}^{k}(M;E,F)$ with the corresponding Fréchet topology. More explicitly, we choose a countable, locally finite atlas of $M$ that trivializes the bundles $E$ and $F$, together with compact exhaustions in each chart, and we use the above seminorms for the local representatives $a_U$. Different choices of atlas, trivializations, and compact exhaustions give equivalent topologies. \end{remark}

Let us now briefly recall some basic concepts of pseudodifferential calculus. On an open subset $U\subset\mathbb{R}^{d}$, a \textit{pseudodifferential operator} of order $k\in\mathbb{R}$ is a linear and continuous operator of the form
\begin{align}
    \hat{A}\colon C^{\infty}_{\mathrm{c}}(U, \mathbb{C}^n)\to C^{\infty}(U,\mathbb{C}^m),\qquad \hat{A}\psi(x)= \frac{1}{(2\pi)^d}\int_{\mathbb{R}^{d}}\int_{U}e^{ip \cdot (x-y)}a(x,y,p)\psi(y)\,d^dy\,d^dp ,
\end{align}
for a given smooth $a(x, y, p)$ satisfying the symbol estimates in the variable
$p$, uniformly for $(x,y)$ in compact subsets of $U\times U$, and where the integral must be understood in the sense of \textit{oscillatory integrals}. The space of all such operators is denoted by $\Psi^{k}(U, \mathbb{C}^{m \times n})$. Pseudodifferential operators with symbols in $\mathcal{S}^{-\infty}$ are in one-to-one correspondence with operators whose Schwartz kernels are smooth. Equivalently, smoothing operators are exactly those operators that continuously extend to operators of the form $\hat{A}\colon\mathcal{E}^{\prime}(U, \mathbb{C}^{n})\to C^{\infty}(U, \mathbb{C}^{m})$, where $\mathcal{E}^{\prime}(U, \mathbb{C}^{n})$ denotes the space of compactly supported distributions on $U$, which is the origin of the name ``smoothing''. We denote the space of all smoothing operators by $\Psi^{-\infty}(U, \mathbb{C}^{m \times n})$, or simply $\Psi^{-\infty}$. Note that in this general definition, we allow a symbol to depend on both the $x$ and $y$ variables. However, it is well known that for every pseudodifferential operator, there exists a symbol $\sigma_{\hat{A}}\in \mathcal{S}^{k}(U, \mathbb{C}^{m \times n})$, called the \textit{complete symbol}, that is unique modulo $\mathcal{S}^{-\infty}$, such that $\hat{A}$ is the \textit{Kohn-Nirenberg quantization} of $\sigma_{\hat{A}}$:
\begin{align}
    \hat{A}\psi(x)= \frac{1}{(2\pi)^d} \int_{\mathbb{R}^{d}}\int_{U}e^{ip \cdot (x-y)}\sigma_{\hat{A}}(x,p)\psi(y)\,d^dy\,d^dp .
\end{align}
More details can be found in \cite{Shubin,Sjorstrand}. 

If $\varphi\colon U\to V$ is a diffeomorphism between two open subsets $U,V\subset\mathbb{R}^{d}$, then $\hat{A}_{\varphi}:=\hat{A}(\bullet\circ\varphi)\circ\varphi^{-1}$ defines a pseudodifferential operator of the same order on $V$; see \cite[Thm.~4.1]{Shubin} or \cite[Thm.~3.9]{Sjorstrand}. In particular, this allows one to define pseudodifferential operators on manifolds: A linear and continuous operator of the form $\hat{A}\colon\Gamma_{\mathrm{c}}(E)\to\Gamma(F)$ is called a \textit{pseudodifferential operator} of order $k\in\mathbb{R}$ if its Schwartz kernel is smooth outside the diagonal $\delta:=\{(x,x)\mid x\in M\}$ and if, in any local coordinate chart $\varphi\:U\to\mathbb{R}^{d}$ with $U\subset M$ open, which also trivializes the corresponding vector bundles, its local representative $\hat{A}_{\varphi}:=\hat{A}(\bullet\circ\varphi)\circ\varphi^{-1}$ has the form
\begin{align}
    \hat{A}_{\varphi}\psi(x)= \frac{1}{(2\pi)^d} \int_{\mathbb{R}^{d}}\int_{\varphi(U)}e^{ip \cdot (x-y)}a_{U}(x,y,p)\psi(y)\,d^dy\,d^dp
\end{align}
for some smooth $a_{U}(x, y, p)$ satisfying symbol estimates in $p$, uniformly
for $(x,y)$ in compact subsets of $\varphi(U)\times\varphi(U)$. We denote the space of all pseudodifferential operators of order $k$ on $M$ by $\Psi^{k}(M;E,F)$, or simply by $\Psi^{k}$ if there is no risk of confusion. Analogously, we denote the space of smoothing operators by $\Psi^{-\infty}(M;E,F)$ or simply by $\Psi^{-\infty}$. As before, the operators in $\Psi^{-\infty}(M;E,F)$ are exactly those linear and continuous operators $\hat{A}\colon\Gamma_{\mathrm{c}}(E)\to\Gamma(F)$ with a \emph{smooth} Schwartz kernel. We refer to \cite[Sec.~4]{Shubin} and \cite[Sec.~5]{Hintz} for more details on pseudodifferential operators on manifolds.

Since the symbol classes are globally well defined and behave covariantly under the change of coordinates, one can also define a \textit{quantization map} $\mathcal{S}^{k}(M;E,F)\to \Psi^{k}(M;E,F)$ by choosing a partition of unity and gluing together local constructions. A detailed discussion can be found in \cite[Sec.~5.6]{Hintz}.

\begin{example}[Linear Differential Operators] \label{example:diff_op}
    Let $\hat{A}\colon\Gamma(E)\to\Gamma(F)$ be a linear differential operator of order $k \in \mathbb{N}$, which is an operator that in a local chart $U\subset M$ and local trivializations can be written as a linear differential operator $\hat{A}\colon C^{\infty}(U,\mathbb{C}^{n})\to C^{\infty}(U,\mathbb{C}^{m})$ of the form 
    \begin{align}
        \hat{A}=\sum_{\vert \boldsymbol{\alpha} \vert \leq k} f^{\boldsymbol{\alpha}}(x) \frac{1}{i^{\vert\boldsymbol{\alpha}\vert}}\frac{\partial}{\partial x^{\boldsymbol{\alpha}}},
    \end{align}
    for smooth coefficient functions $f^{\boldsymbol{\alpha}} \in C^{\infty}(U,\mathbb{C}^{m \times n})$. We denote the space of these linear differential operators of order at most $k$ by $\mathrm{Diff}^{k}(M;E,F)$.
    By the Theorem of Peetre (see e.g.~\cite[Thm.~6.2]{Kahn}), these are exactly the operators that are \emph{local}, that is, $\mathrm{supp}(\hat{A}\Psi)\subset\mathrm{supp}(\Psi)$ for all $\Psi\in\Gamma(E)$. In each local chart and trivialization, the total symbol of $\hat A$ is the polynomial
    \begin{align}
        a(x,p):=a[\hat{A}](x,p):=\sum_{\vert \boldsymbol{\alpha} \vert \leq k} f^{\boldsymbol{\alpha}}(x) p_{\boldsymbol{\alpha}},
    \end{align}
    which belongs to the local symbol class $\mathcal S^k(U,\mathbb C^{m\times n})$.
\end{example}

Next, let us introduce a more general class of symbols that is particularly well suited for the purpose of \emph{semiclassical analysis}, namely symbols $a_{h}(x,p)$ that are allowed to depend additionally on an external \emph{semiclassical parameter} $h$. For details, we refer to the extensive discussion in \cite[App.~E]{Dytalov}, \cite[Sec.~3.5 and 5.15]{Hintz}, and \cite{zworski2022semiclassical}.

\begin{definition}[Semiclassical symbols]
Fix $h_0\in(0,1]$. For $k\in\mathbb R$, we denote by $\mathcal S_h^k(M;E,F)$ the space of all families $a=(a_h)_{h \in (0,h_0]}$, with $a_h\in\mathcal S^k(M;E,F)$ for each $h \in (0,h_0]$, such that $(a_h)_{h \in (0,h_0]}$ is bounded in the Fréchet topology of $\mathcal S^k(M;E,F)$ introduced in \cref{Rem:TopSym}. We call any $a\in\mathcal{S}_{h}^{k}(M;E,F)$ \emph{a semiclassical symbol of order $k$}.

\noindent For $l\in\mathbb R$, we set
\begin{equation}
    h^{-l}\mathcal S_h^k(M;E,F) := \{ a=(a_h)_{h \in (0,h_0]} \mid (h^l a_h)_{0<h\le h_0} \in\mathcal S_h^k(M;E,F)\}.
\end{equation}
Equivalently, in local coordinates and trivializations, for every compact
$K\subset U$ and all multi-indices $\boldsymbol\alpha,\boldsymbol\beta$, there is
a constant $C_{K,\boldsymbol\alpha,\boldsymbol\beta}>0$, independent of $h$, such that
\begin{equation}
    \left| \partial_x^{\boldsymbol\alpha}\partial_p^{\boldsymbol\beta}a_h(x,p) \right| \le C_{K,\boldsymbol\alpha,\boldsymbol\beta} h^{-l}(1+|p|)^{k-|\boldsymbol\beta|} \qquad \forall \, (x,p)\in K\times\mathbb{R}^{d},\,\, h \in (0, h_0].
\end{equation}

\noindent Finally, we define the set of semiclassical smoothing symbols as
\begin{equation}
    h^\infty\mathcal S_h^{-\infty}(M;E,F)
    :=
    \bigcap_{k,l\in\mathbb R}h^{-l}\mathcal S_h^k(M;E,F).
\end{equation}
Thus $a\in h^\infty\mathcal S_h^{-\infty}$ when all local
$\mathcal S^k$-seminorms of $a_h$, for every $k$, are
$\mathcal O(h^N)$ for every $N\in\mathbb N$.
\end{definition}

On an open subset $U\subset\mathbb{R}^{d}$, the \emph{semiclassical Kohn-Nirenberg quantization} of a symbol $a \in h^{-l}\mathcal{S}_h^{k}(U, \mathbb{C}^{m \times n})$ is the family of operators
\begin{equation}
   \hat{A}_h\colon C_c^{\infty}(U,\mathbb{C}^{n})\to C^{\infty}(U,\mathbb{C}^{m}), \quad \hat{A}_{h}\psi(x)=\frac{1}{(2\pi h)^d}\int_{\mathbb{R}^{d}}\int_{U}e^{\frac{i}{h}p \cdot (x-y)}a_{h}(x,p)\psi(y)\,d^dy\,d^dp.
\end{equation}
As in the non-semiclassical setting discussed above, one can also start with $a_h(x,y,p)$, and the complete symbol $a_h(x,p)$ is unique modulo $h^{\infty}\mathcal{S}_h^{-\infty}$. It is straightforward to verify that this definition behaves covariantly under the change of coordinates. Hence, there is a well-defined notion of semiclassical pseudodifferential operators on manifolds. We denote the space of all such operators by $h^{-l}\Psi_h^{k}(M;E,F)$ or simply $h^{-l}\Psi_h^{k}$ if the manifold and bundles are clear in the context. The space of semiclassical smoothing operators is denoted by $h^{\infty}\Psi_{h}^{-\infty}(M;E,F)$ and consists of families whose Schwartz kernels are smooth and whose $C^{\infty}$-seminorms on compact subsets of $M \times M$ are $\mathcal{O}(h^{N})$ for every $N\in\mathbb{N}$. When $h_0=1$, the local formula at $h=1$ reduces to the ordinary Kohn-Nirenberg quantization.

\begin{example}[Semiclassical Differential Operators]\label{example:semi_diff_op}
Let $k \in \mathbb{N}$ and fix $h_0\in(0,1]$. A \emph{semiclassical differential operator} of order $k$ from $E$ to $F$ is a linear and local family of operators $(\hat{A}_{h})_{h \in (0,h_0]}\colon\Gamma(E)\to\Gamma(F)$. In a local chart $(U,\varphi)$ and local trivializations of the bundles $E$ and $F$, this can be written in the form
\begin{align}
        \hat{A}_{h}=\sum_{\vert \boldsymbol{\alpha} \vert \leq k} f^{\boldsymbol{\alpha}}_{h}(x) \frac{h^{\vert\boldsymbol{\alpha}\vert}}{i^{\vert\boldsymbol{\alpha}\vert}}\frac{\partial}{\partial x^{\boldsymbol{\alpha}}},
\end{align}
    for a $h$-dependent family of functions $f_{h}^{\boldsymbol{\alpha}} \in C^{\infty}(U,\mathbb{C}^{m \times n})$ satisfying
    \begin{equation}
        \sup_{h\in(0, h_0]} \sup_{x \in K} | \partial_x^{\boldsymbol{\beta}} f_h^{\boldsymbol{\alpha}}(x) | < \infty,
    \end{equation}
    for every compact set $K \subset U$, every multi-index $\boldsymbol{\beta}$, and every $|\boldsymbol{\alpha}| \leq k$.
    We denote the space of such operators by $\mathrm{Diff}_{h}^{k}(M;E,F)$. More generally, $h^{-l}\mathrm{Diff}_{h}^{k}(M;E,F)$ consists of those families for which $(h^{l}\hat{A}_{h})_{h\in(0,h_{0}]}$ is bounded in the coefficient seminorms appearing above.
    
    In each local chart and trivialization, the total symbol of $\hat{A}_h$ is
    \begin{align}
        a_{h}(x,p):=\sum_{\vert \boldsymbol{\alpha} \vert \leq k} f_{h}^{\boldsymbol{\alpha}}(x) p_{\boldsymbol{\alpha}},
    \end{align}
    which is an element of $\mathcal{S}_{h}^{k}(U,\mathbb{C}^{m \times n})$. If instead we take $\hat{A}_h \in h^{-l}\mathrm{Diff}_{h}^{k}(M;E,F)$, then its total semiclassical symbol belongs to $h^{-l}\mathcal{S}_{h}^{k}(M;E,F)$. In other words, if $(h^{l}a_{h})_{h\in (0,h_{0}]}$ is a bounded family in $\mathcal{S}^{k}(M;E,F)$ with respect to its Fréchet topology, i.e.
    \begin{align}
        \sup_{h\in (0,h_0]}\sup_{(x,p)\in K\times \mathbb{R}^d}h^{l}\frac{\vert \partial_{x}^{\boldsymbol{\alpha}} \partial_{p}^{\boldsymbol{\beta}} a_{h}(x,p)\vert}{ (1+\vert p\vert)^{k-\vert\boldsymbol{\beta}\vert}}<\infty,
    \end{align}
    for all $\boldsymbol{\alpha},\boldsymbol{\beta}\in\mathbb{N}^{d}$ and compact $K\subset U$, we call $\hat{A}_{h} \in h^{-l}\mathrm{Diff}_{h}^{k}(M;E,F)$ a \emph{semiclassical differential operator of order $k$ and semiclassical order $l$.}

    In particular, every differential operator $\hat{A} \in \mathrm{Diff}^{k}(M;E,F)$ as in \cref{example:diff_op} can be regarded as a semiclassical differential operator of semiclassical order $k$. Locally, we can write 
    \begin{equation}
        \hat{A}=\sum_{\vert \boldsymbol{\alpha} \vert \leq k} f^{\boldsymbol{\alpha}}(x) \frac{1}{i^{\vert\boldsymbol{\alpha}\vert}}\frac{\partial}{\partial x^{\boldsymbol{\alpha}}} = h^{-k} \sum_{\vert \boldsymbol{\alpha} \vert \leq k} h^{k-{\vert\boldsymbol{\alpha}\vert}} f^{\boldsymbol{\alpha}}(x) \frac{h^{\vert\boldsymbol{\alpha}\vert}}{i^{\vert\boldsymbol{\alpha}\vert}}\frac{\partial}{\partial x^{\boldsymbol{\alpha}}}.
    \end{equation}
    Thus, $h^k \hat{A} \in \mathrm{Diff}_{h}^{k}(M;E,F)$, or equivalently $\hat{A} \in h^{-k} \mathrm{Diff}_{h}^{k}(M;E,F)$. The corresponding semiclassical total symbols are
    \begin{subequations}
    \begin{align}
        a_h[h^k \hat{A}](x,p) &= \sum_{\vert \boldsymbol{\alpha} \vert \leq k} h^{k-{\vert\boldsymbol{\alpha}\vert}} f^{\boldsymbol{\alpha}}(x) p_{\boldsymbol{\alpha}} \in \mathcal{S}_{h}^{k}(M;E,F), \\
        a_h[\hat{A}](x,p) &= \sum_{\vert \boldsymbol{\alpha} \vert \leq k} h^{-{\vert\boldsymbol{\alpha}\vert}} f^{\boldsymbol{\alpha}}(x) p_{\boldsymbol{\alpha}} \in h^{-k} \mathcal{S}_{h}^{k}(M;E,F).
    \end{align}
    \end{subequations}
\end{example}

Finally, we recall the notion of an \emph{asymptotic symbol expansion}, which allows us to represent certain semiclassical symbols as formal power series in the semiclassical parameter $h$. For a given semiclassical symbol $a:=(a_{h})_{h\in (0,h_0]}\in\mathcal{S}_{h}^{k}(M;E,F)$, we write 
\begin{align}\label{eq:AsymSumDef1}
    a\sim\sum_{i=0}^{\infty}h^{i}a_{i}
\end{align}
for a sequence of $h$-independent symbols $a_{i}\in\mathcal{S}^{k-i}(M;E,F)$, if for all $N\in\mathbb{N}$ it holds that 
\begin{align}\label{eq:AsymSumDef2}
    a-\sum_{i=0}^{N-1}h^{i}a_{i}\in h^{N}\mathcal{S}_{h}^{k-N}(M;E,F) .
\end{align}
The formal series $\sum_{i=0}^{\infty}h^{i}a_{i}$ is called the \emph{asymptotic sum} of $a$. It is well known that such an asymptotic expansion, if it exists, characterizes the symbol $a$ uniquely in the following sense: given any sequence of symbols $(a_{i})_{i\in\mathbb{N}}$ with $a_{i}\in\mathcal{S}^{k-i}(M;E,F)$, there exists a symbol $a\in\mathcal{S}_{h}^{k}(M;E,F)$, unique modulo $h^{\infty}\mathcal{S}_h^{-\infty}(M;E,F)$, such that $a\sim\sum_{i=0}^{\infty}h^{i}a_{i}$, i.e.~such that~\cref{eq:AsymSumDef2} is satisfied. More generally, for $a\in h^{-l}\mathcal S_h^k(M;E,F)$ and
$h$-independent symbols $a_i\in\mathcal S^{k-i}(M;E,F)$, we write 
\begin{equation}
    a\sim h^{-l}\sum_{i=0}^\infty h^i a_i
\end{equation}
if for all $N\in\mathbb{N}$, it holds that 
\begin{equation}
    a-h^{-l}\sum_{i=0}^{N-1}h^i a_i
\in h^{N-l}\mathcal S_h^{k-N}(M;E,F).
\end{equation}

An important subset of semiclassical symbols is the \emph{polyhomogeneous} semiclassical symbols (e.g.~\cite{Dytalov,HormanderIII}). These admit an asymptotic expansion as in \cref{eq:AsymSumDef1}, with coefficients $a_{i}\in\mathcal{S}^{k-i}(M;E,F)$ that are \emph{positively homogeneous of degree $k-i$} in the cotangent variable:
\begin{align}
    a_{i}(x,\lambda p)=\lambda^{k-i}a_{i}(x,p),
\end{align}
for all $\lambda>0$, whenever $|p|\ge 1$ and $|\lambda p|\ge 1$. In the following, we denote the subspace of polyhomogeneous semiclassical symbols by 
\begin{align}
    \mathcal{S}_{h,\mathrm{poly}}^{k}(M;E,F)\subset\mathcal{S}^{k}_{h}(M;E,F) .
\end{align}
In modern literature, polyhomogeneous symbols are also frequently referred to as ``classical'' (e.g.~\cite{Hintz,Shubin,Sjorstrand}). In this context, the term ``classical'' is unrelated to the semiclassical parameter $h$.

\section{Weyl quantization on vector bundles}
\label{Sec:WeylQuant}
In this section, we introduce the Weyl quantization for symbols associated with operators acting on sections of vector bundles. To define the quantization, we start from a vector bundle generalization of the well-known Wigner function from quantum mechanics. We provide an explicit relation between Weyl symbols and Schwartz kernels, and we briefly discuss alternative quantization schemes beyond the midpoint (Weyl) prescription. Finally, we define the star product, which is the associative product of symbols corresponding to the composition of operators, and establish its asymptotic expansion up to the third order in the semiclassical parameter.

\subsection{Definition of the Weyl quantization}

Throughout this section, $h \in (0, h_0]$ denotes the semiclassical parameter. Let $\mathcal{N}$ be an open neighborhood of the diagonal $\{(x,x)\mid x\in M\}\subset M\times M$ on which the geodesic midpoint construction is well defined, and $\chi\in C^{\infty}(M\times M)$ a cut-off function supported in $\mathcal{N}$ such that $\chi=1$ on another geodesically convex neighborhood of the diagonal whose closure is contained in $\mathcal{N}$. We introduce the Weyl quantization by starting with the definition of the Wigner function. 

\begin{definition}[Wigner function] \label{Def:Wigner}
Given two sections $\Psi \in \Gamma(F)$, $\Phi \in \Gamma(E)$, we define the $h$-dependent {\normalfont Wigner function} $W_h[\Psi,\Phi] \in \Gamma(T^*M, \pi^*(F^* \otimes E))$ for all $(z,p)\in T^{\ast}M$ as
\begin{align} \label{eq:W_def}
    W_h[\Psi, \Phi](z, p) &=
    \int_{T_{z} M } \J{z}{z-\frac{u}{2}} \Psi^* \left( z - \frac{u}{2} \right) \otimes  \J{z}{z+\frac{u}{2}} \Phi \left(z+\frac{u}{2}\right) e^{-\frac{i p \cdot u}{h}} \nonumber\\
    &\qquad\qquad\qquad \chi\left(z-\frac{u}{2}, z+\frac{u}{2}\right)\Delta^{-\gamma}\left(z-\frac{u}{2}, z+\frac{u}{2}\right) \dfrac{ d\mu_{T_zM} (u) }{(2\pi h)^d} .
\end{align}
Here, $\gamma \in \mathbb{R}$ is a parameter that determines the power of the Van Vleck-Morette determinant, and $\Psi^{\ast}:=\langle \Psi, \cdot \rangle_{F}\in\Gamma(F^{\ast})$ denotes the dual section.
\end{definition}

We note that $W_h[\Psi,\Phi]$ is $\mathbb{C}$-linear in $\Phi$, while it is $\mathbb{C}$-antilinear in $\Psi$, due to the antilinear identification $\Psi \mapsto \Psi^{\ast} := \langle \Psi, \cdot \rangle_{F}$. In the scalar case, a similar definition without the Van Vleck-Morette determinant ($\gamma=0$) has appeared in \cite{PflaumWeyl, LiuQian}. The role of the parameter $\gamma$ has been discussed in \cite{fulling}, and some particular choices can be argued to be preferred. In particular, $\gamma = \frac{1}{2}$ is used in \cite{dls_weyl}. We will keep $\gamma$ arbitrary and discuss particular choices later. 

\begin{remark}
    Note that, compared to \cite{fulling}, we use a different convention for the $\gamma$ factor. Our $\gamma$ factor is related to the one used in \cite{fulling}, which we denote by $\gamma_F$, by the relation $\gamma + \gamma_F = 1$. In other words, based on the notation of \cite{fulling}, we work with $\gamma = \gamma_{\textrm{Wigner}}$, while Fulling uses $\gamma_F = \gamma_{\textrm{Weyl}}$. 
\end{remark}

\begin{remark}
    The definition of $W_h[\Psi,\Phi]$ should be understood in a \textit{weak sense}. One can make sense of the right-hand side of \cref{eq:W_def} by choosing local trivializations of the bundles $E$ and $F$ to obtain well-defined vector-valued integrals. These objects depend on the choice of trivializations. However, expressions such as
    \begin{align}
        \int_{T^*M} a_h \cdot W_h [\Psi,\Phi]\, d\mu_{T^*M},
    \end{align}
    for $a_h \in \Gamma(T^*M, \pi^*(F \otimes E^*))$, where $a_h \cdot W_h[\Psi,\Phi]$ denotes the duality pairing between the bundles $\pi^*(F \otimes E^*)$ and $\pi^*(F^* \otimes E)$, are clearly independent of such choices, hence warranting the Wigner function as a well-defined object.
\end{remark}

\begin{definition}[Weyl quantization]\label{Def:WeylQuant}
Let $a=(a_h)_{h \in (0, h_0]}\in h^{-l}\mathcal S_h^k(M;E,F)$. The Weyl quantization of $a$ is the family of linear operators $\hat A=(\hat A_h)_{h \in (0, h_0]}$, with $\hat A_h\colon\Gamma_{\mathrm c}(E)\to\Gamma(F)$, defined weakly by
\begin{equation}
    (\Psi,\hat A_h\Phi)_F = \int_M \langle \Psi,\hat A_h\Phi\rangle_F\,d\mu_M = \int_{T^*M} a_h\cdot W_h[\Psi,\Phi]\,d\mu_{T^*M},
\end{equation}
for all $h \in (0, h_0]$ and for all test sections $\Psi\in\Gamma_{\mathrm c}(F)$ and $\Phi\in\Gamma_{\mathrm c}(E)$.  Conversely, let $\hat{A}=(\hat A_h)_{h \in (0, h_0]}$ be a family of differential operators $\hat A_h\colon\Gamma_{\mathrm c}(E)\to\Gamma(F)$. A symbol $a=(a_h)_{h \in (0, h_0]}\in h^{-l}\mathcal S_h^k(M;E,F)$ is called a Weyl symbol of $\hat{A}$ if the above identity holds for all $h \in (0, h_0]$ and for all test sections $\Psi\in\Gamma_{\mathrm c}(F)$ and $\Phi\in\Gamma_{\mathrm c}(E)$.
\end{definition}

If we insert the definition of the Wigner function into the above equation and perform the coordinate transformation $(z,u)\mapsto (x,y):=\left( z - \frac{u}{2}, z + \frac{u}{2} \right)$, we obtain
\begin{align}
    (\Psi, \hat{A}_h \Phi )_{F} &= \int_{T^*M} a_h(z,p) \cdot W_h [\Psi,\Phi](z,p)\, d\mu_{T^*M}(z,p) \nonumber \\
    &= \int_{M \times M} \langle\Psi(x), A_h(x, y) \Phi(y)\rangle_{F}\, d\mu_{M \times M}(x,y)\label{eq:Schwartz},
\end{align}
where in the second line we used the Schwartz kernel theorem for bundle-valued operators on manifolds \cite[Thm.~1.5.1]{Tarkhanov}, which asserts that any linear operator $\hat{A}_h\colon\Gamma_{\mathrm{c}}(E)\to\Gamma(F)$ is continuous with respect to the relevant LF-space and Fréchet topologies if and only if it is induced by a \textit{(Schwartz) kernel} $A_h(\cdot,\cdot)\in\mathcal{D}^{\prime}(M\times M,F\boxtimes E^{\ast})$, i.e.~$A_h(x,y)\colon E_{y}\to F_{x}$ for all $(x,y)\in M\times M$, as in~\cref{eq:Schwartz}, where
\begin{align}
    F\boxtimes E^{\ast}:=(\mathrm{pr}_{1}^{\ast}F)\otimes_{M\times M}(\mathrm{pr}_{2}^{\ast}E^{\ast})
\end{align}
denotes the outer tensor product of vector bundles ($\mathrm{pr}_{1,2}$ are the projectors to the factors of $M \times M$). This equation allows us to express the kernel in terms of the symbol, and we find 
\begin{equation} \label{eq:symbol2kernel}
    A_h(x,y) = \Delta^{1 - \gamma}(x,y)\chi(x,y) \int_{T^*_z M} \J{x}{z} a_h(z,p) \J{z}{y} e^{-\frac{i p \cdot u}{h}} \frac{d \mu_{T^*_z M}(p)}{(2 \pi h)^d} ,
\end{equation}
which has to be understood in the sense of oscillatory integrals. In this equation, we have $z = x + \frac{1}{2}(y-x)$ and $u = \J{z}{x}(y-x)$. By taking the inverse Fourier transform, we obtain a Weyl symbol
\begin{equation} \label{eq:kernel2symbol}
    a_h(z,p) = \int_{T_z M} \J{z}{x} A_h(x, y) \J{y}{z} e^{\frac{i p \cdot u}{h}} \Delta^{\gamma - 1}(x,y)\, d\mu_{T_z M}(u) \qquad \mathrm{mod} \quad h^\infty \mathcal{S}_h^{-\infty}(M; E, F) ,
\end{equation}
where $x = z - \frac{u}{2}$ and $y = z + \frac{u}{2}$.

At this point, it is important to emphasize that the Weyl quantization introduced in \cref{Def:WeylQuant} depends on the choice of cut-off function $\chi$ near the diagonal. However, different choices of cut-off lead to quantizations that differ by an element of $h^\infty\Psi_h^{-\infty}(M;E,F)$.

\begin{proposition} \label{Prop:3.1}
    For any semiclassical symbol $a \in h^{-l} \mathcal{S}_h^{k}(M;E,F)$, its Weyl quantization $\hat A=(\hat A_h)_{h \in (0, h_0]}$, with
$\hat A_h\colon\Gamma_{\mathrm c}(E)\to\Gamma(F)$ given in \cref{Def:WeylQuant}, has the following two properties:
    \begin{itemize}
        \item[(i)] $\hat A\in h^{-l}\Psi_h^k(M;E,F)$.
        \item[(ii)] $\hat A$ is independent of the cut-off $\chi$ up to an element of $h^\infty\Psi_h^{-\infty}(M;E,F)$.
    \end{itemize}
    In other words, the Weyl quantization defines a linear map
    \begin{align}
        h^{-l} \mathcal{S}_h^{k}(M;E,F) \to h^{-l} \Psi_h^{k}(M;E,F)/ h^{\infty} \Psi_h^{-\infty}(M;E,F),\qquad a\mapsto \hat{A}.
    \end{align}
\end{proposition}

\begin{proof}
First, note that it is immediate that the singular support of the Schwartz kernel
\begin{equation}
    A_h(\cdot,\cdot)\in\mathcal D'(M\times M,F\boxtimes E^*)
\end{equation}
of $\hat A_h$, given by \cref{eq:symbol2kernel}, is contained in the diagonal $\delta:=\{(x,x)\in M\times M\mid x\in M\}$. Indeed, the phase function $\varphi(p,u):=p\cdot u$, with $u=\J{z}{x}(y-x)$, is stationary in $p$ if and only if $u=0$, and this is equivalent to $x=y$. Moreover, in local coordinates and trivializations, the kernel in \cref{eq:symbol2kernel} has the standard local form of a semiclassical pseudodifferential kernel with symbol in $h^{-l}\mathcal S_h^k$. Hence, $\hat A\in h^{-l}\Psi_h^k(M;E,F)$.

Now, let $\chi'\in C^\infty(M\times M)$ be another admissible cut-off function. The difference of the two kernels is obtained from \cref{eq:symbol2kernel} by replacing $\chi$ with $\chi-\chi'$. Since $\chi-\chi'$ vanishes in a neighborhood of the diagonal, the phase $\varphi(p,u)=p\cdot u$ has no stationary point on $\mathrm{supp}(\chi-\chi^{\prime})$. Equivalently, on compact subsets of $M\times M$ intersected with this support, one has $|u|\ge c>0$ for some constant $c>0$. Therefore, considering the regularization operator
\begin{equation}
    L:=\frac{i h}{|u|^2}u^\alpha\partial_{p_\alpha},
\end{equation}
defined on sections supported on $\mathrm{supp}(\chi-\chi^{\prime})$, which satisfies
\begin{equation}
    L e^{-\frac{i}{h}p\cdot u}=e^{-\frac{i}{h}p\cdot u}\qquad\text{on}\qquad \mathrm{supp}(\chi-\chi^{\prime}),
\end{equation}
repeated integration by parts in $p$ shows that the difference kernel is smooth and that all of its $C^\infty$-seminorms on compact subsets are $\mathcal O(h^N)$ for every $N\in\mathbb N$. Thus, the two choices of cut-off functions yield operators that differ by an element of $h^\infty\Psi_h^{-\infty}(M;E,F)$.
\end{proof}

\subsection{Arbitrary $\tau$-quantizations}

The Weyl quantization defined above is special in the sense that it involves the midpoint $z$ between $x = z - \frac{u}{2}$ and $y = z + \frac{u}{2}$. Although working with the midpoint has certain advantages (see, for instance, Section~\ref{Subsec:Hermitian} below), it is also possible to define other quantizations based at different points on the geodesic segment connecting $x$ and $y$. We refer to these as $\tau$-quantizations, which are based at $z_\tau = x + \tau(y-x)$, with $\tau \in [0, 1]$. In particular, $\tau = \frac{1}{2}$ gives the Weyl quantization, and $\tau = 0$ is known as the Kohn-Nirenberg quantization \cite{KohnNirenberg,Hormander}.

In uniform pseudodifferential calculus on $\mathbb{R}^{n}$, it is well known that one can transition between these different quantization schemes. That is, any pseudodifferential operator on $\mathbb{R}^{n}$ can be realized as the $\tau$-quantization of a symbol for any $\tau \in [0,1]$, and the corresponding symbols for different values of $\tau$ are related by an asymptotic expansion of one another; see, for example, the classical treatment in \cite[Thm.~23.2 and 23.3]{Shubin}. In this section, we establish the analogous property for the covariant quantization framework on manifolds developed in this article, generalizing the results obtained in \cite{Safarov,dls_weyl}.

To define the $\tau$-quantization, we introduce the $h$-dependent \textit{$\tau$-Wigner function}
\begin{equation}
         W_{h,\tau}[\Psi, \Phi](z, p) := \int_{T_{z} M } \J{z}{x_\tau} \Psi^* \left( x_\tau \right) \otimes  \J{z}{y_\tau} \Phi \left(y_\tau\right) e^{-\frac{i p \cdot u}{h}} \chi\left(x_\tau, y_\tau \right)\Delta^{-\gamma}\left(x_\tau, y_\tau \right) \dfrac{ d\mu_{T_z M} (u) }{(2\pi h)^d} ,
\end{equation}
where $x_\tau := z - \tau u$ and $y_\tau := z + (1 - \tau) u$; see also \cite{PflaumWeyl, LiuQian}, which contains a similar definition in the scalar case with $\gamma=0$.

The $\tau$-symbol $a_\tau=(a_{h,\tau})_{h \in (0, h_0]}$ of a family of operators
$\hat A=(\hat A_h)_{h \in (0, h_0]}$ is defined by
\begin{equation}
    (\Psi, \hat{A}_h \Phi)_{F} = \int_{T^*M} a_{h,\tau} \cdot W_{h,\tau} [\Psi,\Phi]\, d\mu_{T^*M} \qquad \forall h \in (0,h_0].
\end{equation}
Using this definition, we can express the kernel in terms of the $\tau$-symbol as
\begin{equation}
\label{kernel:tau}
    A_h (x,y) = \Delta^{1-\gamma}(x,y)\chi(x,y) \int_{T^*_{z_\tau} M} \J{x}{z_\tau} a_{h,\tau} (z_\tau, p) \J{z_\tau}{y} e^{-\frac{i p \cdot u_\tau}{h}} \frac{d \mu_{T^*_{z_\tau} M}(p)}{(2 \pi h)^d},
\end{equation}
where $u_\tau = \J{z_\tau}{x}(y-x)$. By taking the inverse Fourier transform, we obtain a $\tau$-symbol
\begin{equation}
\label{symbol:tau}
    a_{h,\tau} (z,p) = \int_{T_z M} \J{z}{x_\tau} A_h(x_\tau, y_\tau) \J{y_\tau}{z} e^{\frac{i p \cdot u}{h}} \Delta^{\gamma - 1}(x_\tau, y_\tau)\, d\mu_{T_z M}(u)\qquad \mathrm{mod} \quad h^\infty \mathcal{S}_h^{-\infty} .
\end{equation}

In general, given an operator $\hat{A}$ with kernel $A(x,y)$, its corresponding symbols $a_\tau$ and $a_{\tau'}$ for different $\tau$-quantizations can be asymptotically related as in \cite{Safarov,dls_weyl,Shubin}.

\begin{proposition}\label{Prp:tauchange}
Let $\tau,\sigma\in[0,1]$, and let $a_\tau=(a_{h,\tau})_{0<h\le h_0}$ and $a_\sigma=(a_{h,\sigma})_{0<h\le h_0}$ be elements of $h^{-l}\mathcal{S}_h^k(M;E,F)$ which are, respectively, $\tau$- and $\sigma$-symbols whose corresponding quantized operators coincide modulo $h^\infty\Psi_h^{-\infty}(M;E,F)$. Let $\gamma,\gamma'\in\mathbb R$ be the powers of the van Vleck-Morette determinants used in the $\tau$- and $\sigma$-quantizations, respectively. Then, $a_\tau$ and $a_\sigma$ are related by the asymptotic expansion
\begin{equation}
    a_{h,\tau}(z,p) \sim \sum_{\boldsymbol\beta} f_{\boldsymbol\beta}(z) \left(-i h\partial_p\right)^{\boldsymbol\beta} e^{-ih(\sigma-\tau)\partial_p\cdot\hnabla} a_{h,\sigma}(z,p),
\end{equation}
in $h^{-l}\mathcal{S}_h^k(M;E,F)$, modulo $h^\infty\mathcal{S}_h^{-\infty}(M;E,F)$. Here, $\partial_p$ denotes the vertical derivative, so that $\partial_p\cdot\hnabla=\vnabla^\mu\hnabla_\mu$, and the coefficients $f_{\boldsymbol\beta}(z)$ are defined by the Taylor expansion around $z$ of the Van Vleck-Morette term
\begin{equation}
    \Delta^{\gamma-\gamma'}(x_\tau,y_\tau) \sim \sum_{\boldsymbol\beta} u^{\boldsymbol\beta}f_{\boldsymbol\beta}(z).
\end{equation}
The coefficients $f_{\boldsymbol\beta}$ depend on $\tau,\gamma,\gamma'$.
\end{proposition}

\begin{proof}
Using \cref{symbol:tau} to express the $\tau$-symbol in terms of the kernel and \cref{kernel:tau} to express the kernel in terms of the $\sigma$-symbol, we obtain
\begin{align}
       &a_{h,\tau} (z,p) = \int_{T_z M} \J{z}{x_\tau} A_h(x_\tau, y_\tau) \J{y_\tau}{z} e^{\frac{i p \cdot u}{h}} \Delta^{\gamma - 1}(x_\tau, y_\tau)\, d\mu_{T_z M}(u) \nonumber \\
       &= \int_{T_z M \times T^*_{z_{\sigma}}M} \J{z}{z_\sigma} a_{h,\sigma} (z_\sigma, p') \J{z_\sigma}{z} \, e^{\frac{i (p \cdot u - p' \cdot u_\sigma)}{h}} \chi(x_\tau, y_\tau) \Delta^{\gamma-\gamma'}(x_\tau, y_\tau)\frac{{d\mu}_{T_z M \times T^*_{z_{\sigma}}M }(u, p')}{(2\pi h)^d} \nonumber \\
       &=\int_{T_z M \times T^*_z M} \J{z}{z_\sigma} a_{h,\sigma} (z_\sigma, \J{z_\sigma}{z} q ) \J{z_\sigma}{z} \, e^{\frac{i (p - q) \cdot u}{h}} \chi(x_\tau, y_\tau) \Delta^{\gamma-\gamma'}(x_\tau, y_\tau) \frac{{d\mu}_{T_z M \times T^*_z M }(u, q)}{(2\pi h)^d},
\end{align}
where the above equations hold modulo $h^\infty \mathcal{S}_h^{-\infty}(M;E,F)$, and the third equality follows after the change of coordinates $p' = \J{z_\sigma}{z} q$.
Next, we Taylor-expand around $z$ the individual terms in the above integral. We have
\begin{subequations}
\begin{align}
    \J{z}{z_\sigma} a_{h,\sigma}(z_\sigma, \J{z_\sigma}{z} q ) \J{z_\sigma}{z} &\sim \sum_{\boldsymbol{\alpha}} \frac{(\sigma - \tau)^{|\boldsymbol{\alpha}|}}{\boldsymbol{\alpha}!} \, u^{\boldsymbol{\alpha}} \hnabla_{\boldsymbol{\alpha}} \, a_{h,\sigma}(z,q), \\
    \Delta^{\gamma-\gamma'}(x_\tau, y_\tau) &\sim \sum_{\boldsymbol{\beta}} u^{\boldsymbol{\beta}} f_{\boldsymbol{\beta}} (z), \\
    \chi(x_\tau, y_\tau) &\sim 1,
\end{align}    
\end{subequations}
where $\boldsymbol{\alpha}, \boldsymbol{\beta}$ are multi-indices, we used \cref{lemma:derivative} for the expansion of the symbol, where $\hnabla_{\boldsymbol{\alpha}}$ denotes the corresponding symmetrized horizontal derivative. The expansion of the Van Vleck-Morette determinant can be calculated to arbitrary orders as in \cite{visser,Poisson}, and the expansion of the cut-off function around $z$ is trivial by definition. Using these expressions, we obtain
\begin{align}
    &a_{h,\tau}(z,p) \sim \int_{T_z M \times T^*_z M} \left[ \sum_{\boldsymbol{\alpha}} \frac{(\sigma - \tau)^{|\boldsymbol{\alpha}|}}{\boldsymbol{\alpha}!} \, u^{\boldsymbol{\alpha}} \hnabla_{\boldsymbol{\alpha}} \, a_{h,\sigma}(z,q) \right] e^{\frac{i (p - q) \cdot u}{h}} \sum_{\boldsymbol{\beta}} u^{\boldsymbol{\beta}} f_{\boldsymbol{\beta}}(z) \frac{{d\mu}_{T_z M \times T^*_z M }(u, q)}{(2\pi h)^d} \nonumber \\  
    &=  \int_{T_z M \times T^*_z M} \left\{ \sum_{\boldsymbol{\alpha}, \boldsymbol{\beta}} \frac{(\sigma - \tau)^{|\boldsymbol{\alpha}|}}{\boldsymbol{\alpha}!} \, \left[ f_{\boldsymbol{\beta}}(z) (i h \partial_q)^{\boldsymbol{\beta}} (i h \partial_q)^{\boldsymbol{\alpha}}  e^{\frac{i(p - q) \cdot u}{h} }\right] \hnabla_{\boldsymbol{\alpha}} \, a_{h,\sigma}(z,q) \right\} \frac{{d\mu}_{T_z M \times T^*_z M }(u, q)}{(2\pi h)^d} \nonumber \\
    &= \int_{T_z M \times T^*_z M} e^{\frac{i(p - q) \cdot u}{h} } \left\{ \sum_{\boldsymbol{\alpha}, \boldsymbol{\beta}} \frac{(\sigma - \tau)^{|\boldsymbol{\alpha}|}}{\boldsymbol{\alpha}!} f_{\boldsymbol{\beta}}(z) \left( - i h \partial_q\right)^{\boldsymbol{\beta}} \left( - i h \partial_q\right)^{\boldsymbol{\alpha}} \hnabla_{\boldsymbol{\alpha}} a_{h,\sigma}(z,q) \right\} \frac{{d\mu}_{T_z M \times T^*_z M }(u, q)}{(2\pi h)^d} \nonumber \\
    &= \int_{T^*_z M} \delta(p - q) \left\{ \sum_{\boldsymbol{\alpha}, \boldsymbol{\beta}} \frac{(\sigma - \tau)^{|\boldsymbol{\alpha}|}}{\boldsymbol{\alpha}!} f_{\boldsymbol{\beta}}(z) \left( - i h \partial_q\right)^{\boldsymbol{\beta}} \left( - i h \partial_q\right)^{\boldsymbol{\alpha}} \hnabla_{\boldsymbol{\alpha}} a_{h,\sigma}(z,q) \right\} d\mu_{T^*_z M }(q) \nonumber \\
    &= \sum_{\boldsymbol{\alpha}, \boldsymbol{\beta}} \frac{(\sigma - \tau)^{|\boldsymbol{\alpha}|}}{\boldsymbol{\alpha}!} f_{\boldsymbol{\beta}}(z) \left( - i h \partial_p\right)^{\boldsymbol{\beta}} \left( - i h \partial_p\right)^{\boldsymbol{\alpha}} \hnabla_{\boldsymbol{\alpha}} a_{h,\sigma}(z,p) \nonumber \\
    &= \sum_{\boldsymbol{\beta}} f_{\boldsymbol{\beta}}(z)  \left( - i h \partial_p\right)^{\boldsymbol{\beta}} e^{- i h (\sigma - \tau)\partial_p \cdot \hnabla} \, a_{h,\sigma}(z,p).
\end{align}
All equalities above are understood modulo $h^\infty\mathcal{S}_h^{-\infty}(M;E,F)$. In the first line of the above equation, we inserted the Taylor expansions of the individual terms. The second line follows by replacing the $u$ terms with $\partial_q$ derivatives acting on the exponential, and the third line is obtained after integration by parts. Then, we integrate over $u$ to obtain the fourth line, and we integrate over $q$ to obtain the fifth line. The final equality follows by summing over the multi-index $\boldsymbol{\alpha}$ and using the multinomial expansion of the exponential. Note that, using the commutation properties of vertical and horizontal derivatives, we can swap the order of $\left( - i h \partial_p\right)^{\boldsymbol{\beta}}$ and $e^{- i h (\sigma - \tau)\partial_p \cdot \hnabla}$.
\end{proof}

\subsection{Star product}

Consider two linear differential operators $\hat{A}\colon \Gamma(F)\to \Gamma(G)$ and $\hat{B}\colon \Gamma(E)\to \Gamma(F)$ of orders $k_1$ and $k_2$, respectively. As discussed in \cref{example:semi_diff_op}, we may regard them as semiclassical differential operators $\hat A\in h^{-k_1}\mathrm{Diff}_h^{k_1}(M;F,G)$ and $\hat B\in h^{-k_2}\mathrm{Diff}_h^{k_2}(M;E,F)$. Let $a=(a_h)_{h \in (0, h_0]}\in h^{-k_1}\mathcal{S}_h^{k_1}(M;F,G)$ and $b=(b_h)_{h \in (0, h_0]}\in h^{-k_2}\mathcal{S}_h^{k_2}(M;E,F)$ denote their corresponding Weyl symbols. Thus, for each fixed $h$, the symbols are sections
\begin{equation}
    a_h\in\Gamma(T^*M,\pi^*(F^*\otimes G)), \qquad
    b_h\in\Gamma(T^*M,\pi^*(E^*\otimes F)).
\end{equation}
Then the composition $\hat A\circ\hat B$ is again a linear differential operator of order at most $k_1+k_2$, and hence has a Weyl symbol $c=(c_h)_{h \in (0, h_0]}\in h^{-(k_1+k_2)}\mathcal S_h^{k_1+k_2}(M;E,G)$, modulo $h^\infty \mathcal{S}_h^{-\infty}(M;E,G)$. We define
\begin{equation}
    c:=a\star b,
\end{equation}
where, for every fixed $h$, one has $c_h:=a_h\star b_h \in\Gamma(T^*M,\pi^*(E^*\otimes G))$. The so-defined associative binary operation $\star$ is called the \emph{star product}. The goal of this section is to provide an explicit description of this product.

To define the star product in the pseudodifferential setting, we begin by recalling the following terminology. Let
$\hat D=(\hat D_h)_{h \in (0, h_0]}$ be a family of linear and continuous operators
$\hat D_h\colon\Gamma_{\mathrm c}(E)\to\Gamma(F)$, with corresponding Schwartz kernels $D_h(x,y)\in\mathcal{D}^{\prime}(M\times M,F\boxtimes E^*)$. The operator $\hat D$ is said to be \emph{properly supported} if there exists a closed set
$K_D\subset M\times M$ such that $\mathrm{supp}(D_h)\subset K_D$ for all $h\in(0,h_0]$ and such that the two canonical projections
\begin{equation}
    \mathrm{pr}_{1,2}\colon K_D\to M
\end{equation}
are proper maps. That is, the preimage of any compact subset of $M$ under either projection is compact. In particular, each properly supported operator $\hat D_h\colon\Gamma_{\mathrm c}(E)\to\Gamma(F)$ admits a unique continuous extension $\hat D_h \colon \Gamma(E) \to \Gamma(F)$.

The following lemma is the semiclassical version of a well-known result from pseudodifferential calculus; see \cite[Rmk.~3.3]{Sjorstrand} and \cite[Prop.~3.3]{Shubin}.

\begin{lemma} \label{Lemma:PropSup}
Let $d=(d_h)_{h \in (0, h_0]}\in h^{-l}\mathcal S_h^k(M;E,F)$ be a semiclassical symbol, and let $\hat D=(\hat D_h)_{h \in (0, h_0]}$ be its Weyl quantization. Then, there exists a properly supported family of operators $\hat D^{\mathrm{prop}}\in h^{-l}\Psi_h^k(M;E,F)$ such that $\hat D-\hat D^{\mathrm{prop}} \in h^\infty\Psi_h^{-\infty}(M;E,F)$. In this sense, $\hat D$ is properly supported modulo $h^\infty\Psi_h^{-\infty}(M;E,F)$.
\end{lemma}

\begin{proof}
Let $\eta\in C^\infty(M\times M)$ be such that $\eta(x,y)=1$ in a neighborhood of the diagonal $\delta=\{(x,x)\mid x\in M\}$ and such that the projections $\mathrm{pr}_{1,2}\colon\mathrm{supp}(\eta)\to M$ are proper maps (for the existence of such a cut-off function, see \cite[p.~29]{Sjorstrand}). We decompose the Schwartz kernel of $\hat D_h$ as
\begin{align} 
        D_h(x,y) =& \Delta^{1 - \gamma}(x,y)\chi(x,y)\eta(x,y) \int_{T^*_z M} \J{x}{z} d_h(z,p) \J{z}{y} e^{-\frac{i p \cdot u}{h}} \frac{d \mu_{T^*_z M}(p)}{(2 \pi h)^d}\nonumber\\ 
        &+\Delta^{1 - \gamma}(x,y)\chi(x,y)[1-\eta(x,y)] \int_{T^*_z M} \J{x}{z} d_h(z,p) \J{z}{y} e^{-\frac{i p \cdot u}{h}} \frac{d \mu_{T^*_z M}(p)}{(2 \pi h)^d} ,
    \end{align}
where $z=x+\frac{1}{2}(y-x)$ and $u=\J{z}{x}(y-x)$. Note that the product $\chi(x,y)\eta(x,y)$ still has the property that the projections $\mathrm{pr}_{1,2}\colon\mathrm{supp}(\chi\eta)\to M$ are proper maps. Therefore, the first term in the above equation defines a properly supported semiclassical pseudodifferential operator in $h^{-l}\Psi_h^k(M;E,F)$.

It remains to consider the second term. Since $1-\eta$ vanishes in a neighborhood of the diagonal, the phase $\varphi(p,u)=p\cdot u$ has no stationary point in $p$ on the support of this term. Hence, by the same integration-by-parts argument as in the proof of \cref{Prop:3.1}, this second summand is a smooth kernel whose $C^\infty$-seminorms on compact subsets are $\mathcal O(h^N)$ for every $N\in\mathbb N$. Therefore, it defines an element of $h^\infty\Psi_h^{-\infty}(M;E,F)$.

Thus, the first term gives the desired properly supported representative $\hat D^{\mathrm{prop}}$, and the difference $\hat D-\hat D^{\mathrm{prop}}$ belongs to $h^\infty\Psi_h^{-\infty}(M;E,F)$.
\end{proof}

After this general discussion, consider two families of linear operators $\hat{A}=(\hat A_h)_{h \in (0, h_0]}$ and $\hat{B}=(\hat B_h)_{h \in (0, h_0]}$, with $\hat A_h\colon \Gamma_{\mathrm c}(F)\to \Gamma(G)$ and $\hat B_h\colon \Gamma_{\mathrm c}(E)\to \Gamma(F)$, obtained as the Weyl quantizations of arbitrary semiclassical symbols $a=(a_h)_{h \in (0, h_0]}\in h^{-l_1}\mathcal{S}_h^{k_1}(M;F,G)$ and $b=(b_h)_{h \in (0, h_0]}\in h^{-l_2}\mathcal{S}_h^{k_2}(M;E,F)$. For each fixed $h$, these symbols are sections $a_h\in \Gamma(T^*M,\pi^*(F^*\otimes G))$ and $b_h\in \Gamma(T^*M,\pi^*(E^*\otimes F))$. Following \cref{Lemma:PropSup}, the operators $\hat A$ and $\hat B$ can be replaced, modulo $h^\infty\Psi_h^{-\infty}$, by properly supported representatives. Hence, their composition $\hat A\circ\hat B$ is well defined modulo $h^\infty\Psi_h^{-\infty}(M;E,G)$. The resulting operator belongs to $h^{-(l_1+l_2)}\Psi_h^{k_1+k_2}(M;E,G)$ and again admits a Weyl symbol $c:=a\star b$. In particular, the star product gives rise to a well-defined map
\begin{align}\label{eq:starQuo}
    \star\colon \frac{h^{-l_1}\mathcal S_h^{k_1}(M;F,G)}{h^\infty\mathcal S_h^{-\infty}(M;F,G)}
    \times
    \frac{h^{-l_2}\mathcal S_h^{k_2}(M;E,F)} {h^\infty\mathcal S_h^{-\infty}(M;E,F)}
    \to
    \frac{h^{-(l_1+l_2)}\mathcal S_h^{k_1+k_2}(M;E,G)}{h^\infty\mathcal S_h^{-\infty}(M;E,G)} .
\end{align}

The following theorem is the vector bundle generalization of the corresponding result in \cite[Thm.~3.9]{dls_weyl}. The proof follows the same general steps, but additional terms arise from the parallel transport operators. These terms will contribute to the asymptotic expansion of the star product.

\begin{theorem}[Star product] \label{th:star_product}
Let $\hat A=(\hat A_h)_{h \in (0, h_0]}$ and $\hat B=(\hat B_h)_{h \in (0, h_0]}$ be the Weyl quantizations of the symbols $a=(a_h)_{h \in (0, h_0]} \in h^{-l_1}\mathcal{S}_h^{k_1}(M;F,G)$ and $b=(b_h)_{h \in (0, h_0]} \in h^{-l_2}\mathcal{S}_h^{k_2}(M;E,F)$, respectively. Thus, $\hat A_h\colon\Gamma_{\mathrm c}(F)\to\Gamma(G)$ and $\hat B_h\colon\Gamma_{\mathrm c}(E)\to\Gamma(F)$ for every $h$. After replacing $\hat A$ and $\hat B$ with properly supported representatives modulo $h^\infty\Psi_h^{-\infty}$, their composition defines a family $\hat C=(\hat C_h)_{h\in(0,h_0]}$ with $\hat C_h:=\hat A_h\circ\hat B_h$.

Then, the Weyl symbol $c=(c_h)_{h \in (0, h_0]} \in h^{-(l_1+l_2)}\mathcal S_h^{k_1+k_2}(M;E,G)$ of the operator family $\hat C = (\hat{C}_h)_{h \in (0, h_0]}$ is given, modulo $h^\infty\mathcal S_h^{-\infty}(M;E,G)$, by the star product $c=a\star b$, with $c_h = a_h \star b_h$. For each fixed $h\in(0,h_0]$, the symbol $c_h=a_h\star b_h$ is given by the following oscillatory integral and admits the corresponding semiclassical asymptotic expansion:  

\begin{subequations}
\begin{align} 
    (a_h &\star b_h)(z,p) = (\pi h)^{-2 d} \int_{N_{z}} d\mu_{N_z}(u_1,u_2,p_1,p_2)\, \Lambda(z, u_1, u_2)e^{\frac{2 i p \cdot (w + u_1 - u_2)}{h}} e^{\frac{2 i (p_2 \cdot u_1 - p_1 \cdot u_2) }{h}} \nonumber \\
    &\quad\times \Big[\J{z}{z - w} \cdot \J{z - w}{z+v_1} a_h(z+v_1, \J{z+v_1}{z} (p+p_1)) \J{z+v_1}{z+\tilde{w}} \Big] \nonumber \\
    &\quad\times\Big[ \J{z+\tilde{w}}{z+v_2} b_h(z+v_2, \J{z+v_2}{z} (p+p_2)) \J{z+v_2}{z+w} \cdot \J{z+w}{z} \Big] \quad\mathrm{mod}\quad h^{\infty}\mathcal{S}_h^{-\infty}(M;E,G) \label{eq:star1} \\
    \sim& e^{\frac{i h}{2} \left( \partial_{u_1} \cdot \partial_{p_2} - \partial_{u_2} \cdot \partial_{p_1} \right)} \Lambda(z, u_1, u_2) e^{\frac{2 i p \cdot (w + u_1 - u_2)}{h}} \nonumber \\
    &\quad \times {\mathbb{H}_z(\nabla^{\pi^*G})} \Big[\J{z}{z+v_1} a_h(z+v_1, \J{z+v_1}{z} (p+p_1)) \J{z+v_1}{z} \Big] \mathbb{H}_z(\nabla^{\pi^*F}) \nonumber \\
    &\quad\times \Big[ \J{z}{z+v_2} b_h(z+v_2, \J{z+v_2}{z} (p+p_2)) \J{z+v_2}{z} \Big] {\mathbb{H}_z(\nabla^{\pi^*E})} \Bigg|_{\substack{u_1 = u_2 = 0\\p_1=p_2=0}}, \label{eq:star2}
\end{align}
\end{subequations}
where $N_{z}:= T_zM \times T_zM \times T^*_zM \times T^*_zM$, the geometric factor $\Lambda(z,u_{1},u_{2})$ is defined by
\begin{align}\label{eq:GeomFact}
    \Lambda(z,u_{1},u_{2}):=2^{-d} \bigg\vert\frac{\partial (w,\tilde{w})}{\partial (u_{1},u_{2})}\bigg\vert \frac{ \Delta(z-w,z+\tilde{w})^{1-\gamma} \Delta(z+w,z+\tilde{w})^{1-\gamma} }{ \Delta(z-w,z+w)^{1-\gamma}\Delta(z,z+\tilde{w})},
\end{align}
and where $v_{1},v_{2},w,\tilde{w}\in T_{z}M$ are tangent vectors on $z$ depending on $u_{1},u_{2}\in T_{z}M$ and are completely determined by the set of equations
\begin{subequations}\label{eq:Vectors}
\begin{align}
    z-w=(z+v_{1})-\J{z+v_{1}}{z}(u_{2}),\\
    z+w=(z+v_{2})-\J{z+v_{2}}{z}(u_{1}),\\
    z+\tilde{w}=(z+v_{1})+\J{z+v_{1}}{z}(u_{2}),\\
    z+\tilde{w}=(z+v_{2})+\J{z+v_{2}}{z}(u_{1}).
\end{align}
\end{subequations}
The term $\mathbb{H}_z(\nabla^{\pi^*G})$ is the holonomy of the connection $\nabla^{\pi^*G}$ at $z$ along the geodesic triangle $z \mapsto z+v_1 \mapsto z-w \mapsto z$ (around the green triangle in \cref{Fig:Triangle}), $\mathbb{H}_z(\nabla^{\pi^*F})$ is the holonomy of the connection $\nabla^{\pi^*F}$ at $z$ along the geodesic loop $z \mapsto z+v_2 \mapsto z+\tilde{w} \mapsto z+v_1 \mapsto z$ (around the blue quadrilateral in \cref{Fig:Triangle}), and $\mathbb{H}_z(\nabla^{\pi^*E})$ is the holonomy of the connection $\nabla^{\pi^*E}$ at $z$ along the geodesic triangle $z \mapsto z+w \mapsto z+v_2 \mapsto z$ (around the orange triangle in \cref{Fig:Triangle}).
\end{theorem}

\begin{proof}
The proof follows the same steps as in \cite[Thm.~3.9]{dls_weyl}, but additional terms related to the holonomies of the nontrivial bundle connections arise. The geometric definition of the vectors $v_{1},v_{2},w,\tilde{w}\in T_{z}M$ and their relation to $u_{1}, u_{2}$ is the same as in \cite[Thm.~3.9]{dls_weyl}. We illustrate this in Figure~\ref{Fig:Triangle}. All equalities between operator families below are understood modulo $h^\infty\Psi_h^{-\infty}$, and all equalities between symbol families are understood modulo $h^\infty\mathcal S_h^{-\infty}$.     

\begin{figure}[t!]
\centering
\includegraphics[scale=1.2]{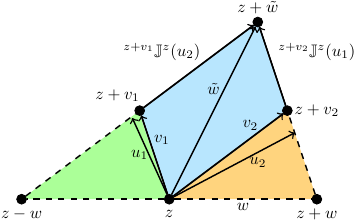}
\caption{The vectors $v_{1},v_{2},w,\tilde{w}\in T_{z}M$ depending on $u_{1},u_{2}\in T_{z}M$ and $z\in M$ as defined by the relations~\eqref{eq:Vectors}. The points $z,z+v_{1},z+v_{2}$ are the midpoints of the edges of the triangle spanned by $z-w,z+w,z+\tilde{w}$. Furthermore, note that $u_{1}=v_{1}$ and $u_{2}=v_{2}$ in the flat case. The colored geodesic triangles correspond to the holonomy terms that arise in the definition of the star product.
\label{Fig:Triangle}} 
\end{figure} 

By \cref{Lemma:PropSup}, we may replace $\hat A$ and $\hat B$ with properly supported representatives. This does not change the resulting Weyl symbol modulo $h^\infty\mathcal S_h^{-\infty}(M;E,G)$. For each fixed $h\in(0,h_0]$, let $A_h(x,y)$, $B_h(x,y)$, and $C_h(x,y)$ denote the Schwartz kernels of $\hat A_h$, $\hat B_h$, and $\hat C_h:=\hat A_h\circ\hat B_h$, respectively. Then
\begin{equation} \label{eq:C_def}
    C_h(x, y) = \int_M A_h(x, \tilde{z}) B_h(\tilde{z}, y) d \mu_M(\tilde{z}).
\end{equation}
We denote by $z$ the midpoint between $x$ and $y$, and we introduce $w, \tilde{w} \in T_z M$ by
\begin{equation}
    x = z - w, \qquad y = z + w, \qquad \tilde{z} = z + \tilde{w}. 
\end{equation}
Then, we can rewrite \cref{eq:C_def} as
\begin{align} \label{eq:C_intermediate}
    C_h(z - w, z + w) &= \int_M A_h(z - w, \tilde{z}) B_h(\tilde{z}, z + w) d \mu_M(\tilde{z}) \nonumber \\
    &= \int_{T_z M} A_h(z - w, z+ \tilde{w}) B_h(z+ \tilde{w}, z + w) \Delta^{-1}(z, z + \tilde{w}) d\mu_{T_z M}(\tilde{w}).
\end{align}
In the second line of the above equation, we performed the change of variables $\tilde{w} = \tilde{z} - z$, for which $d \mu_{T_z M}(\tilde{w}) = \Delta(z, \tilde{z}) d \mu_{M}(\tilde{z})$. From the above equation, the Weyl symbol $c_h = a_h \star b_h$ is obtained by applying \cref{eq:kernel2symbol}, which gives ($\mathrm{mod} \, \, h^{\infty}\mathcal{S}_h^{-\infty}$)
\begin{align} \label{eq:C_intermediate2}
    c_h(z, p) &= \int_{T_z M} \J{z}{z - \frac{u}{2}} C_h \left( z - \frac{u}{2}, z + \frac{u}{2} \right) \J{z + \frac{u}{2}}{z} e^{\frac{i p \cdot u}{h}} \Delta^{\gamma - 1} \left( z - \frac{u}{2}, z + \frac{u}{2} \right) \, d\mu_{T_z M}(u) \nonumber \\
    &= 2^d \int_{T_z M} \J{z}{z - w} C_h ( z - w, z + w ) \J{z + w}{z} e^{\frac{2 i p \cdot w}{h}} \Delta^{\gamma - 1} ( z - w, z + w ) \, d\mu_{T_z M}(w) \nonumber \\
    &= 2^d \int_{T_z M \times T_z M} \J{z}{z - w} A_h(z - w, z+ \tilde{w}) B_h(z+ \tilde{w}, z + w) \J{z + w}{z} e^{\frac{2 i p \cdot w}{h}} \nonumber \\
    & \qquad \times \Delta^{-1}(z, z + \tilde{w}) \Delta^{\gamma - 1} ( z - w, z + w ) \, d\mu_{T_z M}(w) d\mu_{T_z M}(\tilde{w}) \nonumber \\
    &= 2^d \int_{T_z M \times T_z M} \J{z}{z - w} A_h(z - w, z+ \tilde{w}) B_h(z+ \tilde{w}, z + w) \J{z + w}{z} e^{\frac{2 i p \cdot w}{h}} \nonumber \\
    & \qquad \times \Delta^{-1}(z, z + \tilde{w}) \Delta^{\gamma - 1} ( z - w, z + w ) \, \bigg| \frac{\partial (w, \tilde{w})}{\partial (u_1, u_2)} \bigg|  d\mu_{T_z M}(u_1) d\mu_{T_z M}(u_2).
\end{align}
The second equality in the above equation follows from the change of variable $w = \frac{u}{2}$, and for the third equality we inserted \cref{eq:C_intermediate}. Finally, the fourth equality follows from the change of variables from $(w, \tilde{w})$ to $(u_1, u_2)$, as defined in \cref{eq:Vectors}. 

We now express the kernels $A_h$ and $B_h$ in terms of their Weyl symbols. Let $z_1:=z+v_1$ be the midpoint between $z-w$ and $z+\tilde w$, and let $z_2:=z+v_2$ be the midpoint between $z+\tilde w$ and $z+w$. Then, using \cref{eq:symbol2kernel}, we get
\begin{align} \label{eq:A_starprod}
    A_h(z - w, z+ \tilde{w}) &= \Delta^{1-\gamma}(z - w, z+ \tilde{w})\chi(z - w, z+ \tilde{w}) \nonumber \\
    &\qquad \times \int_{T^*_{z_1} M} \J{z - w}{z_1} a_h(z_1, p_1) \J{z_1}{z + \tilde{w}} e^{-\frac{2 i p_1 \cdot \J{z_1}{z} u_2}{h}} \frac{d \mu_{T^*_{z_1} M}(p_1)}{(2 \pi h)^d} \nonumber \\
    &= \Delta^{1-\gamma}(z - w, z+ \tilde{w})\chi(z - w, z+ \tilde{w}) \nonumber \\
    &\qquad \times \int_{T^*_{z} M} \J{z - w}{z_1} a_h(z_1, \J{z_1}{z}q_1) \J{z_1}{z + \tilde{w}} e^{-\frac{2 i q_1 \cdot u_2}{h}} \frac{d \mu_{T^*_{z} M}(q_1)}{(2 \pi h)^d},
\end{align}
where in the second equality we used the change of variables $p_1 = \J{z_1}{z} q_1$, which gives $d\mu_{T^*_{z_1}M}(p_1) = d\mu_{T^*_{z}M}(q_1)$. Similarly, for the kernel $B_h$, we have
\begin{align} \label{eq:B_starprod}
    B_h(z + \tilde{w}, z + w) &= \Delta^{1-\gamma}(z + \tilde{w}, z + w)\chi(z + \tilde{w}, z + w) \nonumber \\
    &\qquad \times \int_{T^*_{z_2} M} \J{z + \tilde{w}}{z_2} b_h(z_2, p_2) \J{z_2}{z + w} e^{\frac{2 i p_2 \cdot \J{z_2}{z} u_1}{h}} \frac{d \mu_{T^*_{z_2} M}(p_2)}{(2 \pi h)^d} \nonumber \\
    &= \Delta^{1-\gamma}(z + \tilde{w}, z + w)\chi(z + \tilde{w}, z + w) \nonumber \\
    &\qquad \times \int_{T^*_{z} M} \J{z + \tilde{w}}{z_2} b_h(z_2, \J{z_2}{z}q_2) \J{z_2}{z + w} e^{\frac{2 i q_2 \cdot u_1}{h}} \frac{d \mu_{T^*_{z} M}(q_2)}{(2 \pi h)^d}.
\end{align}
Inserting these results into \cref{eq:C_intermediate2}, we obtain 
\begin{align}
    c_h (z, p) &= (\pi h)^{-2 d} \int_{N_z} \J{z}{z - w} \cdot \J{z - w}{z + v_1} a_h(z + v_1, \J{z+v_1}{z} q_1) \J{z+v_1}{z+ \tilde{w}} \nonumber\\
    & \qquad \qquad \qquad \times \J{z+\tilde{w}}{z+ v_2} b_h(z + v_2, \J{z+v_2}{z} q_2) \J{z+v_2}{z + w} \cdot \J{z+w}{z}  \nonumber \\
    & \qquad \qquad \qquad \times e^{\frac{2 i}{h} ( q_2 \cdot u_1 - q_1 \cdot u_2 + p \cdot w )} \Lambda(z, u_1, u_2) d\mu_{N_z}(u_1, u_2, q_1, q_2) \nonumber \\
    &= (\pi h)^{-2 d} \int_{N_z} \J{z}{z - w} \cdot \J{z - w}{z + v_1} a_h(z + v_1, \J{z+v_1}{z} ( p + p_1) ) \J{z+v_1}{z+ \tilde{w}} \nonumber\\ & \qquad \qquad \qquad \times \J{z+\tilde{w}}{z+ v_2} b_h(z + v_2, \J{z+v_2}{z} (p + p_2)) \J{z+v_2}{z + w} \cdot \J{z+w}{z}  \nonumber \\
    & \qquad \qquad \qquad \times e^{\frac{2 i}{h} ( p_2 \cdot u_1 - p_1 \cdot u_2)} e^{\frac{2 i}{h} p \cdot ( u_1 -  u_2 + w )} \Lambda(z, u_1, u_2) d\mu_{N_z}(u_1, u_2, p_1, p_2) \nonumber \\
    &\sim e^{\frac{i h}{2} \left( \partial_{u_1} \cdot \partial_{p_2} - \partial_{u_2} \cdot \partial_{p_1} \right)} \Lambda(z, u_1, u_2) e^{\frac{2 i p \cdot (w + u_1 - u_2)}{h}} \nonumber \\
    &\qquad \times \Big[\J{z}{z - w} \cdot \J{z - w}{z + v_1} a_h (z+v_1, \J{z+v_1}{z} (p+p_1)) \J{z+v_1}{z+\tilde{w}} \Big]\nonumber\\
    &\qquad\times \Big[ \J{z+\tilde{w}}{z+v_2} b_h(z+v_2, \J{z+v_2}{z} (p+p_2)) \J{z+v_2}{z + w} \cdot \J{z+w}{z} \Big] \Bigg|_{\substack{u_1 = u_2 = 0\\p_1=p_2=0}} ,
\end{align}
where $N_{z}:= T_zM \times T_zM \times T^*_zM \times T^*_zM$, and $\Lambda(z, u_1, u_2)$ is defined as in \cref{eq:GeomFact}. The cut-off functions in \cref{eq:A_starprod,eq:B_starprod} can be absorbed into the integrand. Since they are identically $1$ in a neighborhood of the diagonal, the terms obtained by replacing them with $1$ are supported away from the stationary set of the phase. By the same integration-by-parts argument as in the proof of
\cref{Prop:3.1}, these terms contribute only to $h^\infty\mathcal S_h^{-\infty}(M;E,G)$. We therefore suppress them. The second equality follows after the change of variables $(q_1, q_2) = (p_1 + p, p_2 + p)$, and the final relation is obtained by applying the standard oscillatory integral expansion to the variables
$(u_1,u_2,p_1,p_2)$, as given in \cite[Prop.~1.2.4.]{duistermaat2010}. Note that this result has the same form as in \cite[Thm.~3.9]{dls_weyl}, the only difference being the additional parallel transport operators. In particular, when the symbols are scalar valued and the parallel transport operators are dropped, the above result coincides with \cite[Thm.~3.9]{dls_weyl}. 

To complete the proof, note that we can rewrite the composition of parallel transport operators in the above equation as 
\begin{subequations}
\begin{align}
    \J{z}{z - w} \cdot \J{z - w}{z + v_1} &= \mathbb{H}_z(\nabla^{\pi^*G}) \cdot \J{z}{z+v_1}, \\
    \J{z+v_1}{z+\tilde{w}} \cdot \J{z+\tilde{w}}{z+v_2} &= \J{z+v_1}{z} \cdot \mathbb{H}_z(\nabla^{\pi^*F}) \cdot \J{z}{z+v_2}, \\
    \J{z+v_2}{z + w} \cdot \J{z+w}{z} &= \J{z+v_2}{z} \cdot \mathbb{H}_z(\nabla^{\pi^*E}),
\end{align}
\end{subequations}
where
\begin{subequations}
\begin{align}
    \mathbb{H}_z(\nabla^{\pi^*G}) &= \J{z}{z-w} \cdot \J{z-w}{z+v_1} \cdot \J{z+v_1}{z} , \\
    \mathbb{H}_z(\nabla^{\pi^*F}) &= \J{z}{z+v_1} \cdot \J{z+v_1}{z+\tilde{w}} \cdot \J{z+\tilde{w}}{z+v_2} \cdot \J{z+v_2}{z}, \\
    \mathbb{H}_z(\nabla^{\pi^*E}) &= \J{z}{z+v_2} \cdot \J{z+v_2}{z+w} \cdot \J{z+w}{z}.
\end{align}
\end{subequations}
Here, $\mathbb{H}_z(\nabla^{\pi^*G})$ is the holonomy of the connection $\nabla^{\pi^*G}$ at $z$ along the geodesic triangle $z \mapsto z+v_1 \mapsto z-w \mapsto z$, $\mathbb{H}_z(\nabla^{\pi^*F})$ is the holonomy of the connection $\nabla^{\pi^*F}$ at $z$ along the geodesic loop $z \mapsto z+v_2 \mapsto z+\tilde{w} \mapsto z+v_1 \mapsto z$ (this is the blue loop on \cref{Fig:Triangle}), and $\mathbb{H}_z(\nabla^{\pi^*E})$ is the holonomy of the connection $\nabla^{\pi^*E}$ at $z$ along the geodesic triangle 
$z \mapsto z+w \mapsto z+v_2 \mapsto z$. Using this result, the star product can be written as
\begin{align}\label{eq:StarProductHol}
    c_h(z,p) = (a_h \star b_h)(z,p) &\sim e^{\frac{i h}{2} \left( \partial_{u_1} \cdot \partial_{p_2} - \partial_{u_2} \cdot \partial_{p_1} \right)} \Lambda(z, u_1, u_2) e^{\frac{2 i p \cdot (w + u_1 - u_2)}{h}} \nonumber \\
    &\qquad \times {\mathbb{H}_z(\nabla^{\pi^*G})} \Big[\J{z}{z+v_1} a_h(z+v_1, \J{z+v_1}{z} (p+p_1)) \J{z+v_1}{z} \Big] \mathbb{H}_z(\nabla^{\pi^*F}) \nonumber \\
    &\qquad\times \Big[ \J{z}{z+v_2} b_h(z+v_2, \J{z+v_2}{z} (p+p_2)) \J{z+v_2}{z} \Big] {\mathbb{H}_z(\nabla^{\pi^*E})} \Bigg|_{\substack{u_1 = u_2 = 0\\p_1=p_2=0}} ,
\end{align}
which concludes the proof.
\end{proof}

\subsection{Asymptotic expansion of the star product}
\label{Sec:AsympStarProd}

In this section, we explicitly calculate the first terms of the semiclassical asymptotic expansion of the star product defined in \cref{th:star_product}, up to third order in the semiclassical parameter $h$. In principle, higher-order terms can be computed following the same systematic procedure outlined below. Many of the terms in the expansion coincide with those appearing in the scalar case discussed in \cite[Sec.~3.9]{dls_weyl}, since the geometric factor $\Lambda(z,u_1,u_2)$, the oscillatory exponential $\exp\left(\frac{2i}{h}p\cdot(w+u_1-u_2)\right)$, and the vectors $v_1,v_2,w,\tilde w$ of the geodesic triangles depicted in \cref{Fig:Triangle} have the same local expansions. The main difference arises from the bundle holonomy terms introduced in \cref{eq:star2}, which account for the parallel transport operations that act on the fibers of the corresponding vector bundles.

\begin{proposition}[Star product expansion] \label{Prp:StarProdExp}
Let $a=(a_h)_{h\in(0,h_0]}\in h^{-l_1}\mathcal{S}_h^{k_1}(M;F,G)$ and $b=(b_h)_{h\in(0,h_0]}\in h^{-l_2}\mathcal{S}_h^{k_2}(M;E,F)$. Then the star product $a \star b$ is well defined modulo elements in $h^\infty\mathcal{S}_h^{-\infty}(M;E,G)$ and satisfies
\begin{equation}
    a \star b = (a_h \star b_h)_{h\in(0,h_0]} \in h^{-(l_1+l_2)}\mathcal S_h^{k_1+k_2}(M;E,G).
\end{equation}
Moreover, for every $h\in(0,h_0]$, one has the expansion
\begin{equation}\label{eq:ExpStarGen}
    a_h\star b_h = \sum_{j=0}^3 h^j (a_h\star b_h)_j + r_{4,h},
\end{equation}
where the remainder satisfies $r_4=(r_{4,h})_{h\in(0,h_0]} \in h^{4-(l_1+l_2)}\mathcal{S}_h^{k_1+k_2-4}(M;E,G)$ and the coefficient families satisfy
\begin{equation}
    \big((a_h\star b_h)_j\big)_{h\in(0,h_0]} \in h^{-(l_1+l_2)}\mathcal S_h^{k_1+k_2-j}(M;E,G), \qquad j = 0, 1, 2, 3.
\end{equation}
To keep the notation simple, the subscript $h$ is suppressed in the following formulas, so that $a$ and $b$ stand for $a_h$ and $b_h$, and $(a\star b)_j$ stands for $(a_h\star b_h)_j$. The coefficients are then given by
\begin{subequations}
\begin{align}
    (a \star b)_0 =& \, a b, \\
    (a \star b)_1 =& \, \tfrac{i}{2} \left( a_\alpha b^\alpha - a^\alpha b_\alpha \right),\\
    (a \star b)_2 =& - \tfrac{1}{8} \left( a_{\alpha_1\alpha_2}b^{\alpha_1\alpha_2} - 2 a^{\alpha_2}_{\alpha_1}b_{\alpha_2}^{\alpha_1} + a^{\alpha_1\alpha_2} b_{\alpha_1\alpha_2} \right) + \tfrac{3-4\gamma}{12} R_{\alpha_1 \alpha_2} a^{\alpha_1}b^{\alpha_2} \nonumber\\
    & -\tfrac{1}{24} R\indices{^\beta_{\alpha_1\alpha_2\alpha_3}} p_\beta \left( a^{\alpha_2}b^{\alpha_1\alpha_3} + a^{\alpha_1\alpha_3}b^{\alpha_2} \right) \nonumber\\
    &-{\tfrac{1}{8} a^{\alpha_1} b^{\alpha_2} E_{\alpha_1 \alpha_2} }- \tfrac{1}{4} a^{\alpha_1} \tensor{F}{_{\alpha_{1} \alpha_{2}}} b^{\alpha_2} -\tfrac{1}{8} G_{\alpha_1 \alpha_2} a^{\alpha_1} b^{\alpha_2} , \\
    (a \star b)_3 =& - \tfrac{i}{48} \left( a_{\alpha_1\alpha_2\alpha_3}b^{\alpha_1\alpha_2\alpha_3} - 3 a_{\alpha_1\alpha_2}^{\alpha_3}b^{\alpha_1\alpha_2}_{\alpha_3} + 3 a^{\alpha_2 \alpha_3}_{\alpha_1} b_{\alpha_2 \alpha_3}^{\alpha_1} - a^{\alpha_1\alpha_2\alpha_3} b_{\alpha_1\alpha_2\alpha_3} \right)  \nonumber\\
    &+ \tfrac{i (3 - 4\gamma)}{24} R_{\alpha_1 \alpha_2} \left( a_{\alpha_3}^{\alpha_2} b^{\alpha_1 \alpha_3} - a^{\alpha_2 \alpha_3} b_{\alpha_3}^{\alpha_1} \right) \nonumber\\
    &- \tfrac{i}{16} R\indices{^\beta_{\alpha_1\alpha_2\alpha_3}} \left( a^{\alpha_1 \alpha_3}_{\beta}b^{\alpha_2} - a^{\alpha_2}b^{\alpha_1 \alpha_3}_{\beta} \right) \nonumber\\
    &+ \tfrac{i}{48} R\indices{^\beta_{\alpha_1\alpha_2\alpha_3}} p_\beta  \left( - a^{\alpha_1\alpha_3}_{\alpha_4}b^{\alpha_2\alpha_4} - a^{\alpha_2}_{\alpha_4} b^{\alpha_1\alpha_3\alpha_4}+a^{\alpha_1\alpha_3\alpha_4}b^{\alpha_2}_{\alpha_4} + a^{\alpha_2\alpha_4} b^{\alpha_1\alpha_3}_{\alpha_4} \right)  \nonumber\\
    &+ \tfrac{i (3 - 4\gamma)}{48} R_{\alpha_1 \alpha_2 ; \alpha_3} \left( a^{\alpha_3}b^{\alpha_1\alpha_2} - a^{\alpha_1\alpha_2} b^{\alpha_3} \right)  \nonumber\\
    &+ \tfrac{i}{48} R\indices{^\beta_{\alpha_1\alpha_2\alpha_3;\alpha_4}} p_\beta \left( a^{\alpha_1\alpha_3\alpha_4} b^{\alpha_2} - a^{\alpha_2} b^{\alpha_1\alpha_3\alpha_4} \right)  \nonumber\\
    &- \tfrac{i}{16} (a^{\alpha_1}_{\alpha_3} b^{\alpha_2 \alpha_3}  + a^{\alpha_2 \alpha_3} b^{\alpha_1}_{\alpha_3} ) \tensor{E}{_{\alpha_{1}\alpha_{2}}} +\tfrac{i}{48} (a^{\alpha_1} b^{\alpha_2\alpha_3} -2 a^{\alpha_2\alpha_3} b^{\alpha_1}) \tensor{E}{_{\alpha_1\alpha_2;\alpha_3}} \nonumber \\
    &- \tfrac{i}{8} (a^{\alpha_1}_{\alpha_3}\tensor{F}{_{\alpha_{1}\alpha_{2}}}b^{\alpha_2 \alpha_3} + a^{\alpha_2 \alpha_3}\tensor{F}{_{\alpha_{1}\alpha_{2}}} b^{\alpha_1}_{\alpha_3})  -\tfrac{i}{16} (a^{\alpha_1}\tensor{F}{_{\alpha_1\alpha_2;\alpha_3}}b^{\alpha_2\alpha_3} + a^{\alpha_2\alpha_3}\tensor{F}{_{\alpha_1\alpha_2;\alpha_3}}b^{\alpha_1}) \nonumber \\
    &- \tfrac{i}{16} \tensor{G}{_{\alpha_{1}\alpha_{2}}} (  a^{\alpha_1}_{\alpha_3} b^{\alpha_2 \alpha_3} + a^{\alpha_2 \alpha_3} b^{\alpha_1}_{\alpha_3}) +\tfrac{i}{48} \tensor{G}{_{\alpha_1\alpha_2;\alpha_3}} (a^{\alpha_2\alpha_3} b^{\alpha_1} -2a^{\alpha_1} b^{\alpha_2\alpha_3}) .
\end{align}    
\end{subequations}
    In the above expressions, we used the same notation as in \cite{dls_weyl}, in which lower indices of the symbols denote horizontal derivatives and upper indices of the symbols denote vertical derivatives:
    \begin{equation}
        a_{\alpha_1 \cdots \alpha_n} = \hnabla_{\alpha_n} \cdots \hnabla_{\alpha_1} a , \qquad
        a^{\alpha_1 \cdots \alpha_n} =\vnabla^{\alpha_n} \cdots \vnabla^{\alpha_1} a .
    \end{equation}
    The tensors $E$, $F$, and $G$ are the curvature tensors of the connections $\nabla^E$, $\nabla^F$, and $\nabla^G$, respectively. Frame indices in the bundles $\pi^*(F^*\otimes G)$ and $\pi^*(E^*\otimes F)$ are suppressed, so the order of the factors in the above expressions is essential. For example, terms such as $a b$ should be understood as $a\indices{^A_B} b\indices{^B_C}$ once the bundle indices are restored.
\end{proposition}

\begin{proof}
The starting point of the expansion is \cref{eq:star2}. We work componentwise in $h\in(0,h_0]$ and, throughout the proof, suppress the subscript $h$ on the fixed-$h$ symbols $a_h$ and $b_h$. Thus, in this proof, $a$ and $b$ stand for $a_h$ and $b_h$, respectively, and $a\star b$ stands for $a_h\star b_h$. We also introduce the compact notation 
    \begin{subequations}\label{eq:CompactNotation}
        \begin{align}
             \Lambda &:= \Lambda(z, u_1, u_2), \\
            \mathbb{E} &:= e^{\frac{2 i p \cdot (w + u_1 - u_2)}{h}}, \\ 
            \mathbb{H}^E & :=\mathbb{H}_z(\nabla^{\pi^*E}), \\
            \mathbb{H}^F & :=\mathbb{H}_z(\nabla^{\pi^*F}), \\
            \mathbb{H}^G &:=\mathbb{H}_z(\nabla^{\pi^*G}), \\
            A &:= \Big[\J{z}{z+v_1} a(z+v_1, \J{z+v_1}{z} (p+p_1)) \J{z+v_1}{z} \Big], \\
            B &:= \Big[ \J{z}{z+v_2} b(z+v_2, \J{z+v_2}{z} (p+p_2)) \J{z+v_2}{z} \Big],
        \end{align}
    \end{subequations}
    for the relevant functions of $u_{1},u_{2},p_{1},p_{2}$. The aim is to compute the expansion
\begin{align}\label{eq:ExpStruc}
    a_h\star b_h \sim
    \bigg[ \mathbf 1 + \sum_{r=1}^{\infty}
        \frac{i^r h^r}{2^r r!}
        \left( \partial_{u_1}\cdot\partial_{p_2} -\partial_{u_2}\cdot\partial_{p_1} \right)^r \bigg]
    \big( \Lambda\cdot\mathbb E\cdot\mathbb H^G\cdot A \cdot\mathbb H^F\cdot B\cdot\mathbb H^E \big) \bigg|_{\substack{u_1=u_2=0\\p_1=p_2=0}}
\end{align}
    explicitly up to the order $h^{3}$. Note that the derivatives in this expression themselves carry frame indices. For example, the $u_{i}$-derivative of the map $\Lambda\colon T_{z}M\times T_{z}M\to \mathbb{R}$ defined by $(u_{1},u_{2})\mapsto \Lambda(z,u_{1},u_{2})$ is a map $\partial_{u_{i}}\Lambda\:T_{z}M\times T_{z}M\to T^{\ast}_{z}M$ defined by
    \begin{align}
        \bigg(\cfrac{\partial\Lambda}{\partial u_{i}}\bigg)_{\mu}=\cfrac{\partial\Lambda}{\partial u_{i}^{\mu}} .
    \end{align}
    In particular, the operator $\partial_{u_1} \cdot \partial_{p_2}$ in \cref{eq:ExpStruc} must be understood as $\partial_{u_1^{\mu}} \cdot \partial_{(p_2)_{\mu}}$, and one must carefully keep track of the contracted indices when computing the expansion.

    First of all, we note that the $u_{i}$-derivatives of the symbols $a$ and $b$ can be computed using \cref{lemma:derivative}. For example, it holds that
    \begin{align}\label{eq:horderSym}
        \cfrac{\partial}{\partial u_1^\mu} A \Bigg|_{\substack{u_1 = u_2 = 0\\p_1=p_2=0}} &= \cfrac{\partial}{\partial u_1^\mu} \Big[\J{z}{z+v_1} a(z+v_1, \J{z+v_1}{z} (p+p_1)) \J{z+v_1}{z} \Big] \Bigg|_{\substack{u_1 = u_2 = 0\\p_1=p_2=0}}  \nonumber \\
        &= \cfrac{\partial v_1^\nu}{\partial u_1^\mu} \cfrac{\partial}{\partial v_1^\nu} \Big[\J{z}{z+v_1} a(z+v_1, \J{z+v_1}{z} (p+p_1)) \J{z+v_1}{z} \Big] \Bigg|_{\substack{u_1 = u_2 = 0\\p_1=p_2=0}}\nonumber \\
        &= \cfrac{\partial v_1^\nu}{\partial u_1^\mu} \Bigg|_{u_1 = u_2 = 0} \hnabla_\nu a(z,p).
    \end{align}
    Thus, the expansions of the symbols $a$ and $b$ take exactly the same form as in \cite[Sec.~4.7]{dls_weyl}, with the only difference being that the horizontal covariant derivatives should now be understood as acting on $a \in \Gamma(T^*M, \pi^*(F^* \otimes G))$ or $b \in \Gamma(T^*M, \pi^*(E^* \otimes F))$. In particular, note that these expansions will not contain any curvature terms related to the connections on the vector bundles $E$, $F$, and $G$. There will only be Riemann curvature terms coming from the expansions of $v_1$ and $v_2$ in terms of $u_1$ and $u_2$. However, the main difference is the presence of the holonomy terms, which can be expanded in terms of $u_1$ and $u_2$ using the same methods explained in \cite{Vines_holonomy}. The resulting terms will contain bundle curvatures and their derivatives. We present the details of the expansion of the holonomy terms in \cref{Appendix:Holonomy}. The expansions of the geometric factor $\Lambda$ defined in \cref{eq:GeomFact}, the exponential $\mathbb{E}$, and the vectors $v_{1},v_{2},w,\widetilde{w}$ as defined in \cref{eq:Vectors} (cf. \cref{Fig:Triangle}) have been derived in \cite[Sec.~4]{dls_weyl} and are recalled in \cref{Subsec:Exp2}.
    
    We start by noting that the variable $p$ in $\mathbb{E}$ is a free variable independent of the momenta $p_{1,2}$. Furthermore, we observe that the only quantities that depend on $p_{1}$ and $p_{2}$ are the symbols $A$ and $B$, while all remaining terms depend exclusively on $u_{1}$ and $u_{2}$. We restrict the development to third order in $h$, which requires us to take into account those terms at fourth order that contain an extra factor of $h^{-1}$, where there is a contribution from the oscillatory exponential factor $\mathbb{E}$. At zeroth order, the symbol of the star product is given by the product of the symbols
    \begin{align}\label{exp:ord0}
        &\left(\Lambda \cdot \mathbb{E} \cdot \mathbb{H}^G \cdot A\cdot\mathbb{H}^F \cdot B \cdot \mathbb{H}^E \right)\Big|_{\substack{u_i = 0\\p_i=0}}=ab ,
    \end{align}
    where the bundle indices are suppressed, as explained in the statement of \cref{Prp:StarProdExp}. The contribution of $r=1$ to \cref{eq:ExpStruc} is given by
    \begin{align}
       \frac{{i h}}{ 2 } \left( \partial_{u_1} \cdot \partial_{p_2} - \partial_{u_2} \cdot \partial_{p_1} \right)  &\left(\Lambda \cdot \mathbb{E} \cdot \mathbb{H}^G \cdot A\cdot\mathbb{H}^F \cdot B \cdot \mathbb{H}^E\right)  
     \Big|_{\substack{u_i = 0\\p_i=0}} =\nonumber \\
     &\qquad=\frac{{i h}}{ 2 } \left(\partial_{u_1} A \cdot \partial_{p_2} B -  \partial_{p_1} A\cdot \partial_{u_2} B \right) \Big|_{\substack{u_i = 0\\p_i=0}} ,
    \end{align}
    where we used $\partial_{u_{i}}\Lambda\vert_{u_{1}=u_{2}=0}=0$, $\partial_{u_{i}}\mathbb{E}\vert_{u_{1}=u_{2}=0}=0$, and $\partial_{u_{i}}\mathbb{H}^{E,F,G}\vert_{u_{1}=u_{2}=0}=0$. Now, using \cref{lemma:derivative} and the fact that below third order we have $\partial_{u_1}\sim\partial_{v_1}$, as one can see from the expansion of $v_{i}$ in \cref{Exp:Vectors}, we can write the $u_{i}$-derivatives of the symbols as (symmetrized) horizontal derivatives, as explained in \cref{eq:horderSym}. We get for the first order:
    \begin{align}\label{exp:ord1}
       &\frac{{i h}}{ 2 } \left( \partial_{u_1} \cdot \partial_{p_2} - \partial_{u_2} \cdot \partial_{p_1} \right) \left( \Lambda \cdot \mathbb{E} \cdot \mathbb{H}^G \cdot A\cdot\mathbb{H}^F \cdot B \cdot \mathbb{H}^E \right)  
        \Big|_{\substack{u_i = 0\\p_i=0}} =\frac{{i h}}{ 2 } \left(a_\alpha  b^\alpha -   a^\alpha  b_\alpha \right).
    \end{align}
    Following similar steps, the contribution for $r=2$ in \cref{eq:ExpStruc} is given by
    \begin{align}\label{exp:ord2}
       -\frac{{h^2}}{ 8 }(\partial_{u_1} \cdot \partial_{p_2} &- \partial_{u_2} \cdot \partial_{p_1} )^2 (\Lambda \cdot \mathbb{E} \cdot \mathbb{H}^G \cdot A\cdot\mathbb{H}^F \cdot B \cdot \mathbb{H}^E)
     \Big|_{\substack{u_i = 0\\p_i=0}}=  \nonumber \\
     &= -\frac{{h^2}}{ 8 } \bigg[ 
     a_{\alpha_1\alpha_2}b^{\alpha_1\alpha_2}
     -2 a^{\alpha_1}_{\alpha_2}b_{\alpha_1}^{\alpha_2}
     + a^{\alpha_1\alpha_2} b_{\alpha_1\alpha_2}
     -\frac{2(3-4\gamma)}{3}R_{\alpha_1 \alpha_2} a^{\alpha_1}b^{\alpha_2} \nonumber\\
     &\qquad\qquad +{ a^{\alpha_1} b^{\alpha_2} \tensor{E}{_{\alpha_{1}\alpha_{2}}} }+ 2a^{\alpha_1}\tensor{F}{_{\alpha_{1}\alpha_{2}}}b^{\alpha_2} + { \tensor{G}{_{\alpha_{1}\alpha_{2}}} a^{\alpha_1} b^{\alpha_2} } \bigg],
    \end{align}
    where we used \cref{lemma:derivative}, as well as the expansions of $\Lambda$, $\mathbb{E}$, and $\mathbb{H}^{E,F,G}$ in \cref{Exp:GeomFactExp,eq:ExpHolonomy,eq:ExpHolonomy2}. At this point, it is important to stress again that we have suppressed the bundle indices in this computation. Thus, the order of the terms in the above expressions is important, and terms such as $a^{\alpha_{1}}\tensor{F}{_{\alpha_{1}\alpha_{2}}}b^{\alpha_2}$ should be understood with the relevant bundle indices restored and contracted in the natural order. At order $r=3$ in \cref{eq:ExpStruc}, we get
    \begin{align}\label{exp:ord3}
         - \frac{{i h^3}}{ 48} &(\partial_{u_1} \cdot \partial_{p_2} - \partial_{u_2} \cdot \partial_{p_1})^3 (\Lambda \cdot \mathbb{E} \cdot \mathbb{H}^G \cdot A\cdot\mathbb{H}^F \cdot B \cdot \mathbb{H}^E)\Big|_{\substack{u_i = 0\\p_i=0}} =  \nonumber\\
       &  - \frac{{i h^3}}{ 48}\bigg[ a_{\alpha_1\alpha_2\alpha_3}b^{\alpha_1\alpha_2\alpha_3}
       - 3a_{\alpha_1\alpha_2}^{\alpha_3}b^{\alpha_1\alpha_2}_{\alpha_3} 
       +3 a^{\alpha_1\alpha_2}_{\alpha_3}b_{\alpha_1\alpha_2}^{\alpha_3} 
       - a^{\alpha_1\alpha_2\alpha_3} b_{\alpha_1\alpha_2\alpha_3}\nonumber\\
       & +3R^\beta{}_{\alpha_1\alpha_2\alpha_3}(a^{\alpha_1 \alpha_3}_{\beta}b^{\alpha_2} -a^{\alpha_2}b^{\alpha_1 \alpha_3}_{\beta})
        + (3-4\gamma)R_{\alpha_1 \alpha_2 ; \alpha_3}( a^{\alpha_1\alpha_2} b^{\alpha_3}-a^{\alpha_3}b^{\alpha_1\alpha_2}) \nonumber\\
    &    +2(3-4\gamma)R_{\alpha_1 \alpha_3}( a^{\alpha_1\alpha_2}b_{\alpha_2}^{\alpha_3}- a_{\alpha_2}^{\alpha_3}b^{\alpha_1\alpha_2})\nonumber\\
    &
    +3 (a^{\alpha_1}_{\alpha_3} b^{\alpha_2 \alpha_3} + a^{\alpha_2 \alpha_3} b^{\alpha_1}_{\alpha_3}) \tensor{E}{_{\alpha_{1}\alpha_{2}}} - (a^{\alpha_1} b^{\alpha_2\alpha_3} - 2 a^{\alpha_2\alpha_3} b^{\alpha_1}) \tensor{E}{_{\alpha_1\alpha_2;\alpha_3}}
    \nonumber\\
    &
    +6 (a^{\alpha_1}_{\alpha_3}\tensor{F}{_{\alpha_{1}\alpha_{2}}}b^{\alpha_2 \alpha_3} + a^{\alpha_2 \alpha_3}\tensor{F}{_{\alpha_{1}\alpha_{2}}} b^{\alpha_1}_{\alpha_3}) + 3(a^{\alpha_1}\tensor{F}{_{\alpha_1\alpha_2;\alpha_3}}b^{\alpha_2\alpha_3} + a^{\alpha_2\alpha_3}\tensor{F}{_{\alpha_1\alpha_2;\alpha_3}}b^{\alpha_1}) 
    \nonumber\\
    &
    +3 \tensor{G}{_{\alpha_{1}\alpha_{2}}} (a^{\alpha_1}_{\alpha_3} b^{\alpha_2 \alpha_3} + a^{\alpha_2 \alpha_3} b^{\alpha_1}_{\alpha_3}) - \tensor{G}{_{\alpha_1\alpha_2;\alpha_3}} ( a^{\alpha_2\alpha_3} b^{\alpha_1} -2a^{\alpha_1} b^{\alpha_2\alpha_3}) 
    \nonumber\\
    &- 2 i h^{-1} p_\mu  R^\mu{}_{\alpha_1\alpha_2\alpha_3}(a^{\alpha_1\alpha_3}b^{\alpha_2}+a^{\alpha_2}b^{\alpha_1\alpha_3})\bigg] .
    \end{align}
    Note the term proportional to $h^{-1}$ in \cref{exp:ord3}, which will contribute to the second order of the expansion. For similar reasons, we also have to compute the order $r=4$ in \cref{eq:ExpStruc} to obtain the expansion of the star product up to the order $h^{3}$ due to the factor of $h^{-1}$ in the exponential. The relevant terms are given by
    \begin{align}\label{exp:ord4}
       \frac{{ h^4}}{ 384} ( \partial_{u_1} &\cdot \partial_{p_2} - \partial_{u_2} \cdot \partial_{p_1} )^4 (\Lambda \cdot \mathbb{E} \cdot \mathbb{H}^G \cdot A\cdot\mathbb{H}^F \cdot B \cdot \mathbb{H}^E) \Big|_{\substack{u_i = 0\\p_i=0}}=\nonumber\\
        &=\frac{{ i h^3}}{ 48} \Big[ p_\mu  R^\mu{}_{\alpha_1\alpha_2\alpha_3} (a^{\alpha_2\alpha_4} b^{\alpha_1\alpha_3}_{\alpha_4}- a^{\alpha_1\alpha_3}_{\alpha_4}b^{\alpha_2\alpha_4} 
        - a^{\alpha_2}_{\alpha_4} b^{\alpha_1\alpha_3\alpha_4}+a^{\alpha_1\alpha_3\alpha_4}b^{\alpha_2}_{\alpha_4})\nonumber \\
        &\qquad+  p_\mu R^\mu{}_{\alpha_1\alpha_2\alpha_3;\alpha_4}(a^{\alpha_1\alpha_3\alpha_4}b^{\alpha_2} - a^{\alpha_2}b^{\alpha_1\alpha_3\alpha_4})\Big] +\mathcal R_{4,h},
    \end{align}
    where the family $(\mathcal R_{4,h})_{h\in(0,h_0]}$ belongs to $h^{4-(l_1+l_2)}\mathcal S_h^{k_1+k_2-4}(M;E,G)$. 

    By combining the various terms shown in \cref{exp:ord0,exp:ord1,exp:ord2,exp:ord3,exp:ord4} and collecting equal powers of $h$, we obtain the coefficient families $(a_h\star b_h)_j$ for $j=0,1,2,3$, which appear in \cref{eq:ExpStarGen}. The standard remainder estimate for the oscillatory integral expansion, together with the symbol estimates for $a\in h^{-l_1}\mathcal S_h^{k_1}(M;F,G)$ and $b\in h^{-l_2}\mathcal S_h^{k_2}(M;E,F)$, gives
\begin{equation}
    r_4=(r_{4,h})_{h\in(0,h_0]} \in h^{4-(l_1+l_2)} \mathcal{S}_h^{k_1+k_2-4}(M;E,G).
\end{equation}
This proves the proposition.
\end{proof}

\begin{remark}
\cref{Prp:StarProdExp} does not require the input symbols $a$ and $b$ to admit asymptotic expansions in powers of $h$. It applies to arbitrary semiclassical symbols $a \in h^{-l_1}\mathcal S_h^{k_1}(M;F,G)$ and $b \in h^{-l_2}\mathcal S_h^{k_2}(M;E,F)$. Suppose, however, that $a$ and $b$ do admit asymptotic expansions of the form
\begin{equation}
    a_h\sim h^{-l_1}\sum_{r=0}^{\infty}h^r a_{r}, \qquad
    b_h\sim h^{-l_2}\sum_{s=0}^{\infty}h^s b_{s},
\end{equation}
where $a_r\in\mathcal{S}^{k_1-r}(M;F,G)$ and $b_s\in\mathcal{S}^{k_2-s}(M;E,F)$. Then the star product admits the induced expansion
\begin{equation}
    a_h\star b_h \sim h^{-(l_1+l_2)} \sum_{N=0}^{\infty}h^N c_{N},
\end{equation}
with $c_N\in\mathcal S^{k_1+k_2-N}(M;E,G)$. Up to the order computed in \cref{Prp:StarProdExp}, these coefficients are obtained by collecting all contributions of total degree $N$ in $h$ after substituting the expansions of $a_h$ and $b_h$ into the star product expansion.
\end{remark}

\begin{remark}
     For $\gamma = \tfrac{1}{2}$, and if we additionally restrict ourselves to the scalar case, the bundle curvature terms vanish, and our results coincide with those presented in \cite[Sec.~3.9]{dls_weyl}. However, note that there is a misprint in \cite[Sec.~3.9]{dls_weyl}, where the term proportional to $\tfrac{i}{48} R\indices{^\beta_{\alpha_1 \alpha_2 \alpha_3}} p_\beta$ in $(a \star b)_3$ has the wrong sign.
\end{remark}

\section{Properties of the Weyl quantization}
\label{Sec:PropWeylQuant}

In this section, we discuss some properties of the Weyl quantization introduced in \cref{Def:WeylQuant}. In particular, we show that, modulo $h^\infty$-smoothing remainders, formal self-adjointness of Weyl symbols is equivalent to formal self-adjointness of their Weyl quantizations. This generalizes the corresponding property for operators on $\mathbb{R}^{d}$ to the bundle-valued geometric setting. Furthermore, we discuss the \textit{Moyal equation} satisfied by the Wigner function in this framework.

\subsection{Self-adjoint symbols and formally self-adjoint operators}\label{Subsec:Hermitian}

For scalar pseudodifferential operators on $\mathbb{R}^{n}$, it is well known that the complex conjugate symbol in the $\tau$-quantization defines the adjoint of the operator in the $(1-\tau)$-quantization; see, for example, \cite[Thm.~23.5]{Shubin}. In particular, this implies that the Weyl quantization, corresponding to $\tau=\frac{1}{2}$, maps real-valued symbols to formally self-adjoint operators. In this section, we establish the corresponding statement for the bundle-valued Weyl quantization introduced in \cref{Def:WeylQuant}. In the present geometric setting, the quantization depends on the choice of a cut-off function only modulo $h^\infty$-smoothing operators, and the symbol is recovered only modulo $h^\infty$-smoothing symbols. Accordingly, the equivalence between fiberwise self-adjoint Weyl symbols and formally self-adjoint Weyl quantizations will be formulated modulo these semiclassical smoothing remainders.

We start by defining the adjoint of bundle-valued symbols. Let $\mathbf A\in\Gamma(T^*M,\pi^*(E^*\otimes F))$. Its adjoint is the section $\mathbf A^\dagger\in\Gamma(T^*M,\pi^*(F^*\otimes E))$ defined fiberwise, using the pulled-back fiber metrics, by the identity
\begin{equation}
    \langle \eta,\mathbf A(z,p)\xi\rangle_F = \langle \mathbf A^\dagger(z,p)\eta,\xi\rangle_E \qquad \forall (z,p)\in T^*M, \xi\in E_z, \eta\in F_z.
\end{equation}
Equivalently, for pure tensors $\phi^*\otimes\psi$, with $\phi\in\Gamma(T^*M,\pi^*E)$, $\psi\in\Gamma(T^*M,\pi^*F)$, and $\phi^*:=\langle\phi,\cdot\rangle_E$, the adjoint is
\begin{equation}
    (\phi^*\otimes\psi)^\dagger = \psi^*\otimes\phi \in \Gamma(T^*M,\pi^*(F^*\otimes E)).
\end{equation}
This gives a well-defined conjugate-linear map ${}^\dagger\colon \Gamma(T^*M,\pi^*(E^*\otimes F)) \longrightarrow \Gamma(T^*M,\pi^*(F^*\otimes E))$. In local coordinates and local orthonormal trivializations of $E$ and $F$ given by frames $e_A$ and $f_A$ with $\eta^E_{A B} = \langle e_A, e_B \rangle_E$ and $\eta^F_{A B} = \langle f_A, f_B \rangle_F$, a section $\mathbf{A}$ can be identified with a smooth matrix-valued function $\mathbf{A}\colon U\times\mathbb{R}^{d}\to\mathbb{C}^{m \times n}$, and $\mathbf{A}^{\dagger}\colon U\times\mathbb{R}^{d}\to\mathbb{C}^{n\times m}$ is then represented by the matrix $\mathbf{A}^{\dagger} = (\eta^E)^{-1} \cdot \overline{\mathbf{A}}^{\mathrm{T}} \cdot \eta^F$. In the positive-definite case and orthonormal frames, this reduces to the usual Hermitian adjoint. For a semiclassical symbol $a=(a_h)_{h\in(0,h_0]}$, we define $a^\dagger:=(a_h^\dagger)_{h\in(0,h_0]}$ componentwise. If $a\in h^{-l}\mathcal S_h^k(M;E,F)$, then $a^\dagger\in h^{-l}\mathcal S_h^k(M;F,E)$.

Since the Wigner function depends on the cut-off function $\chi$, we temporarily write $W_{h,\chi}[\Psi,\Phi]$ for the expression in \cref{eq:W_def}. The following lemma describes how the Wigner function transforms under the adjoint operation.

\begin{lemma}\label{Lemma:WignerAdjoint}
Let $\Psi\in\Gamma(F)$ and $\Phi\in\Gamma(E)$, and define $\chi^{\mathrm T}(x,y):=\chi(y,x)$. Then
\begin{equation}
    W_{h, \chi}[\Psi,\Phi]^\dagger(z,p) = W_{h, \chi^{\mathrm T}}[\Phi,\Psi](z,p).
\end{equation}
Moreover, if $\Psi\in\Gamma_{\mathrm c}(F)$ and $\Phi\in\Gamma_{\mathrm c}(E)$, then for any two admissible cut-off functions $\chi$ and $\chi'$ that agree in a neighborhood of the diagonal, and for any $a\in h^{-l}\mathcal S_h^k(M;E,F)$, one has
\begin{equation}
    \int_{T^*M} a_h \cdot \left( W_{h,\chi}[\Psi,\Phi] - W_{h,\chi'}[\Psi,\Phi] \right) d\mu_{T^*M} = \mathcal O(h^\infty).
\end{equation}
In particular, this applies to $\chi'=\chi^{\mathrm T}$.
\end{lemma}

\begin{proof}
We first prove the adjoint identity. By \cref{Def:Wigner}, we have
\begin{align}
    W_{h,\chi}[\Psi, \Phi](z, p) &=
    \int_{T_{z} M } \J{z}{z-\frac{u}{2}} \Psi^* \left( z - \frac{u}{2} \right) \otimes  \J{z}{z+\frac{u}{2}} \Phi \left(z+\frac{u}{2}\right) e^{-\frac{i p \cdot u}{h}} \nonumber\\
    &\qquad\qquad\qquad \chi\left(z-\frac{u}{2}, z+\frac{u}{2}\right)\Delta^{-\gamma}\left(z-\frac{u}{2}, z+\frac{u}{2}\right) \dfrac{ d\mu_{T_zM} (u) }{(2\pi h)^d} .
\end{align}
Taking the fiberwise adjoint and using the fact that the connections are compatible with the fiber metrics, so that parallel transport commutes with taking adjoints, gives
\begin{align}
    W_{h,\chi}[\Psi, \Phi]^\dagger(z, p) &=
    \int_{T_{z} M } \J{z}{z+\frac{u}{2}} \Phi^* \left( z + \frac{u}{2} \right) \otimes  \J{z}{z-\frac{u}{2}} \Psi \left(z-\frac{u}{2}\right) e^{\frac{i p \cdot u}{h}} \nonumber\\
    &\qquad\qquad\qquad \chi\left(z-\frac{u}{2}, z+\frac{u}{2}\right)\Delta^{-\gamma}\left(z-\frac{u}{2}, z+\frac{u}{2}\right) \dfrac{ d\mu_{T_zM} (u) }{(2\pi h)^d} ,
\end{align}
where we used the fact that the cut-off functions are real-valued. Now perform the change of variables $u\mapsto -u$. The measure $d\mu_{T_zM}(u)$ is invariant under this change, and the Van Vleck-Morette determinant is symmetric in its arguments. Thus, we obtain
\begin{align}
    W_{h,\chi}[\Psi, \Phi]^\dagger(z, p) &= \int_{T_{z} M } \J{z}{z-\frac{u}{2}} \Phi^* \left( z - \frac{u}{2} \right) \otimes  \J{z}{z+\frac{u}{2}} \Psi \left(z+\frac{u}{2}\right) e^{-\frac{i p \cdot u}{h}} \nonumber\\
    &\qquad\qquad\qquad \chi\left(z+\frac{u}{2}, z-\frac{u}{2}\right)\Delta^{-\gamma}\left(z-\frac{u}{2}, z+\frac{u}{2}\right) \dfrac{ d\mu_{T_zM} (u) }{(2\pi h)^d} \nonumber\\
    &= W_{h,\chi^{\mathrm T}}[\Phi, \Psi](z, p).
\end{align}
This proves the first claim.

It remains to prove the cut-off independence statement in the weak sense stated above. Let $\delta\chi:=\chi-\chi'$. Since $\chi$ and $\chi'$ agree in a neighborhood of the diagonal, $\delta\chi$ vanishes there. Therefore, the factor $\delta\chi(z-\frac{u}{2},z+\frac{u}{2})$ is supported where $u$ is bounded away from zero on every compact subset of the relevant coordinate patch. The difference of the two pairings is
\begin{align}
    I_h &:= \int_{T^*M} a_h \cdot \left( W_{h,\chi}[\Psi,\Phi] -W_{h,\chi'}[\Psi,\Phi] \right) d\mu_{T^*M} \nonumber\\
    &= \int_{T^*M} \int_{T_zM} a_h(z,p) \cdot \left[ \J{z}{z-\frac{u}{2}} \Psi^*\left(z-\frac{u}{2}\right) \otimes \J{z}{z+\frac{u}{2}} \Phi\left(z+\frac{u}{2}\right) \right] e^{-\frac{i p\cdot u}{h}} \nonumber\\
    &\qquad\qquad\qquad \delta\chi\left(z-\frac{u}{2},z+\frac{u}{2}\right) \Delta^{-\gamma}\left(z-\frac{u}{2},z+\frac{u}{2}\right) \frac{d\mu_{T_zM}(u)}{(2\pi h)^d} d\mu_{T^*M}(z,p).
\end{align}
Since $\Psi$ and $\Phi$ are compactly supported and the cut-offs are supported in a fixed neighborhood of the diagonal, the variables $(z,u)$ range over a compact set. Moreover, on the support of $\delta\chi$, the phase $\varphi(p,u)=p\cdot u$ has no stationary point in $p$. Using a finite partition of unity, we may work in local coordinates. Repeated integration by parts in $p$, with the same non-stationary phase argument as in the proof of \cref{Prop:3.1}, gives a factor of $h$ at each integration by parts, while each $p$-derivative lowers the symbol order of $a_h$. More explicitly, after $N_0$ integrations by parts, one obtains an estimate of the form
\begin{equation}
    |I_h| \leq C_{N_0} h^{N_0-d-l} \int_{\mathbb R^d} (1+|p|)^{k-N_0} d^dp,
\end{equation}
provided $N_0$ is chosen large enough so that the integral in $p$ converges. Since $N_0$ can be taken arbitrarily large, it follows that for every $N\in\mathbb N$ there exists a constant $C_N>0$ such that $|I_h|\leq C_N h^N$. Thus $I_h=\mathcal O(h^\infty)$, which proves the second claim.
\end{proof}

Now, if $\hat{A}\colon\Gamma_{\mathrm{c}}(E) \to \Gamma(F)$ is a linear and continuous operator, we define its \textit{formal adjoint} in the standard way, namely as a linear operator $\hat{A}^{\dagger}\colon\Gamma_{\mathrm{c}}(F)\to\Gamma(E)$ such that
\begin{align}
   (\Psi,\hat{A}\Phi)_{F}=\int_{M}\langle \Psi,\hat{A}\Phi\rangle_{F}\,d\mu_{M}=\int_{M}\langle \hat{A}^{\dagger}\Psi,\Phi\rangle_{E}\,d\mu_{M}=(\hat{A}^{\dagger}\Psi,\Phi)_{E},
\end{align}
for all test sections $\Psi\in\Gamma_{\mathrm{c}}(F)$ and $\Phi\in\Gamma_{\mathrm{c}}(E)$. If $\hat A$ has Schwartz kernel $A(x,y)\in\mathcal D'(M\times M,F\boxtimes E^*)$, then the Schwartz kernel of $\hat A^\dagger$ is characterized by
\begin{equation}
    A^\dagger(y,x)=(A(x,y))^\dagger .
\end{equation}
In particular, every linear differential operator, or more generally, every pseudodifferential operator, admits a formal adjoint. Moreover, one checks that its adjoint is again a differential or pseudodifferential operator, respectively. 

For a semiclassical family $\hat A=(\hat A_h)_{h\in(0,h_0]}$, we define its formal adjoint componentwise by $\hat A^\dagger:=(\hat A_h^\dagger)_{h\in(0,h_0]}$. In particular, if $\hat A\in h^{-l}\Psi_h^k(M;E,F)$, then $\hat A^\dagger\in h^{-l}\Psi_h^k(M;F,E)$. If two families differ by an element of $h^\infty\Psi_h^{-\infty}(M;E,F)$, then their formal adjoints differ by an element of $h^\infty\Psi_h^{-\infty}(M;F,E)$.

For the Weyl quantization, we have the following result, which generalizes the corresponding result for pseudodifferential operators on $\mathbb{R}^{n}$ (see~\cite[Thm.~23.5]{Shubin}, \cite{HormanderWeyl}, and \cite[Sec.~18.5]{HormanderIII}), as well as for the scalar-valued case on manifolds, as discussed, for instance, in \cite[Thm.~2.1(ii)]{PflaumWeyl} and \cite[Thm.~7.1]{Safarov}. In the present semiclassical geometric setting, the result is naturally formulated modulo $h^\infty$-smoothing symbols and operators.

\begin{proposition}\label{Prop:SelfAdjOp}
Let $a=(a_h)_{h\in(0,h_0]}\in h^{-l}\mathcal S_h^k(M;E,E)$ and let $\hat A=(\hat A_h)_{h\in(0,h_0]}$ be its Weyl quantization. Then $a$ is fiberwise self-adjoint modulo $h^\infty\mathcal S_h^{-\infty}(M;E,E)$ if and only if $\hat A$ is formally self-adjoint modulo $h^\infty\Psi_h^{-\infty}(M;E,E)$, that is,
\begin{equation}
    a=a^\dagger \quad \mathrm{mod} \quad h^\infty\mathcal S_h^{-\infty}(M;E,E) \qquad \Longleftrightarrow \qquad 
    \hat A=\hat A^\dagger \quad \mathrm{mod} \quad h^\infty\Psi_h^{-\infty}(M;E,E).
\end{equation}
As a special case, if $E$ is the trivial line bundle, then real-valued Weyl symbols give rise to formally self-adjoint Weyl quantizations modulo $h^\infty\Psi_h^{-\infty}(M;E,E)$.
\end{proposition}

\begin{proof}
We first record a more general observation. Let $a=(a_h)_{h\in(0,h_0]}\in h^{-l}\mathcal S_h^k(M;E,F)$ and let $\hat A=(\hat A_h)_{h\in(0,h_0]}$ be its Weyl quantization. Then $\hat A^\dagger$ is, modulo $h^\infty\Psi_h^{-\infty}(M;F,E)$, the Weyl quantization of $a^\dagger$. Equivalently, the Weyl symbol of $\hat A^\dagger$ is $a^\dagger$ modulo $h^\infty\mathcal{S}_h^{-\infty}(M;F,E)$. To show this, let $\Psi\in\Gamma_{\mathrm c}(E)$ and $\Phi\in\Gamma_{\mathrm c}(F)$. Using the definition of the formal adjoint and the convention that the Hermitian pairings are linear in the second argument, we get
\begin{align}
    (\Psi,\hat A_h^\dagger\Phi)_E = \overline{(\hat A_h^\dagger\Phi,\Psi)_E} = \overline{(\Phi,\hat A_h\Psi)_F} =
    \overline{ \int_{T^*M} a_h\cdot W_{h,\chi}[\Phi,\Psi] d\mu_{T^*M} }.
\end{align}
For bundle-valued symbols and Wigner functions, the fiberwise duality pairing satisfies
\begin{equation}
    \overline{a_h\cdot W_{h,\chi}[\Phi,\Psi]} = a_h^\dagger\cdot W_{h,\chi}[\Phi,\Psi]^\dagger .
\end{equation}
By \cref{Lemma:WignerAdjoint}, this gives
\begin{align}
    (\Psi,\hat A_h^\dagger\Phi)_E = \int_{T^*M} a_h^\dagger \cdot W_{h,\chi^{\mathrm T}}[\Psi,\Phi] d\mu_{T^*M} =
    \int_{T^*M} a_h^\dagger \cdot W_{h,\chi}[\Psi,\Phi] d\mu_{T^*M} + \mathcal{O}(h^\infty).
\end{align}
The first equality is the weak defining relation for the Weyl quantization of $a^\dagger$ using the cut-off $\chi^{\mathrm T}$. Since $\chi^{\mathrm T}$ is another admissible cut-off, the cut-off independence of Weyl quantization implies that replacing $\chi^{\mathrm T}$ by $\chi$ changes the resulting operator by an element of $h^\infty\Psi_h^{-\infty}(M;F,E)$. The last equality is precisely the weak cut-off independence statement in \cref{Lemma:WignerAdjoint}.

We now specialize to $F=E$. Suppose first that $a-a^\dagger\in h^\infty\mathcal{S}_h^{-\infty}(M;E,E)$. By the previous observation, $\hat A^\dagger$ is the Weyl quantization of $a^\dagger$ modulo $h^\infty\Psi_h^{-\infty}(M;E,E)$. Since Weyl quantization maps $h^\infty\mathcal{S}_h^{-\infty}$ into $h^\infty\Psi_h^{-\infty}$, it follows that $\hat A-\hat A^\dagger \in h^\infty\Psi_h^{-\infty}(M;E,E)$.

Conversely, suppose that $\hat A-\hat A^\dagger \in h^\infty\Psi_h^{-\infty}(M;E,E)$. Again, by the observation above, $\hat A^\dagger$ has Weyl symbol $a^\dagger$ modulo $h^\infty\mathcal{S}_h^{-\infty}(M;E,E)$, while $\hat A$ has Weyl symbol $a$. The uniqueness of Weyl symbols modulo $h^\infty\mathcal{S}_h^{-\infty}(M;E,E)$, as expressed by the inverse formula \cref{eq:kernel2symbol}, therefore implies $a-a^\dagger \in h^\infty\mathcal{S}_h^{-\infty}(M;E,E)$. This completes the proof.
\end{proof}

\begin{remark}
    The previous proposition is actually just a special case of a more general fact: For uniform pseudodifferential calculus on $\mathbb{R}^{n}$, it is well known that the adjoint of the $\tau$-quantization of some symbol $a$ is given by the $(1-\tau)$-quantization of the complex conjugate symbol $a^{\dagger}$ for $\tau\in [0,1]$, see, e.g.~\cite[Thm.~23.5]{Shubin}. A similar result has been established on manifolds with the covariant $\tau$-quantization for scalar-valued operators developed in \cite{PflaumWeyl}. In the case of the Weyl quantization on bundle-valued operators developed in this article, the same holds true modulo $h^\infty$-smoothing symbols and operators, as can be proved analogously to \cref{Prop:SelfAdjOp} with some minor modifications.
\end{remark}

\begin{proposition}[Adjoint of the star product]\label{Prop:AdjointStar}
Let $a=(a_h)_{h\in(0,h_0]}\in h^{-l_1}\mathcal S_h^{k_1}(M;F,G)$ and $b=(b_h)_{h\in(0,h_0]}\in h^{-l_2}\mathcal S_h^{k_2}(M;E,F)$. Then the adjoint of their star product satisfies
\begin{equation}
    (a\star b)^\dagger = b^\dagger\star a^\dagger \qquad\mathrm{mod}\qquad h^\infty\mathcal S_h^{-\infty}(M;G,E).
\end{equation}
Equivalently, we have
\begin{equation}
    \left( (a_h\star b_h)^\dagger - b_h^\dagger\star a_h^\dagger \right)_{h\in(0,h_0]} \in h^\infty\mathcal S_h^{-\infty}(M;G,E).
\end{equation}
In particular, since the two sides differ by an element of $h^\infty\mathcal S_h^{-\infty}(M;G,E)$, any asymptotic expansions of the two sides agree term by term to all orders in $h$.
\end{proposition}

\begin{proof}
We work componentwise in $h\in(0,h_0]$. Taking the adjoint of the oscillatory
formula in \cref{eq:star1}, we obtain
\begin{align}
    &(a_h \star b_h)^\dagger = (\pi h)^{-2 d} \int_{N_{z}} d\mu_{N_z}(u_1,u_2,p_1,p_2)\, \Lambda(z, u_1, u_2)e^{-\frac{2 i p \cdot (w + u_1 - u_2)}{h}} e^{-\frac{2 i (p_2 \cdot u_1 - p_1 \cdot u_2) }{h}} \nonumber \\
    &\quad\times \Big[\J{z}{z - w} \cdot \J{z - w}{z+v_1} a_h(z+v_1, \J{z+v_1}{z} (p+p_1)) \J{z+v_1}{z+\tilde{w}} \nonumber \\
    &\qquad\cdot \J{z+\tilde{w}}{z+v_2} b_h(z+v_2, \J{z+v_2}{z} (p+p_2)) \J{z+v_2}{z+w} \cdot \J{z+w}{z} \Big]^\dagger \quad\mathrm{mod}\quad h^{\infty}\mathcal{S}_h^{-\infty}(M;G,E) \nonumber \\
    &= (\pi h)^{-2 d} \int_{N_{z}} d\mu_{N_z}(u_1,u_2,p_1,p_2)\, \Lambda(z, u_1, u_2)e^{-\frac{2 i p \cdot (w + u_1 - u_2)}{h}} e^{-\frac{2 i (p_2 \cdot u_1 - p_1 \cdot u_2) }{h}} \nonumber \\
    &\quad\times \Big[\J{z}{z + w} \cdot \J{z + w}{z+v_2} b_h^\dagger(z+v_2, \J{z+v_2}{z} (p+p_2)) \J{z+v_2}{z+\tilde{w}} \nonumber \\
    &\qquad\cdot \J{z+\tilde{w}}{z+v_1} a_h^\dagger(z+v_1, \J{z+v_1}{z} (p+p_1))  \J{z+v_1}{z-w} \cdot \J{z-w}{z} \Big] \quad\mathrm{mod}\quad h^{\infty}\mathcal{S}_h^{-\infty}(M;G,E),
\end{align}
where we used that the geometric factor $\Lambda$ is real-valued and that the connections are compatible with the fiber metrics, so that taking adjoints reverses the parallel transport operators. We also used that taking the fiberwise adjoint maps $h^\infty\mathcal S_h^{-\infty}(M;E,G)$ to $h^\infty\mathcal S_h^{-\infty}(M;G,E)$.

Next, we perform the change of variables that swaps the pairs $p_1 \leftrightarrow p_2$ and $u_1 \leftrightarrow u_2$. Note that by \cref{eq:GeomFact,eq:Vectors}, the change of variables $u_1 \leftrightarrow u_2$ implies $v_1 \leftrightarrow v_2$, $w \rightarrow -w$, while $\tilde{w}$ and $\Lambda$ remain unchanged. Therefore, we obtain
\begin{align}
    &(a_h \star b_h)^\dagger = (\pi h)^{-2 d} \int_{N_{z}} d\mu_{N_z}(u_1,u_2,p_1,p_2)\, \Lambda(z, u_1, u_2)e^{\frac{2 i p \cdot (w + u_1 - u_2)}{h}} e^{\frac{2 i (p_2 \cdot u_1 - p_1 \cdot u_2) }{h}} \nonumber \\
    &\quad\times \Big[\J{z}{z - w} \cdot \J{z - w}{z+v_1} b_h^\dagger(z+v_1, \J{z+v_1}{z} (p+p_1)) \J{z+v_1}{z+\tilde{w}} \nonumber \\
    &\qquad\cdot \J{z+\tilde{w}}{z+v_2} a_h^\dagger(z+v_2, \J{z+v_2}{z} (p+p_2))  \J{z+v_2}{z+w} \cdot \J{z+w}{z} \Big] \quad\mathrm{mod}\quad h^{\infty}\mathcal{S}_h^{-\infty}(M;G,E) \nonumber \\
    &= b_h^\dagger \star a_h^\dagger,
\end{align}
which is exactly the star product formula \cref{eq:star1} applied to the pair
$b_h^\dagger\in h^{-l_2}\mathcal S_h^{k_2}(M;F,E)$ and $a_h^\dagger\in h^{-l_1}\mathcal S_h^{k_1}(M;G,F)$. Thus
\begin{equation}
    (a_h\star b_h)^\dagger = b_h^\dagger\star a_h^\dagger \qquad \mathrm{mod} \qquad h^\infty\mathcal S_h^{-\infty}(M;G,E).
\end{equation}
Since this holds as a statement about the family indexed by $h\in(0,h_0]$, we obtain
\begin{equation}
    (a\star b)^\dagger = b^\dagger\star a^\dagger \qquad \mathrm{mod} \qquad h^\infty\mathcal S_h^{-\infty}(M;G,E).
\end{equation}
This also implies that any asymptotic expansions of the two sides agree term by term to all orders in $h$.
\end{proof}

\subsection{Moyal equation}

In this section, we show how the Weyl calculus can be used to derive a Moyal equation \cite{Moyal_1949} for the Wigner function $W_h[\Phi_h, \Phi_h]$ associated with an $h$-dependent family of sections $\Phi = (\Phi_h)_{h \in (0, h_0]}$ satisfying a semiclassical differential equation $\hat{D}_h \Phi_h = 0$. 

\begin{remark}
Let $W_h[\Psi_h, \Phi_h](z, p)$ be the Wigner function of two sections $\Psi_h \in \Gamma(F)$ and $\Phi_h \in \Gamma(E)$, as given in \cref{Def:Wigner}. Then, treating the Wigner function $W_h[\Psi_h, \Phi_h](z, p)$ as a Weyl symbol, it follows from \cref{eq:kernel2symbol} that the corresponding Schwartz kernel is
\begin{equation} \label{eq:density_operator}
    \rho_h[\Psi_h, \Phi_h](x, y) =\frac{1}{(2 \pi h)^d} \chi(y, x) \Delta^{1-2\gamma}(y, x) \Psi_h^*(y) \otimes \Phi_h(x).
\end{equation}
In particular, for $\gamma = \tfrac{1}{2}$ the Van Vleck-Morette determinant is eliminated, and $\rho_h[\Phi_h, \Phi_h](x, x)$ corresponds to the density operator in quantum mechanics, up to the normalization factor $(2 \pi h)^{-d}$.
\end{remark}

Since we now allow non-compactly supported sections, we choose the cut-off $\chi$ in the definition of the Wigner function to be properly supported near the diagonal.

\begin{proposition}[Moyal equation]\label{prop:Moyal}
Fix the power of the Van Vleck-Morette determinant in the quantization to $\gamma = \tfrac{1}{2}$. Let $\hat D=(\hat D_h)_{h \in (0, h_0]} \in h^{-l}\mathrm{Diff}_h^k(M;E,F)$ be a semiclassical differential operator with Weyl symbol $d=(d_h)_{h \in (0, h_0]} \in h^{-l}\mathcal S_h^k(M;E,F)$. Let $\Phi=(\Phi_h)_{h \in (0, h_0]}$ be a family of sections of $E$. Then, for every family $\Psi=(\Psi_h)_{h \in (0, h_0]}$ of sections of $F$, one has
\begin{equation}
    W_h[\Psi_h,\Phi_h]\in \mathcal S^{-\infty}(M;F,E), \qquad
    d_h\star W_h[\Psi_h,\Phi_h]\in \mathcal S^{-\infty}(M;F,F) \qquad \forall \, h \in (0, h_0],
\end{equation}
as well as the relation
\begin{equation}\label{eq:SemiclassicalMoyalMomentIdentity}
    \int_{T_x^*M} \left(d_h\star W_h[\Psi_h,\Phi_h]\right)(x,p)\, d\mu_{T_x^*M}(p) =
    \Psi_h^*(x)\otimes(\hat D_h\Phi_h)(x) \qquad \forall \, x \in M, h \in (0, h_0].
\end{equation}
Furthermore, if $\hat{D}_h \Phi_h = 0$ for all $h\in(0,h_0]$, and if $C_h(x, y)$ denotes the Schwartz kernel corresponding to $d_h \star W_h$, then 
\begin{equation}
    C_h\in C^\infty(M\times M,F\boxtimes F^*), \qquad \mathrm{supp}\, C_h  \subset \mathcal {T},
\end{equation}
where $\mathcal T := \mathrm{cl} \{(x,y): \chi(y,x)\text{ is not locally constant in the } x\text{-variable}\}$. In particular, $C_h$ vanishes in a neighborhood of the diagonal.
\end{proposition}

\begin{proof}
Fix $h \in (0, h_0]$ and $\gamma = \tfrac{1}{2}$. By \cref{eq:density_operator}, the kernel corresponding to $W_h[\Psi_h,\Phi_h]$ is
\begin{equation}\label{eq:MoyalProofDensityKernel}
    \rho_h[\Psi_h, \Phi_h](x, y) = \frac{1}{(2\pi h)^d} \chi(y,x) \Psi_h^*(y)\otimes \Phi_h(x).
\end{equation}
For each fixed $h$, this kernel is smooth. Since $\chi$ is properly supported, $\rho_h[\Psi_h, \Phi_h]$ is properly supported as a kernel from sections of $F$ to sections of $E$. Moreover, in local midpoint coordinates $(z,u)$, the assumptions on $\chi$ imply that, over every compact $K\Subset M$ in the base variable $z$, the relative variable $u$ is confined to a compact set. Thus, the local amplitude in the formula for $W_h[\Psi_h,\Phi_h]$ is smooth and compactly supported in $u$, uniformly for $z\in K$. Differentiating under the integral and integrating by parts in $u$ then gives rapid decay in $p$, with all $z$- and $p$-derivatives. Hence, $W_h[\Psi_h,\Phi_h]\in\mathcal S^{-\infty}(M;F,E)$.

Next, by definition, $d_h\star W_h[\Psi_h,\Phi_h]$ is the Weyl symbol of the composition of $\hat D_h$ with the smoothing operator whose kernel is $\rho_h[\Psi_h, \Phi_h]$. Since $\hat D_h$ is a differential operator with kernel $D_h(x, y)$, this composition has the kernel
\begin{align}\label{eq:C_Kernel}
    C_h(x,y) &= \int_M D_h(x, z) \rho_h[\Psi_h, \Phi_h](z, y)  d\mu_M(z) \nonumber \\
    &=\frac{1}{(2 \pi h)^d} \Psi_h^*(y) \otimes \int_M D_h(x, z) \chi(y, z) \Phi_h(z) d\mu_M(z) \nonumber \\
    &= \hat D_{h,x} \rho_h[\Psi_h, \Phi_h](x,y) \nonumber\\
    &= \frac{1}{(2\pi h)^d} \Psi_h^*(y) \otimes \hat D_{h,x} \Big[ \chi(y,x)\Phi_h(x) \Big] \nonumber \\
    &= \frac{1}{(2\pi h)^d} \chi(y,x) \Psi_h^*(y) \otimes \hat D_{h,x} \Phi_h(x) + \frac{1}{(2\pi h)^d} \Psi_h^*(y) \otimes [\hat D_{h,x},\chi(y,x)] \Phi_h(x).
\end{align}
Here, the proper support of $\chi$ ensures that $C_h(x,y)$ is properly supported. Since $C_h$ is smooth and has the same proper-support and local relative-support properties as $\rho_h[\Psi_h, \Phi_h]$, the same argument as above shows that its Weyl symbol is smoothing. Thus, we obtain $d_h\star W_h[\Psi_h,\Phi_h]\in \mathcal S^{-\infty}(M;F,F)$. 

Using \cref{eq:C_Kernel} and evaluating the kernel--symbol relation \eqref{eq:symbol2kernel} between $C_h(x,y)$ and $d_h\star W_h$ on the diagonal, we obtain
\begin{align}
    C_h(x, x) &= \frac{1}{(2\pi h)^d} \chi(x,x) \Psi_h^*(x) \otimes (\hat D_{h,x} \Phi_h)(x) + \frac{1}{(2\pi h)^d} \Psi_h^*(x) \otimes [\hat D_{h,x},\chi(y,x)] \Phi_h(x) \Big|_{y=x} \nonumber\\
    &= \frac{1}{(2\pi h)^d} \Psi_h^*(x) \otimes (\hat D_{h,x} \Phi_h)(x) \nonumber\\
    &=\frac{1}{(2\pi h)^d} \int_{T_x^*M} (d_h\star W_h[\Psi_h,\Phi_h])(x,p) d\mu_{T_x^*M}(p). 
\end{align}
Since $\chi=1$ in a neighborhood of the diagonal, all derivatives of $\chi(y,x)$ with respect to the $x$-variable vanish at $y=x$. As a consequence, we have $[\hat D_{h,x},\chi(y,x)] \Phi_h(x) |_{y=x} = 0$, since this term always contains derivatives of $\chi$ evaluated on the diagonal. Therefore
\begin{equation}
    \int_{T_x^*M}
    \left(d_h\star W_h[\Psi_h,\Phi_h]\right)(x,p)\,
    d\mu_{T_x^*M}(p)
    =
    \Psi_h^*(x)\otimes(\hat D_h\Phi_h)(x) \qquad \forall \, x \in M, h \in (0, h_0].
\end{equation}
Finally, if $\hat{D}_h \Phi_h = 0$, then the kernel of $d_h \star W_h$ is given by
\begin{align}
    C_h(x,y) = \frac{1}{(2\pi h)^d} \Psi_h^*(y) \otimes [\hat D_{h,x},\chi(y,x)] \Phi_h(x).
\end{align}
Since the commutator $[\hat D_{h,x},\chi(y,x)] \Phi_h(x)$ always contains $x$-derivatives of $\chi(y,x)$, its support is contained in the transition region $\mathcal{T}$ where $\chi(y,x)$ is not locally constant in the $x$-variable. 
\end{proof}

\begin{remark}
Consider the particular cases of Minkowski spacetime or globally geodesically convex manifolds together with $\gamma = \tfrac{1}{2}$. Then, the quantization can be defined without a cut-off function, the Van Vleck-Morette determinant is eliminated from \cref{eq:density_operator}, and a stronger version of the Moyal equation can be derived. In this case, if $\hat D_h\Phi_h=0$, then we have
\begin{align}
    \hat{D}_{h,x} \rho_h[\Psi_h, \Phi_h](x, y) &= \frac{1}{(2 \pi h)^d} \Psi_h^*(y) \otimes (\hat{D}_h \Phi_h)(x) = 0,
\end{align}
and we obtain the Moyal equation $d_h \star W_h[\Psi_h, \Phi_h] = 0$.
\end{remark}

\section{Examples of Weyl symbols}
\label{Sec:Examples}

In this section, we show how the Weyl symbols of certain operators can be calculated. We focus our attention on second-order linear differential operators and calculate the Weyl symbols corresponding to the scalar wave equation, Maxwell's equations, linearized Yang-Mills equations, linearized Einstein equations, and the Dirac equation. The principal parts of these symbols also arise when geometric optics/WKB-type approximations of the corresponding wave equations are considered \cite{Oancea_2020,GSHE_GW,GSHE_Dirac,oancea2025quantum}, and govern the high-frequency/semiclassical dynamics of the waves. 

\begin{proposition}[Quantization of polynomial symbols]
\label{prop:polynomial_quantization}
Let $l\in\mathbb R$, and let
\begin{equation}
    d=(d_h)_{h\in(0,h_0]} \in h^{-l}\mathcal{S}_h^2(M;E,F)
\end{equation}
be a polynomial symbol of degree at most two in the cotangent variable. More explicitly, assume that there exist semiclassical families of smooth coefficient fields 
\begin{equation}
    a=(a_h)_{h\in(0,h_0]}, \qquad b=(b_h)_{h\in(0,h_0]}, \qquad c=(c_h)_{h\in(0,h_0]},
\end{equation}
with
\begin{equation}
    a_h\in\Gamma(M,E^*\otimes F\otimes TM^{\otimes_{s}2}), \qquad
    b_h\in\Gamma(M,E^*\otimes F\otimes TM), \qquad
    c_h\in\Gamma(M,E^*\otimes F),
\end{equation}
whose local $C^\infty$-seminorms on compact subsets are uniformly bounded in $h$. In local coordinates and local trivializations, the frame components of $d_h$ are
given by
\begin{equation}
\label{eq:polynomial_symbol}
    d\indices{_{h}^A_{B}}(z,p) = h^{-l} \left[ a\indices{_{h}^A_B^{\mu\nu}}(z)p_\mu p_\nu + b\indices{_{h}^A_B^{\mu}}(z)p_\mu + c\indices{_{h }^A_{B}}(z) \right].
\end{equation}
Then, the Weyl quantization of $d$ is the semiclassical differential operator
\begin{equation}
    \hat D=(\hat D_h)_{h\in(0,h_0]}
    \in h^{-l}\mathrm{Diff}_h^2(M;E,F)
\end{equation}
given by
\begin{align}
\label{eq:polynomial_operator}
    \hat D\indices{_{h}^A_{B}} &= h^{-l} \bigg\{ -h^2 a\indices{_{h}^A_B^{\mu\nu}} \left[ \nabla_{(\mu}\nabla_{\nu)} - \frac{\gamma}{3}R_{\mu\nu} \right]
    - h^2 \left[ \nabla_{(\mu}a\indices{_{h}^A_B^{\mu\nu}} \right] \nabla_{\nu)} \nonumber\\
    &\qquad\qquad- \frac{h^2}{4} \left[ \nabla_{(\mu}\nabla_{\nu)} a\indices{_{h}^A_B^{\mu\nu}} \right] - i h b\indices{_{h}^A_B^{\mu}}\nabla_\mu - \frac{i h}{2} \left[ \nabla_\mu b\indices{_{h}^A_B^{\mu}} \right] + c\indices{_{h}^A_{B}} \bigg\}.
\end{align}
Here, the covariant derivatives are those induced by $\nabla^E$, $\nabla^F$, and the Levi-Civita connection.
\end{proposition}

\begin{proof}
We work componentwise in $h\in(0,h_0]$. Starting from \cref{Def:WeylQuant}, we consider a Wigner function $W_h \in \Gamma(T^*M, \pi^*(F^* \otimes E))$ with frame components $W\indices{_h_A^B}$ and write
\begin{align}
    \int_{M} \langle \Psi, \hat{D}_h \Phi \rangle_F\, d \mu_M = \int_{M} \Psi^*_A \hat{D}\indices{_h^A_B} \Phi^B \, d \mu_M = \int_{T^*M} d\indices{_h^A_B} W\indices{_h_A^B} [\Psi,\Phi]\, d\mu_{T^*M}.
\end{align}
We evaluate the three polynomial terms in \cref{eq:polynomial_symbol} separately. For the term quadratic in $p$, we have
\begin{align}
    &h^{-l} \int_{T^*M} a\indices{_h^A_B^{\mu \nu}}(z) p_\mu p_\nu W\indices{_h_A^B}(z, p) \, d\mu_{T^*M} (z,p) = \nonumber \\
    &= h^{-l} \int_{T^*M \times T_z M} a\indices{_h^A_B^{\mu \nu}}(z) p_\mu p_\nu \J{z}{x} \Psi^*_A \left( x \right)  \J{z}{y} \Phi^B \left(y\right)  \chi\left(x, y \right)\Delta^{-\gamma}\left(x, y \right) e^{-\frac{i p \cdot u}{h}} \, \frac{d\mu_{T^*M \times T_z M}}{(2\pi h)^d} \nonumber \\
    &= -h^{2-l} \int_{T^*M \times T_z M} a\indices{_h^A_B^{\mu \nu}}(z) \J{z}{x} \Psi^*_A \left( x \right)  \J{z}{y} \Phi^B \left(y\right) \chi\left(x, y \right) \Delta^{-\gamma}\left(x, y \right) \partial_{u^\mu} \partial_{u^\nu} e^{-\frac{i p \cdot u}{h}} \, \frac{d\mu_{T^*M \times T_z M}}{(2\pi h)^d} \nonumber \\
    &= -h^{2-l} \int_{T^*M \times T_z M} a\indices{_h^A_B^{\mu \nu}}(z) e^{-\frac{i p \cdot u}{h}} \partial_{u^\mu} \partial_{u^\nu} \left[ \J{z}{x} \Psi^*_A \left( x \right)  \J{z}{y} \Phi^B \left(y\right) \chi\left(x, y \right) \Delta^{-\gamma}\left(x, y \right) \right] \, \frac{d\mu_{T^*M \times T_z M}}{(2\pi h)^d} \nonumber \\
    &= -h^{2-l} \int_{M} a\indices{_h^A_B^{\mu \nu}}(z) \partial_{u^\mu} \partial_{u^\nu} \left[ \J{z}{x} \Psi^*_A \left( x \right)  \J{z}{y} \Phi^B \left(y\right) \chi\left(x, y \right) \Delta^{-\gamma}\left(x, y \right) \right] \Big|_{u = 0} \, d\mu_M(z).
\end{align}
In the above equation, we used the notation $x = z - \frac{u}{2}$ and $y = z + \frac{u}{2}$. The first equality is obtained by using \cref{Def:Wigner} of the Wigner function. The second equality follows by replacing $p_\mu p_\nu$ with $u$ derivatives acting on the exponential factor. In the third equality, we integrate by parts, and the final line is obtained by integrating over $T^*_z M \times T_z M$ using an integral representation of the Dirac delta function. 

We continue with the evaluation of the remaining $u$ derivatives. By definition, the cut-off function is identically $1$ in a neighborhood of the diagonal; thus, $\chi(z, z) = 1$ and all derivatives of $\chi$ vanish at $u = 0$. The derivatives of the Van Vleck-Morette determinant can be evaluated following \cite[Sec.~II.H]{visser}. This gives
\begin{align}
    \Delta^{-\gamma}\left(x, y \right) \Big|_{u = 0} = 1, \qquad
    \partial_{u^\mu} \Delta^{-\gamma}\left(x, y \right) \Big|_{u = 0} = 0, \qquad
    \partial_{u^\mu} \partial_{u^\nu} \Delta^{-\gamma}\left(x, y \right) \Big|_{u = 0} = -\frac{\gamma}{3} R_{\mu \nu}(z).
\end{align}
Using Lemma~\ref{lemma:derivative}, the derivatives of the transported sections are
\begin{subequations}
\begin{alignat}{2}
    \partial_{u^\mu} \J{z}{x} \Psi^*_A(x) \Big|_{u = 0}
    &= - \frac{1}{2} \nabla_\mu \Psi^*_A(z),
    &\qquad
    \partial_{u^\mu} \J{z}{y} \Phi^B(y) \Big|_{u = 0}
    &= + \frac{1}{2} \nabla_\mu \Phi^B(z), \\
    \partial_{u^\mu} \partial_{u^\nu} \J{z}{x} \Psi^*_A(x) \Big|_{u = 0}
    &= \frac{1}{4} \nabla_{(\mu} \nabla_{\nu)} \Psi^*_A(z),
    &\qquad
    \partial_{u^\mu} \partial_{u^\nu} \J{z}{y} \Phi^B(y) \Big|_{u = 0}
    &= \frac{1}{4} \nabla_{(\mu} \nabla_{\nu)} \Phi^B(z).
\end{alignat}
\end{subequations}
Bringing everything together, we obtain
\begin{align}
    &h^{-l} \int_{T^*M} a\indices{_h^A_B^{\mu \nu}}(z) p_\mu p_\nu W\indices{_h_A^B}(z, p) \, d\mu_{T^*M}(z,p) = \nonumber \\
    &=-h^{2-l} \int_{M} a\indices{_h^A_B^{\mu \nu}} \bigg[ \frac{1}{4} \nabla_{(\mu} \nabla_{\nu)} \Psi^*_A \Phi^B + \frac{1}{4} \Psi^*_A \nabla_{(\mu} \nabla_{\nu)} \Phi^B - \frac{1}{2} \nabla_{(\mu} \Psi^*_A \nabla_{\nu)} \Phi^B - \frac{\gamma}{3} \Psi^*_A \Phi^B R_{\mu \nu} \bigg] \, d\mu_M(z) \nonumber \\
    &= \int_{M} \Psi^*_A \bigg[ -h^{2-l} a\indices{_h^A_B^{\mu \nu}} \left( \nabla_{(\mu} \nabla_{\nu)}  - \frac{\gamma}{3}  R_{\mu \nu} \right) -h^{2-l} \left(\nabla_{(\mu} a\indices{_h^A_B^{\mu \nu}} \right) \nabla_{\nu)} \nonumber\\
    &\qquad\qquad\qquad\qquad- \frac{h^{2-l}}{4} \nabla_{(\mu} \nabla_{\nu)} a\indices{_h^A_B^{\mu \nu}}  \bigg] \Phi^B \, d\mu_M(z),
\end{align}
where the last equality is obtained after integration by parts. A similar calculation gives the quantization of the other terms in \cref{eq:polynomial_symbol} as 
\begin{subequations}
\begin{align}
    &h^{-l} \int_{T^*M} b\indices{_h^A_B^{\mu}} p_\mu W\indices{_h_A^B} \, d\mu_{T^*M}(z,p) = -i h^{1-l} \int_M \Psi^*_A \left[ b\indices{_h^A_B^{\mu}} \nabla_\mu + \frac{1}{2} \left( \nabla_\mu b\indices{_h^A_B^{\mu}} \right)\right] \Phi^B \, d\mu_{M}(z), \\
    &h^{-l} \int_{T^*M} c\indices{_h^A_B}(z) W\indices{_h_A^B}(z, p) \, d\mu_{T^*M}(z,p) = h^{-l}\int_M \Psi^*_A c\indices{_h^A_B} \Phi^B \, d\mu_{M}(z),
\end{align}
which concludes the proof.
\end{subequations}
\end{proof}

We now turn to several important examples of second-order linear differential operators that are important for physical applications, all of which are specific instances of \cref{prop:polynomial_quantization}. To start with, we consider the scalar wave operator, the (spin) Dirac operator, and the Maxwell operator.

\begin{example}[Scalar wave operator] \label{Exam:WaveOp}
Consider a massless scalar field $\psi \in C^\infty(M)$. The Weyl symbol of the Laplace-Beltrami/d'Alembert operator, or scalar wave operator, $\hat{D} \psi  = g^{\mu \nu} \nabla_{\mu} \nabla_{\nu} \psi $ on $(M,g)$ is given by 
\begin{align} 
    d(z,p) = - \frac{1}{h^2} g^{\mu \nu}(z) p_{\mu} p_{\nu} +  \frac{\gamma}{3} R(z), 
\end{align} where $R$ denotes the Ricci scalar. In particular, we recover the result of \cite{fulling,dls_weyl} when $\gamma = \tfrac{1}{2}$.
\end{example}

\begin{example}[Dirac operator and spinor wave operator] Let $(M,g)$ be a $d$-dimensional Lorentzian spin manifold equipped with a spin structure, and consider the vector bundle
\begin{align}
    SM:=\mathrm{Spin}_{d-1,1}(M)\times_{\rho}\mathbb{C}^{N},
\end{align}
where $N:=2^{\lfloor\frac{d}{2}\rfloor}$ (i.e.~$N=2^{\frac{d}{2}}$ for $d$ even and $N=2^{\frac{d-1}{2}}$ for $d$ odd), associated to the spin principal bundle $\mathrm{Spin}_{d-1,1}(M)$ via the spin representation $\rho\:\mathrm{Spin}_{0}(d-1,1)\to\mathrm{Aut}(\mathbb{C}^{N})$ (see~\cite{MR685274,lawson,Hamilton}). The Dirac operator with mass $m$ is defined by
\begin{equation}
    \hat{D} \psi = \left( i h \boldsymbol{\gamma}^\mu \nabla_\mu - m \right) \psi ,
\end{equation}
for all $\psi\in\Gamma(SM)$, where $\nabla$ denotes the spin connection obtained as a lift of the Levi-Civita connection, and where $\boldsymbol{\gamma}\:TM\to\mathrm{End}(SM)$ with $\boldsymbol{\gamma}_{\mu}:=\boldsymbol{\gamma}(\partial_{\mu})\in\Gamma^{\infty}(\mathrm{End}(SM))$ are the spacetime gamma matrices, which satisfy $\boldsymbol{\gamma}_{(\mu} \boldsymbol{\gamma}_{\nu)} = -  g_{\mu \nu}$, as well as the compatibility condition $\nabla_\mu \boldsymbol{\gamma}_\nu = 0$. Then, the Weyl symbol of the Dirac operator $\hat{D}$ is
\begin{align}
    d(z, p) = - \boldsymbol{\gamma}^\mu(z) p_\mu - m.
\end{align}
Furthermore, we can also introduce the spinor wave operator
\begin{equation} \label{eq:def_C}
    \hat{C} \psi =  g^{\mu \nu} \nabla_\mu \nabla_\nu \psi,
\end{equation}
which has the Weyl symbol
\begin{equation} \label{eq:symbol_spin_wave}
    c(z, p) = -\frac{1}{h^2} g^{\mu \nu}(z) p_\mu p_\nu + \frac{\gamma}{3} R(z),
\end{equation}
where $R$ is the Ricci scalar. The spinor wave operator $\hat{C}$ is related to the square of $\hat{D}_0 = \boldsymbol{\gamma}^\mu \nabla_\mu$ by the Lichnerowicz–Weitzenböck formula as follows:
\begin{align} \label{eq:LW_formula}
    \hat{D}_0^2 \psi&= \left( \boldsymbol{\gamma}^\mu \nabla_\mu \right) \left( \boldsymbol{\gamma}^\nu \nabla_\nu \right) \psi\nonumber\\
    &= \left( - g^{\mu \nu} \nabla_\mu \nabla_\nu - \frac{1}{8} R_{\alpha \beta \mu \nu} \boldsymbol{\gamma}^\alpha \boldsymbol{\gamma}^\beta \boldsymbol{\gamma}^\mu \boldsymbol{\gamma}^\nu \right) \psi \nonumber\\
    &= \left( - \hat{C} + \frac{1}{4} R \right) \psi.
\end{align}

\end{example}

\begin{example}[Maxwell's equations] 
Let $(M,g)$ be a $d$-dimensional Lorentzian manifold and consider the exterior derivative $d:\Omega^{k}(M)\to\Omega^{k+1}(M)$ and its formal adjoint with respect to the non-degenerate Hodge bilinear form
\begin{align}\label{eq:Hodge}
    (\alpha,\beta)_{\Omega^{k}}:=\frac{1}{k!}\int_{M}\alpha_{\mu_{1}\dots\mu_{k}}\beta^{\mu_{1}\dots\mu_{k}}\,d\mu_{M}=\int_{M}\alpha\wedge\ast\beta\qquad\qquad\forall\alpha,\beta\in\Omega^{k}_{\mathrm{c}}(M),
\end{align}
the codifferential $\delta\colon\Omega^{k+1}(M)\to\Omega^{k}(M)$, defined by
\begin{subequations}
     \begin{align}
       &(d\omega)_{\alpha_{0}\dots\alpha_{k}}:=(1+k)\nabla_{[\alpha_{0}}\omega_{\alpha_{1}\dots\alpha_{k}]}\\
       &\delta:=(-1)^{k+1}\ast^{-1} d\ast\qquad\text{i.e.}\qquad(\delta\eta)_{\alpha_{1}\dots\alpha_{k}}=-g^{\mu\nu}\nabla_{\mu}\eta_{\nu\alpha_{1}\dots\alpha_{k}},
    \end{align}
\end{subequations}
for $\omega\in\Omega^{k}(M)$ and $\eta\in\Omega^{k+1}(M)$, where $\ast\:\Omega^{j}(M)\to\Omega^{d-j}(M)$ for $j\in\{0,\dots,d\}$ denotes the Hodge $\ast$-operator. Now, consider the Maxwell operator $\hat{D}:=\delta d\colon\Omega^{1}(M)\to\Omega^{1}(M)$, i.e.
\begin{equation}
    \hat{D}\indices{_\alpha^\beta} A_\beta = -2\delta_{[\alpha}^{\beta}\nabla^{\lambda}\nabla_{\lambda]}A_{\beta}= \left( \nabla^\beta \nabla_\alpha - \delta^\beta_\alpha g^{\mu \nu} \nabla_\mu \nabla_\nu \right) A_\beta = 0\,
\end{equation}
for all $A\in\Omega^{1}(M)$. Then, the Weyl symbol of the operator $\hat{D}\indices{_\alpha^\beta}$ is
\begin{align}
  d\indices{_\alpha^\beta}(z,p) = -\frac{1}{h^2} \bigg[ p^\beta p_\alpha - \delta_\alpha^\beta g^{\mu \nu}(z) p_\mu p_\nu \bigg] + \frac13 \bigg[ \frac{3 + 2 \gamma}{2} R\indices{_\alpha^\beta}(z) - \gamma R(z) \, \delta_\alpha^\beta \bigg].
\end{align}
Note that we also used \cref{eq:commutator_covd} to swap the order of some covariant derivatives in \cref{eq:polynomial_operator}. 
\end{example}

As a generalization of the previous example, we consider the linearized Yang-Mills equations. To avoid technicalities and to keep the notation simple, we consider the special case of gauge theories with trivial principal bundles. For a mathematical discussion of the Yang-Mills theory, see~\cite{Hamilton,Rudolph}. A detailed discussion of the linearized theory can be found, for example, in \cite[Sec.~2.2.2]{Schmid}; see also \cite[Ex.~3.7]{HackSchenkel} for a brief overview.

\begin{example}[Linearized Yang-Mills equations]
    Let $(M,g)$ be a Lorentzian manifold and $G$ be a finite-dimensional, real, and semisimple Lie group with Lie algebra $(\mathfrak{g},[\cdot,\cdot])$. Furthermore, consider the bundles $E_{k}:=\mathfrak{g}\otimes\bigwedge^{k}T^{\ast}M$ whose sections are $\mathfrak{g}$-valued $k$-forms, i.e.~$\Omega^{k}(M,\mathfrak{g}):=\Gamma(E_{k})$. We equip the bundles with the natural bundle metrics induced by the canonical bundle metric for differential forms, cf.~\cref{eq:Hodge}, and the Killing form of $\mathfrak{g}$. For a given $1$-form $A\in\Omega^{1}(M,\mathfrak{g})$, we define its curvature by means of Cartan's structure equation, i.e.
    \begin{align}\label{eq:LinYMCur}
        F:=dA+\frac{1}{2}[A\wedge A],
    \end{align}
    where $[\cdot\wedge\cdot]$ denotes the natural wedge product on $\Omega^{\bullet}(M,\mathfrak{g})$ induced by $[\cdot,\cdot]$. Finally, we define the exterior covariant derivative and covariant codifferential by
    \begin{subequations}
        \begin{align}
            d_{A}&\:\Omega^{k}(M,\mathfrak{g})\to\Omega^{k+1}(M,\mathfrak{g}),\qquad d_{A}:=d+[A\wedge\cdot]\\
            \delta_{A}&\:\Omega^{k+1}(M,\mathfrak{g})\to\Omega^{k}(M,\mathfrak{g}),\qquad \delta_{A}:=d_{A}^{\ast}=(-1)^{k+1}\ast^{-1} d_{A}\ast.
        \end{align}
    \end{subequations}
    Now, assume that a given $A\in\Omega^{1}(M,\mathfrak{g})$ with corresponding curvature $F$ defined by \cref{eq:LinYMCur} satisfies the nonlinear Yang-Mills equations $\delta_{A}F=0$. The linearized Yang-Mills equations around $A$ are given by
    \begin{align}
        \hat{D}\alpha:=\delta_{A}d_{A}\alpha-\ast[\ast F\wedge \alpha]=0
    \end{align}
    for all $\alpha\in\Omega^{1}(M,\mathfrak{g})$. Choosing a basis $T_{A}$ of $\mathfrak{g}$ and denoting the corresponding structure constants by $f_{BC}^{A}$, i.e.~$[T_{B},T_{C}]=f_{BC}^{A}T_{A}$, a straightforward computation in local coordinates shows that
    \begin{align}
        \tensor{\hat{D}}{^A_{B\mu}^{\nu}}\alpha_{\nu}^{B}=\bigg[&-2\delta_{B}^{A}\delta_{[\mu}^{\nu}\nabla^{\lambda}\nabla_{\lambda]}+2f_{BC}^{A} \left( A_{[\lambda}^{C}\delta_{\mu]}^{\nu}\nabla^{\lambda}+A^{\lambda C}\delta_{[\mu}^{\nu}\nabla_{\lambda]} \right) \nonumber\\ &\quad+2f_{BC}^{A}\delta^{\nu}_{[\mu}\nabla^{\lambda}A_{\lambda]}^{C}+f_{BC}^{A}F_{\mu\lambda}^{C}g^{\lambda\nu}+2f_{DE}^{A}f_{BC}^{E}A^{\lambda D}A^{C}_{[\lambda}\delta^{\nu}_{\mu]}\bigg]\alpha_{\nu}^{B}=0.
    \end{align}
    Then, the Weyl symbol of $\hat{D}$ is given by 
    \begin{align}
        \tensor{d}{^A_{B\mu}^{\nu}}(z,p)=&-\frac{1}{h^2} \delta^{A}_{B}\bigg[ p^\nu p_\mu - \delta_\mu^\nu g^{\alpha \beta}(z) p_\alpha p_\beta \bigg] + \frac13 \delta_{B}^{A} \bigg[ \frac{3 + 2 \gamma}{2} R\indices{_\mu^\nu}(z) - \gamma R(z) \, \delta_\mu^\nu \bigg]\nonumber\\&+\frac{i}{h}f_{BC}^{A}\bigg[2\delta_{\mu}^{\nu}A^{\lambda C}(z)p_{\lambda}-A^{C}_{\mu}(z)p^{\nu}-A^{C\nu}(z)p_{\mu}\bigg]\nonumber\\&+\frac{3}{2}f_{BC}^{A}F_{\mu\lambda}^{C}(z)g^{\lambda\nu}(z)\nonumber\\&-\frac{1}{2}f_{BC}^{A}f_{DE}^{C}A^{D}_{\mu}(z)A^{\nu E}(z)+2f_{DE}^{A}f_{BC}^{E}A^{\lambda D}(z)\delta^{\nu}_{[\mu}A^{C}_{\lambda]}(z).
    \end{align}
\end{example}

Another example of a linear gauge theory is provided by linearized gravity. This is the gravitational field linearized around an arbitrary background metric solving the nonlinear Einstein field equations in vacuum.

\begin{example}[Linearized Einstein field equations]
Consider a $d$-dimensional Lorentzian manifold $(M,g)$ with $d>2$, such that $g$ is a solution to the nonlinear Einstein field equations in vacuum with cosmological constant $\Lambda\in\mathbb{R}$:
\begin{align}
    R_{\mu\nu}-\frac{1}{2}Rg_{\mu\nu}+\Lambda g_{\mu\nu}=0.
\end{align}
Consider the linearized Einstein field equations around the background $g$ \cite{GerardMurroWrochna,FewsterHunt}:
\begin{align}
    \hat{D}\indices{_{\alpha \beta}^{\sigma \delta}} \chi_{\sigma \delta} \nonumber:= &\bigg[ \delta^{\sigma}_{\alpha}\delta^{\delta}_{\beta} \left(\nabla_\mu \nabla^\mu+\frac{4\Lambda}{d-2}\right)  -2 \delta_{(\alpha}^\delta \nabla^\sigma \nabla_{\beta)} \\&\quad + g^{\sigma \delta} \nabla_\alpha \nabla_\beta+g_{\alpha \beta}\nabla^\sigma \nabla^\delta  - g_{\alpha\beta}g^{\sigma \delta}\bigg( \nabla_\mu \nabla^\mu+\frac{2\Lambda}{d-2} \bigg)\bigg] \chi_{\sigma \delta}=0,
\end{align}
for the metric perturbation $\chi_{\mu \nu} \in \Gamma(T^{\ast}M^{\otimes_{s}2})$, where indices are raised and lowered via the background metric $g$, and where $\nabla$ denotes the Levi-Civita connection of $g$ (a detailed derivation of this equation can, for instance, be found in \cite[Prop.~2.38 and 2.40]{Schmid}). Then, the Weyl symbol of $\hat{D}\indices{_{\alpha \beta}^{\sigma \delta}}$ is
\begin{align}
    \tensor{d}{_{\alpha \beta}}{^{\sigma\delta}}(z,p) &= -\frac{1}{h^2} \bigg[ \delta^{\sigma}_{\alpha}\delta^{\delta}_{\beta}p_\mu p^\mu - g_{\alpha \beta}(z) g^{\sigma \delta}(z) p_\mu p^\mu + g^{\sigma \delta}(z) p_\alpha p_\beta+ g_{\alpha \beta}(z) p^\sigma p^\delta - 2\delta_{(\alpha}^\delta p^\sigma p_{\beta)} \bigg] \nonumber\\
    &\qquad + \frac{\gamma}{3} R(z) \bigg[\delta^{\sigma}_{\alpha}\delta^{\delta}_{\beta} - g_{\alpha \beta}(z) g^{\sigma \delta}(z) \bigg] + \frac{\gamma}{3} \bigg[ g^{\sigma \delta}(z) R_{\alpha \beta}(z) + g_{\alpha \beta}(z) R^{\sigma \delta}(z) \bigg]  \nonumber\\ 
    &\qquad - \frac{3+ 2 \gamma}{6} \bigg[\delta^\delta_\beta \tensor{R}{_\alpha}{^\sigma}(z) +\delta^\delta_\alpha \tensor{R}{_\beta}{^\sigma}(z) \bigg]  + \frac12 \bigg[\tensor{R}{^\delta}{_\beta}{^\sigma}{_\alpha}(z) + \tensor{R}{^\delta}{_\alpha}{^\sigma}{_\beta}(z) \bigg]\nonumber\\&\qquad+\frac{4\Lambda}{d-2}\bigg[\delta_{\alpha}^{\sigma}\delta_{\beta}^{\delta}-\frac{1}{2}g_{\alpha\beta}(z)g^{\sigma\delta}(z)\bigg].
\end{align}
\end{example}

\begin{remark}
The Weyl symbols of the second-order differential operators given above can also be calculated using the star product. For example, consider the scalar wave operator $\hat{D} = \nabla^\mu \nabla_\mu = \hat{A}^\mu \hat{B}_\mu$ as a composition of two operators $ \hat{A}^\mu: \Gamma(T^*M) \to C^\infty(M)$ and $\hat{B}_\mu: C^\infty(M) \to \Gamma(T^*M)$ defined by
\begin{align}
     \hat{A}^\mu = \nabla^\mu\,\qquad\text{and}\qquad \hat{B}_\mu = \nabla_\mu .
\end{align}
By \cref{prop:polynomial_quantization}, the Weyl symbols of the operators $\hat{A}^\mu$ and $\hat{B}_\mu$ are given by
\begin{equation}
    a^\mu = \frac{i}{h}p^\mu, \qquad b_\mu = \frac{i}{h} p_\mu.
\end{equation}
Furthermore, as noted in \cref{Subsec:HorVertCD}, the horizontal and vertical covariant derivatives satisfy
\begin{align}
    \hnabla_\alpha p^\mu = 0 = \hnabla_\alpha p_\mu, \quad \vnabla^\alpha p^\mu = g^{\alpha \mu}, \quad \vnabla^\alpha p_\mu = \delta^\alpha_\mu, \quad \vnabla^\alpha \vnabla^\beta p^\mu = 0 = \vnabla^\alpha \vnabla^\beta p_\mu.
\end{align}
Then, using the star product expansion derived in \cref{Prp:StarProdExp}, we can calculate the symbol of the scalar wave operator $\hat{D}$ as
\begin{align}
    d = a \star b &= \sum_{n=0}^2 h^n (a \star b)_n + \mathcal{O}(h^3) \nonumber \\
    &= - \frac{1}{h^2} p^\mu p_\mu - \frac{3- 4\gamma}{12} R + \frac{1}{4} g^{\mu \alpha} F\indices{_\mu^\nu_{\alpha \beta}} \delta_\nu^\beta \nonumber \\
    &= - \frac{1}{h^2} p^\mu p_\mu + \frac{\gamma}{3}R,
\end{align}
which matches the result obtained in \cref{Exam:WaveOp}. In the above equation, the bundle curvature of $T^*M$ is $F\indices{_\mu^\nu_{\alpha \beta}} = -R\indices{^\nu_{\mu \alpha \beta}}$, the line bundle curvature $2$-forms $E_{\alpha \beta}$ and $G_{\alpha \beta}$ corresponding to the trivial scalar bundle vanish identically, and all terms of $\mathcal{O}(h^3)$ in the expansion contain horizontal derivatives or second-order vertical derivatives of $a$ and $b$ that vanish. 

As a second example, we consider the spinor wave operator $\hat{C}$ introduced in \cref{eq:def_C} and we show that its Weyl symbol $c(z,p)$ given in \cref{eq:symbol_spin_wave} can be recovered by using the star product. Using the Lichnerowicz–Weitzenböck formula \eqref{eq:LW_formula}, we have
\begin{equation}
    \hat{C} = - \hat{D}_0 \hat{D}_0 + \frac{1}{4}R.
\end{equation}
Note that in this case $\hat{D}_0 : \Gamma(SM) \to \Gamma(SM)$, so all the bundles $E$, $F$, and $G$ in the definition of the star product are nontrivial and identified with $S M$. The Weyl symbol of $\hat{D}_0$ is 
\begin{equation}
    d_0(z,p) = \frac{i}{h} \boldsymbol{\gamma}^\mu(z) p_\mu.
\end{equation}
Thus, the only nonzero horizontal and vertical derivatives are $\vnabla^\alpha d_0 = \frac{i}{h}\boldsymbol{\gamma}^\alpha$, and using the star product expansion from \cref{Prp:StarProdExp}, we obtain
\begin{align}
    d_0 \star d_0 = - \frac{1}{h^2} \boldsymbol{\gamma}^\mu \boldsymbol{\gamma}^\nu p_\mu p_\nu - \frac{3-4\gamma}{12} R_{\alpha \beta} \boldsymbol{\gamma}^\alpha \boldsymbol{\gamma}^\beta + \frac{1}{8} \boldsymbol{\gamma}^\alpha \boldsymbol{\gamma}^\beta E_{\alpha \beta} + \frac{1}{4} \boldsymbol{\gamma}^\alpha F_{\alpha \beta} \boldsymbol{\gamma}^\beta + \frac{1}{8} G_{\alpha \beta} \boldsymbol{\gamma}^\alpha \boldsymbol{\gamma}^\beta.
\end{align}
Since the bundles $E$, $F$, and $G$ are identified with $S M$, the curvature $2$-forms are
\begin{equation}
    E_{\alpha \beta} = F_{\alpha \beta} = G_{\alpha \beta} = - \frac{1}{4} R_{\alpha \beta \mu \nu} \boldsymbol{\gamma}^\mu \boldsymbol{\gamma}^\nu,
\end{equation}
Using this expression, together with the anticommutation relation $\boldsymbol{\gamma}^{(\mu} \boldsymbol{\gamma}^{\nu)} = - g^{\mu \nu}$, we can write
\begin{align}
    d_0 \star d_0 &= \frac{1}{h^2} g^{\mu \nu} p_\mu p_\nu + \frac{3-4\gamma}{12} R - \frac{1}{32} R_{\alpha \beta \mu \nu} \left( \boldsymbol{\gamma}^\alpha \boldsymbol{\gamma}^\beta \boldsymbol{\gamma}^\mu \boldsymbol{\gamma}^\nu + 2 \boldsymbol{\gamma}^\alpha \boldsymbol{\gamma}^\mu \boldsymbol{\gamma}^\nu \boldsymbol{\gamma}^\beta + \boldsymbol{\gamma}^\mu \boldsymbol{\gamma}^\nu \boldsymbol{\gamma}^\alpha \boldsymbol{\gamma}^\beta \right) \nonumber\\
    &= \frac{1}{h^2} g^{\mu \nu} p_\mu p_\nu + \frac{3-4\gamma}{12} R - \frac{1}{16} R_{\alpha \beta \mu \nu} \boldsymbol{\gamma}^\alpha \left( \boldsymbol{\gamma}^\beta \boldsymbol{\gamma}^\mu \boldsymbol{\gamma}^\nu + \boldsymbol{\gamma}^\mu \boldsymbol{\gamma}^\nu \boldsymbol{\gamma}^\beta \right) \nonumber\\
    &= \frac{1}{h^2} g^{\mu \nu} p_\mu p_\nu + \frac{3-4\gamma}{12} R - \frac{1}{16} R_{\alpha \beta \mu \nu} \boldsymbol{\gamma}^\alpha \left( 2 \boldsymbol{\gamma}^{[\beta} \boldsymbol{\gamma}^\mu \boldsymbol{\gamma}^{\nu]} -2 g^{\mu \nu} \boldsymbol{\gamma}^\beta \right) \nonumber\\
    &= \frac{1}{h^2} g^{\mu \nu} p_\mu p_\nu + \frac{3-4\gamma}{12} R.
\end{align}
Thus, we obtain
\begin{equation}
    c = -d_0 \star d_0 + \frac{1}{4}R = -\frac{1}{h^2} g^{\mu \nu} p_\mu p_\nu + \frac{\gamma}{3} R,
\end{equation}
which is consistent with \cref{eq:symbol_spin_wave}.

Similar calculations can also be performed to obtain the Weyl symbols for the Maxwell and linearized Yang-Mills equations, as well as for linearized gravity.
\end{remark}

\appendix

\section{Details on the expansion of the star product}\label{Appendix:StarProd}
This appendix provides additional details on the proof of Proposition~\ref{Prp:StarProdExp}, which concerns the expansion of the star product. Specifically, \cref{Subsec:Exp2} recalls the expansions of the geometric factor in \cref{eq:GeomFact}, the oscillatory exponential, as well as the vectors in \cref{eq:Vectors}. These form the geodesic triangle in \cref{Fig:Triangle}, as derived in \cite{dls_weyl}. The expansion of the holonomy is derived in~\cref{Appendix:Holonomy}.

\subsection{Expansion of the geometric factor, exponential term and vectors}
\label{Subsec:Exp2}

The asymptotic expansions of the oscillatory exponential $\mathrm{exp}(\frac{2i}{h}p \cdot (w+u_{1}-u_{2}))$ and the geometric factor $\Lambda(z,u_{1},u_{2})$ in the case $\gamma=\frac{1}{2}$ have been given in \cite[Sec.~4]{dls_weyl}. The expansion of $\Lambda$ for arbitrary values of $\gamma$ can be obtained in the same way, simply by multiplying the expansions of the biscalar $\zeta(x, y) = \ln \Delta^{1/2}(x,y)$ in \cite[Sec.~4.4]{dls_weyl} with the appropriate factors of $\gamma$. The results are as follows:
\begin{subequations}\label{Exp:GeomFactExp}
\begin{align}
    \Lambda (z,u_{1},u_{2}) &= \exp \bigg\{
     \tfrac{3 - 4\gamma}{3} R_{\alpha_1 \alpha_2} u_1^{\alpha_1} u_2^{\alpha_2} \nonumber \\ 
    &\qquad\qquad+ \tfrac{3 - 4 \gamma}{6} R_{\alpha_1 \alpha_2 ; \alpha_3} \left( u_1^{\alpha_1} u_1^{\alpha_2} u_2^{\alpha_3} + u_2^{\alpha_1} u_2^{\alpha_2} u_1^{\alpha_3} \right)
    + \dotsb \bigg\}, \\
    e^{\frac{2 i p \cdot (w + u_1 - u_2)}{h}} &= \exp \bigg\{ \frac{i p_\mu}{h} \Big[ \tfrac{1}{3} R\indices{^\mu_{\alpha_1\alpha_2\alpha_3}} \left( u_1^{\alpha_1} u_2^{\alpha_2} u_1^{\alpha_3} - u_2^{\alpha_1} u_1^{\alpha_2} u_2^{\alpha_3} \right)  \nonumber \\
    &\qquad\qquad+ \tfrac{1}{3} R\indices{^\mu_{\alpha_1\alpha_2\alpha_3;\alpha_4}} \left( u_1^{\alpha_1} u_2^{\alpha_2} u_1^{\alpha_3} u_1^{\alpha_4} - u_2^{\alpha_1} u_1^{\alpha_2} u_2^{\alpha_3} u_2^{\alpha_4} \right) + \dotsb \Big] \bigg\}.
\end{align}
\end{subequations}

The expansions of the vectors $v_{1},v_{2},w,\widetilde{w}\in T_{z}M$, associated with the geodesic triangles shown in \cref{Fig:Triangle} and defined in \cref{eq:Vectors}, in terms of $u_{1},u_{2}\in T_{z}M$ are derived in~\cite[Sec.~4.5]{dls_weyl}. Up to fourth order, they are given by:
\begin{subequations}\label{Exp:Vectors}
\begin{align}
    v_1^\mu &= u_1^\mu + \tfrac12 R\indices{^\mu_{\alpha_1\alpha_2\alpha_3}} u_2^{\alpha_1} u_1^{\alpha_2} u_2^{\alpha_3} + \tfrac{1}{24} R\indices{^\mu_{\alpha_1\alpha_2\alpha_3;\alpha_4}} \big( 5 u_2^{\alpha_1} u_1^{\alpha_2} u_2^{\alpha_3} u_1^{\alpha_4} - u_1^{\alpha_1} u_2^{\alpha_2} u_1^{\alpha_3} u_2^{\alpha_4} \nonumber\\
    &\qquad\qquad -2 u_1^{\alpha_1} u_2^{\alpha_2} u_1^{\alpha_3} u_1^{\alpha_4} + 2 u_2^{\alpha_1} u_1^{\alpha_2} u_2^{\alpha_3} u_2^{\alpha_4} \big) + \dotsb ,\\
    v_2^\mu &= u_2^\mu + \tfrac12 R\indices{^\mu_{\alpha_1\alpha_2\alpha_3}} u_1^{\alpha_1} u_2^{\alpha_2} u_1^{\alpha_3} + \tfrac{1}{24} R\indices{^\mu_{\alpha_1\alpha_2\alpha_3;\alpha_4}} \big( 5 u_1^{\alpha_1} u_2^{\alpha_2} u_1^{\alpha_3} u_2^{\alpha_4} - u_2^{\alpha_1} u_1^{\alpha_2} u_2^{\alpha_3} u_1^{\alpha_4} \nonumber\\
    &\qquad\qquad -2 u_2^{\alpha_1} u_1^{\alpha_2} u_2^{\alpha_3} u_2^{\alpha_4} + 2 u_1^{\alpha_1} u_2^{\alpha_2} u_1^{\alpha_3} u_1^{\alpha_4} \big) + \dotsb ,\\
    w^\mu & = u_2^\mu - u_1^\mu + \tfrac{1}{6} R\indices{^\mu_{\alpha_1\alpha_2\alpha_3}} \big( u_1^{\alpha_1} u_2^{\alpha_2} u_1^{\alpha_3} - u_2^{\alpha_1} u_1^{\alpha_2} u_2^{\alpha_3} \big) \nonumber \\
    &\qquad\qquad + \tfrac{1}{6} R\indices{^\mu_{\alpha_1\alpha_2\alpha_3;\alpha_4}} \big( u_1^{\alpha_1} u_2^{\alpha_2} u_1^{\alpha_3} u_1^{\alpha_4} - u_2^{\alpha_1} u_1^{\alpha_2} u_2^{\alpha_3} u_2^{\alpha_4} \big) + \dotsb ,\\
    \tilde{w}^\mu &= u_1^\mu + u_2^\mu + \tfrac{1}{6} R\indices{^\mu_{\alpha_1\alpha_2\alpha_3}} \big( u_1^{\alpha_1} u_2^{\alpha_2} u_1^{\alpha_3} + u_2^{\alpha_1} u_1^{\alpha_2} u_2^{\alpha_3} \big)+ \dotsb .
\end{align}
\end{subequations}

\subsection{Expansion of the holonomy}
\label{Appendix:Holonomy}

In this section, we analyze the holonomy terms appearing in the star product~\eqref{eq:star2} and show how their derivatives can be calculated. We focus on $\mathbb{H}_z(\nabla^{\pi^*F})$, since the expansion of the other holonomy terms follows from the same arguments. The term $\mathbb{H}_z(\nabla^{\pi^*F})$ represents the holonomy of the connection $\nabla^{\pi^*F}$ along the geodesic quadrilateral $z \mapsto z+v_2 \mapsto z+\tilde{w} \mapsto z+v_1 \mapsto z$, and acts as a fiber map $\mathbb{H}_z(\nabla^{\pi^*F}): \pi^*F_z \to \pi^*F_z$.

First, note that the holonomy is defined as parallel transport along geodesics in the base manifold $M$. A geodesic in $M$ with tangent vector $\dot{\gamma}$ is represented in $T^*M$ by a curve whose tangent vector is a Hamiltonian vector field. However, if $\dot{\gamma}$ is geodesic with respect to the Levi-Civita connection, then its horizontal lift $\dot{\gamma}^h$ to $T^*M$ is the corresponding Hamiltonian vector field. Thus, in the case of $\mathbb{H}_z(\nabla^{\pi^*F})$, parallel transport should be understood along the horizontal lifts of the vectors that define the geodesic quadrilateral. 

Furthermore, given a local frame $f\indices{_A}$ of $F$, we can define a local frame $(f_A' = f\indices{_A} \circ \pi)$ for the pullback bundle $\pi^*F$. Then we have
\begin{equation}
    (\nabla^{\pi^*F})_{\dot{\gamma}^h} f_A' = (\pi^*\omega\indices{^B_A})(\dot{\gamma}^h) f_B' = \omega\indices{^B_A}(d\pi(\dot{\gamma}^h)) f_B' = \omega\indices{^B_A}(\dot{\gamma}) f_B'.
\end{equation}
Thus, parallel transport with respect to $\nabla^{\pi^*F}$ along $\dot{\gamma}^h$ is equivalent to parallel transport with respect to $\nabla^F$ along $\dot{\gamma}$, and in our case we can write $\mathbb{H}_z(\nabla^{\pi^*F}) = \mathbb{H}_z(\nabla^F)$.

We proceed with the expansion of $\mathbb{H}_z(\nabla^F)$ in terms of the vectors $v_1$, $v_2$, and $\tilde{w}$ following the same steps as in \cite{Vines_holonomy}. First, parallel transport along the geodesic quadrilateral can be split into parallel transport along two geodesic triangles:
\begin{equation}\label{eq:GeodTriangHol}
\begin{aligned}
    \mathbb{H}_z(\nabla^{\pi^*F}) = \mathbb{H}_z(\nabla^F) &= \J{z}{z+v_1} \cdot \J{z+v_1}{z+\tilde{w}} \cdot \J{z+\tilde{w}}{z+v_2} \cdot \J{z+v_2}{z}\\
    &= \underbrace{\left( \J{z}{z+v_1} \cdot \J{z+v_1}{z+\tilde{w}} \cdot \J{z+\tilde{w}}{z} \right)}_{=:\mathbb{H}_1} \cdot \underbrace{\left( \J{z}{z+\tilde{w}} \cdot \J{z+\tilde{w}}{z+v_2} \cdot \J{z+v_2}{z} \right)}_{=:\mathbb{H}_2} .
\end{aligned}
\end{equation}
The holonomy along the two geodesic triangles can now be treated as in \cite{Vines_holonomy}. Looking at the first triangle $z\mapsto z+\tilde{w}\mapsto z+v_{1}\mapsto z$, we introduce two affine parameters along the geodesics $z'(\tau) = z + \tau \tilde{w}$ and $z''(\sigma) = z + \sigma v_1$, and define
\begin{equation}
    \mathbb H_1(\tau,\sigma) := \J{z}{z''(\sigma)} \cdot \J{z''(\sigma)}{z'(\tau)} \cdot \J{z'(\tau)}{z}.
\end{equation}
Then, we have $\mathbb H_1=\mathbb H_1(1,1)$ and
\begin{equation}
    \mathbb H_1(\tau,\sigma) = \sum_{m,n=0}^{\infty} \frac{\tau^m\sigma^n}{m!\,n!}\mathbb H_{(m,n)} ,
\end{equation}
where $\mathbb{H}_{(m,n)}\colon F_z \to F_z$ are constant matrices. The Levi-Civita covariant derivative of tensor fields, together with the covariant bundle derivative $\nabla^F$ (as well as the covariant derivative on the dual bundle) can be pulled back to define covariant derivatives along the geodesics $z'(\tau)$ and $z''(\sigma)$. We will denote these covariant derivatives along these geodesics by $\frac{D}{D \tau}$ and $\frac{D}{D \sigma}$. By the definition of parallel transport operators and geodesics, we have
\begin{subequations}
\begin{align}
\begin{split}\label{eq:HolFirstDer}
    \frac{D}{D \tau} \J{z}{z+\tau \tilde{w}} &= 0 = \frac{D}{D \sigma} \J{z+ \sigma v_1}{z},
\end{split}\\
\begin{split}
    \frac{D}{D \tau} \dot{z}'(\tau) &= 0 = \frac{D}{D \sigma} \dot{z}''(\sigma)  .
\end{split} 
\end{align}
\end{subequations}
Using these properties, the terms in the expansion of $\mathbb{H}_1$ can be obtained by taking multiple covariant derivatives along the geodesics and then taking the coincidence limit by setting $\tau = \sigma = 0$. We obtain $\mathbb{H}_{(0,0)} = \mathbb{I}$, and as a consequence of \cref{eq:HolFirstDer}, $\mathbb{H}_{(1,0)}=0=\mathbb{H}_{(0,1)}$. For higher-order terms, we have
\begin{equation}\label{eq:ExpHol}
    \mathbb{H}_{(m,n)} = \left[ \left( \frac{D}{D \tau} \right)^m \left( \frac{D}{D \sigma} \right)^n \J{z+ \sigma v_1}{z+\tau \tilde{w}} \right] \Bigg|_{\tau = \sigma = 0}.
\end{equation}

To compute the higher-order terms, we choose a local frame $f_{A}$ of $F$ and denote the parallel transport operator from $x\in M$ to $x^{\prime}:=x+u$ for $u\in T_{x}M$ along the geodesic in coordinates by 
\begin{align}
    \J{x^{\prime}}{x}\:F_{x}\to F_{x^{\prime}}, \qquad (\J{x^{\prime}}{x}\Psi)^{A^{\prime}}(x^{\prime})=:\tensor{g}{^{A^{\prime}}_B}(x^{\prime},x)\Psi^{B}(x).
\end{align}
To simplify notation, we will generally omit writing the explicit dependence on the base point, as the same information is already encoded in the index notation. 

Now, for the computation of the coincidence limits~\eqref{eq:ExpHol}, we shall adopt the following index notation:
\begin{table}[H]
\centering
\begin{tabular}{c|c|c}
\hline
 \cellcolor{lightgray}\textbf{base point} & \cellcolor{lightgray}\textbf{bundle indices} & \cellcolor{lightgray}\textbf{spacetime indices} \\
\hline
$z$ & $A,B,\dots$ & $\alpha,\beta,\dots$ \\
\hline
$z^{\prime}:=z+\tau \tilde{w}$ & $A^{\prime},B^{\prime},\dots$ & $\alpha^{\prime},\beta^{\prime},\dots$ \\
\hline
$z^{\prime\prime}:=z+\sigma v_{1}$ & $A^{\prime\prime},B^{\prime\prime},\dots$ & $\alpha^{\prime\prime},\beta^{\prime\prime},\dots$ 
\end{tabular}
\end{table}

\noindent With this notation, the coincidence limit~\eqref{eq:ExpHol} can be written as 
\begin{equation}\label{eq:ExpHolA}
    \tensor{(\mathbb{H}_{(m,n)})}{^A_B} =\widetilde{w}^{\alpha_{1}}\dots \widetilde{w}^{\alpha_{m}}v_{1}^{\beta_{1}}\dots v_{1}^{\beta_{n}}[\tensor{g}{^A_{B^\prime}_{;\alpha^\prime_{1}}_{\dots}_{\alpha^\prime_{m}}_{\beta_{1}}_{\dots}_{\beta_{n}}}]_{z^{\prime}\to z},
\end{equation}
where the semicolon denotes the covariant derivatives with respect to the connection $\nabla^{F}$. Note that in this expression, we first took the coincidence limit $z^{\prime\prime}\to z$, which corresponds to $\sigma\to 0$, leaving us with the subsequent limit $z^{\prime}\to z$, which corresponds to $\tau \to 0$. As discussed in more detail in \cite[Sec.~3.1]{Vines_holonomy}, the final result remains independent of the order in which these limits are taken.

Using equation~\eqref{eq:ExpHolA}, we are therefore left to compute the following coincidence limits for the expansion of $\mathbb{H}_{1}$ up to the third order in $\sigma$ and $\tau$: 
\begin{align}\label{eq:ExpHolz}
    \tensor{(\mathbb{H}_{1})}{^A_B} = \tensor{\delta}{^A_B} &+\frac{\tau^{2}}{2}\tilde{w}^{\alpha_{1}} \tilde{w}^{\alpha_{2}}[\tensor{g}{^A_{B^\prime}_{;\alpha^\prime_{1}\alpha^\prime_{2}}}]+\frac{\sigma^{2}}{2}v_{1}^{\beta_{1}} v_{1}^{\beta_{2}}[\tensor{g}{^A_{B^\prime}_{;\beta_{1}\beta_{2}}}]+\tau\sigma \tilde{w}^{\alpha_{1}}v_{1}^{\beta_{1}}[\tensor{g}{^A_{B^\prime}_{;\alpha^\prime_{1}}_{\beta_{1}}}] \nonumber\\
    &+\frac{\tau^{3}}{6}\tilde{w}^{\alpha_{1}} \tilde{w}^{\alpha_{2}}\tilde{w}^{\alpha_{3}}[\tensor{g}{^A_{B^\prime}_{;\alpha^\prime_{1}}_{\alpha^\prime_{2}}_{\alpha^\prime_{3}}}]+\frac{\sigma^{3}}{6}v_{1}^{\beta_{1}} v_{1}^{\beta_{2}}v_{1}^{\beta_{3}}[\tensor{g}{^A_{B^\prime}_{;\beta_{1}}_{\beta_{2}}_{\beta_{3}}}] \nonumber\\
    &+\frac{\tau^{2}\sigma}{2}\tilde{w}^{\alpha_{1}} \tilde{w}^{\alpha_{2}}v_{1}^{\beta_{1}}[\tensor{g}{^A_{B^\prime}_{;\alpha^\prime_{1}}_{\alpha^\prime_{2}}_{\beta_{1}}}]+\frac{\tau\sigma^{2}}{2}\tilde{w}^{\alpha_{1}} v_{1}^{\beta_{1}}v_{1}^{\beta_{2}}[\tensor{g}{^A_{B^\prime}_{;\alpha^{\prime}_{1}}_{\beta_{1}}_{\beta_{2}}}]+\dots
\end{align}
where we already used the fact that the first order vanishes, see~\cref{eq:HolFirstDer}, and where all the coincidence limits are taken in the sense $z^{\prime}\to z$, i.e.~$[\dots]:=[\dots]_{z^{\prime}\to z}$.

Now, let us denote by $\sigma_{\alpha}:=\sigma_{;\alpha}$ and $\sigma_{\alpha^{\prime}}:=\sigma_{;\alpha^{\prime}}$ the covariant derivatives of Synge's world function $\sigma=\sigma(z,z^{\prime})$. By definition, the vector $\sigma_{\alpha}$ is proportional to the tangent vector along the geodesic connecting $z$ with $z^{\prime}$, which implies that $\tensor{g}{^A_{B^{\prime}}_{;\alpha}}\sigma^{\alpha}=0$. Taking the covariant derivative twice in the $z$-variable yields
\begin{align}\label{eq:DerPT}
    0=\tensor{g}{^A_{B^{\prime}}_{;\alpha\beta\gamma}}\sigma^{\alpha}+\tensor{g}{^A_{B^{\prime}}_{;\alpha\beta}}\tensor{\sigma}{^\alpha_\gamma}+ \tensor{g}{^A_{B^{\prime}}_{;\alpha\gamma}}\tensor{\sigma}{^{\alpha}_\beta}+\tensor{g}{^A_{B^{\prime}}_{;\alpha}}\tensor{\sigma}{^\alpha_{\beta\gamma}}.
\end{align}
Now, at the coincidence limit $z^{\prime}\to z$, we have $[\sigma^{\alpha}]_{z^{\prime}\to z}=[\tensor{\sigma}{^\alpha_{\beta\gamma}}]_{z^{\prime}\to z}=0$ as well as $[\tensor{\sigma}{^\alpha_\beta}]=\delta^{\alpha}_{\beta}$ \cite[Sec.~4.1]{Poisson}, which implies
\begin{align}
    0=[\tensor{g}{^A_{B^{\prime}}_{;\alpha\beta}}]_{z^{\prime}\to z}+[\tensor{g}{^A_{B^{\prime}}_{;\beta\alpha}}]_{z^{\prime}\to z}\qquad\Rightarrow\qquad [\tensor{g}{^A_{B^{\prime}}_{;\alpha\beta}}]_{z^{\prime}\to z}=-\frac{1}{2}\tensor{F}{^A_{B\alpha\beta}},
\end{align}
where $F\in\Omega^{2}(M,\mathrm{End}(F))$ denotes the curvature of the bundle $(F,\nabla^{F})$. Similar expressions for the primed derivatives of the parallel transport operator, as well as for mixed cases, can be derived by using \textit{Synge's rule} (see e.g.~\cite[Sec.~4.2]{Poisson}). The results obtained are
\begin{subequations}\label{eq:2ndOrderHol}
    \begin{align}
        &[\tensor{g}{^A_{B^{\prime}}_{;\alpha\beta}}]_{z^{\prime}\to z}=-\frac{1}{2}\tensor{F}{^A_{B\alpha\beta}},\qquad  [\tensor{g}{^A_{B^{\prime}}_{;{\alpha^{\prime}\beta^{\prime}}}}]_{z^{\prime}\to z}=\frac{1}{2}\tensor{F}{^A_{B\alpha\beta}},\\
        &[\tensor{g}{^A_{B^{\prime}}_{;{\alpha^{\prime}\beta}}}]_{z^{\prime}\to z}=-\frac{1}{2}\tensor{F}{^A_{B\alpha\beta}},\qquad [\tensor{g}{^A_{B^{\prime}}_{;{\alpha\beta^{\prime}}}}]_{z^{\prime}\to z}=\frac{1}{2}\tensor{F}{^A_{B\alpha\beta}}.
    \end{align}
\end{subequations}
Higher orders can be obtained by taking further covariant derivatives of $\tensor{g}{^A_{B^{\prime}}_{;\alpha}}\sigma^{\alpha}=0$. By differentiating \cref{eq:DerPT} along $\partial_{\delta}$ and using $[\sigma^{\alpha}]_{z^{\prime}\to z}=[\tensor{\sigma}{^\alpha_{\beta\gamma}}]_{z^{\prime}\to z}=0$, $[\tensor{\sigma}{^\alpha_\beta}]=\delta^{\alpha}_{\beta}$, and $[\tensor{g}{^A_B_{;\alpha}}]_{z^{\prime}\to z}=0$, we obtain
\begin{align}
    [\tensor{g}{^A_{B^{\prime}}_{;\delta\beta\gamma}}]_{z^{\prime}\to z}+[\tensor{g}{^A_{B^{\prime}}_{;\gamma\beta\delta}}]_{z^{\prime}\to z}+[\tensor{g}{^A_{B^{\prime}}_{;\beta\gamma\delta}}]_{z^{\prime}\to z}=0\,\Rightarrow\,\, [\tensor{g}{^A_{B^\prime}_{;\alpha\beta\gamma}}]_{z^{\prime}\to z}=-\frac{2}{3}\tensor{F}{^A_{B\alpha(\beta;\gamma)}},
\end{align}
where we recall that we use the convention in which the symmetrization of indices $(\dots)$ is defined with an appropriate normalization to be an idempotent operation, and where the covariant derivative in $\tensor{F}{^A_{B\alpha\beta;\gamma}}$ is the one acting on sections of $F\otimes F^{\ast}\otimes T^{\ast}M^{\otimes 2}$ induced by $\nabla^{F}$ and the Levi-Civita connection on $T^{\ast}M$. Using Synge's rule, \cref{eq:2ndOrderHol} and the Bianchi identity $d^{\nabla^{\mathrm{End}(F)}}F=0$, which in local coordinates translates to $\tensor{F}{^A_{B\alpha\beta;\gamma}}+\tensor{F}{^A_{B\beta\gamma;\alpha}}+\tensor{F}{^A_{B\gamma\alpha;\beta}}=0$, we obtain the other relevant coincidence limits for the third order:
\begin{subequations}
    \begin{align}
        [\tensor{g}{^A_{B^\prime}_{;\alpha\beta\gamma}}]_{z^{\prime}\to z}&=-\frac{2}{3}\tensor{F}{^A_{B\alpha(\beta;\gamma)}},\qquad [\tensor{g}{^A_{B^\prime}_{;\alpha^{\prime}\beta^{\prime}\gamma^{\prime}}}]_{z^{\prime}\to z}=\frac{2}{3}\tensor{F}{^A_{B\alpha(\beta;\gamma)}}, \\
        [\tensor{g}{^A_{B^\prime}_{;\alpha\beta\gamma^\prime}}]_{z^{\prime}\to z}&=-\frac{1}{3}\tensor{F}{^A_{B\gamma(\alpha;\beta)}},\qquad [\tensor{g}{^A_{B^\prime}_{;\alpha\beta^{\prime}\gamma^{\prime}}}]_{z^{\prime}\to z}=\frac{1}{3}\tensor{F}{^A_{B\alpha(\beta;\gamma)}}.
    \end{align}
\end{subequations}
Plugging the obtained coincidence limits into~\eqref{eq:ExpHolz}, we obtain the following expansion of the holonomy operator $\mathbb{H}_{1}$ up to the third order:
\begin{align}
     \tensor{(\mathbb{H}_{1})}{^A_B}=\tensor{\delta}{^A_B} + \frac{\tau\sigma}{2} \tensor{F}{^A_{B\alpha \beta}} v_{1}^{\alpha} \tilde{w}^{\beta} + \frac{\tau\sigma}{6}\tensor{F}{^A_{B\alpha\beta;\gamma}} v_{1}^{\alpha} \tilde{w}^{\beta} (\tau \tilde{w}^{\gamma}+\sigma v_{1}^{\gamma})+\dots
\end{align}
This equation reduces to the result in Ref. \cite[Eq.~(6)]{Vines_holonomy} when $F = T M$, $\nabla^F$ is the Levi-Civita connection, and therefore the curvature is given by the Riemann tensor as $\tensor{F}{^A_{B\alpha \beta}} = \tensor{R}{^\mu_{\nu \alpha \beta}}$. 

Following similar steps, we obtain an expansion of the holonomy $\mathbb{H}_{2}$ of the second geodesic triangle in~\eqref{eq:GeodTriangHol}. Combining both results and reabsorbing the parameters $\tau$ and $\sigma$ in the vectors, we obtain the following expansion of $\mathbb{H}_{z}(\nabla^{\pi^{\ast}F})$:
\begin{align}
     \tensor{\mathbb{H}_{z}(\nabla^{\pi^{\ast}F})}{^A_B} = \tensor{\delta}{^A_B}& + \frac{1}{2} \tensor{F}{^A_{B\alpha \beta}} v_{1}^{\alpha} \tilde{w}^{\beta} + \frac{1}{6}\tensor{F}{^A_{B\alpha\beta;\gamma}} v_{1}^{\alpha} \tilde{w}^{\beta} ( \widetilde{w}^{\gamma}+v_{1}^{\gamma}) \nonumber\\
     &+\frac{1}{2} \tensor{F}{^A_{B\alpha\beta}} \tilde{w}^{\alpha} v_{2}^{\beta} + \frac{1}{6}\tensor{F}{^A_{B\alpha\beta;\gamma}} \tilde{w}^{\alpha} v_{2}^{\beta} ( v_{2}^{\gamma}+\tilde{w}^{\gamma})+\dots
\end{align}
As a last step, we use the expansions of the vectors $v_{1},v_{2}$ and $\widetilde{w}$ in terms of $u_{1}$ and $u_{2}$, as stated in Equation~\eqref{Exp:Vectors}, which yields the following expansion up to third order: 
\begin{align}\label{eq:ExpHolonomy}
     \tensor{\mathbb{H}_{z}(\nabla^{\pi^{\ast}F})}{^A_B}=\tensor{\delta}{^A_B}&+\tensor{F}{^A_{B\alpha \beta}} u_{1}^{\alpha } u_{2}^{\beta} + \frac{1}{2}\tensor{F}{^A_{B\alpha\beta;\gamma}}u_{1}^{\alpha}u_{2}^{\beta}(u_{1}+u_{2})^{\gamma}+\dots .
\end{align}

Using the same approach, we can also expand the remaining holonomy terms  $\mathbb{H}_z(\nabla^{\pi^*E})$ and  $\mathbb{H}_z(\nabla^{\pi^*G})$ as
\begin{subequations}\label{eq:ExpHolonomy2}
\begin{align}
    \tensor{\mathbb{H}_{z}(\nabla^{\pi^{\ast}E})}{^A_B} &= \tensor{\delta}{^A_B} + \frac{1}{2} \tensor{E}{^A_{B\alpha \beta}} v_{2}^{\alpha } w^{\beta} + \frac{1}{6}\tensor{E}{^A_{B\alpha\beta;\gamma}} v_{2}^{\alpha} w^{\beta}(v_2 + w)^{\gamma}+\dots \nonumber\\
    &= \tensor{\delta}{^A_B} + \frac{1}{2} \tensor{E}{^A_{B\alpha \beta}} u_1^{\alpha } u_2^{\beta} + \frac{1}{6}\tensor{E}{^A_{B\alpha\beta;\gamma}} u_1^{\alpha} u_2^{\beta}(2u_2 - u_1)^{\gamma}+\dots,\\
    \tensor{\mathbb{H}_{z}(\nabla^{\pi^{\ast}G})}{^A_B} &= \tensor{\delta}{^A_B} + \frac{1}{2} \tensor{G}{^A_{B\alpha \beta}} v_{1}^{\alpha } w^{\beta} + \frac{1}{6}\tensor{G}{^A_{B\alpha\beta;\gamma}} v_{1}^{\alpha} w^{\beta}(v_1 - w)^{\gamma}+\dots \nonumber\\
    &= \tensor{\delta}{^A_B} + \frac{1}{2} \tensor{G}{^A_{B\alpha \beta}} u_{1}^{\alpha } u_2^{\beta} + \frac{1}{6}\tensor{G}{^A_{B\alpha\beta;\gamma}} u_{1}^{\alpha} u_2^{\beta}(2 u_1 - u_2)^{\gamma}+\dots,
\end{align}
\end{subequations}
where $\tensor{E}{^A_{B\alpha \beta}}$ and $\tensor{G}{^A_{B\alpha \beta}}$ are the curvature tensors of the connections $\nabla^E$ and $\nabla^G$, respectively.

\printbibliography

\end{document}